 \newtheorem{thm}{Theorem}[section]
 \newtheorem{cor}[thm]{Corollary}
 \newtheorem{lemma}[thm]{Lemma}
 \newtheorem{prop}[thm]{Proposition}
 \newtheorem{conjecture}[thm]{Conjecture}
 \theoremstyle{definition}
 \newtheorem{defn}[thm]{Definition}
 \theoremstyle{remark}
 \newtheorem{remark}[thm]{Remark}
 \newtheorem*{ex}{Example}
 \numberwithin{equation}{section}
\numberwithin{figure}{section}
\newcommand{\be}{\begin{equation}}
\newcommand{\ee}{\end{equation}}
\newcommand{\bea}{\begin{eqnarray}}
\newcommand{\eea}{\end{eqnarray}}
\newcommand{\C}{\mathbb{C}}
\newcommand{\R}{\mathbb{R}}
\newcommand{\Z}{\mathbb{Z}}
\newcommand{\E}{{\mathbb E}}
\newcommand{\N}{\mathbb{N}}
\DeclareMathOperator{\tr}{tr}
\DeclareMathOperator{\supp}{supp}
\def\idty{{\mathchoice {\mathrm{1\mskip-4mu l}} {\mathrm{1\mskip-4mu l}} %
{\mathrm{1\mskip-4.5mu l}} {\mathrm{1\mskip-5mu l}}}}
\newcommand{\Tr}{\operatorname{{\rm Tr}}}
\numberwithin{equation}{section}
\begin{document}

\title{Entanglement bounds in the XXZ quantum spin chain}

\author[H. Abdul-Rahman]{H.\ Abdul-Rahman$^1$}
\address{$^1$ Department of Mathematics\\
University of Arizona, Tucson, AZ 85721, USA}
\email{houssam@math.arizona.edu}

\author[C. Fischbacher]{C.\ Fischbacher$^2$}
\address{$^2$ Department of Mathematics\\
University of California, Irvine\\
Irvine, CA, 92697, USA}
\email{fischbac@uci.edu}

\author[G. Stolz]{G.\ Stolz$^3$}
\address{$^3$ Department of Mathematics\\
University of Alabama at Birmingham\\
Birmingham, AL, 35294, USA}
\email{stolz@uab.edu}

\date{}

%
\begin{abstract}

We consider the XXZ spin chain, characterized by an anisotropy parameter $\Delta>1$, and normalized such that the ground state energy is $0$ and the ground state given by the all-spins-up state. The energies $E_K = K(1-1/\Delta)$, $K=1,2,\ldots$, can be interpreted as $K$-cluster break-up thresholds for down-spin configurations. We show that, for every $K$, the bipartite entanglement of all states with energy below the $(K+1)$-cluster break-up satisfies a logarithmically corrected (or enhanced) area law. This generalizes a result by Beaud and Warzel, who considered energies in the droplet spectrum (i.e., below the 2-cluster break-up).

For general $K$, we find an upper logarithmic bound with pre-factor $2K-1$. We show that this constant is optimal in the Ising limit $\Delta=\infty$. Beaud and Warzel also showed that after introducing a random field and disorder averaging the enhanced area law becomes a strict area law, again for states in the droplet regime. For the Ising limit with random field, we show that this result does not extend beyond the droplet regime. Instead, we find states with energies of an arbitrarily small amount above the $K$-cluster break-up whose entanglement satisfies a logarithmically growing lower bound with pre-factor $K-1$.

\end{abstract}

\maketitle


\section{Introduction}

\subsection{Background}

The bipartite entanglement of quantum many-body systems has been the topic of intensive research, in particular in the physics literature, see the surveys \cite{Amicoetal,Eisertetal} and the articles in the special issue \cite{Calabreseetal}. While the entanglement of generic states satisfies a volume law with respect to the size of the chosen subsystem, it has been found that ground states in some of the standard examples have much lower entanglement. Hastings' celebrated result \cite{Hastings} says that gapped ground states in a large class of one-dimensional spin chains satisfy an area law, i.e., the entanglement doesn't grow with the subsystem. For the XY chain, where an explicit analysis is possible, either an area law (in non-critical cases) or a logarithmically corrected (or enhanced) area law (in some critical cases, in particular the isotropic XX chain) is found \cite{Vidaletal, JinKorepin}.

It has been argued that the validity of an area law for states in an energy regime extending beyond the ground state can be considered as a manifestation of many-body localization (MBL), e.g.\ \cite{BauerNayak,BrandaoHorodecky}. In this view, states with a log-corrected area law might still be considered as many-body localized, even if in slightly weaker form. Many of the obtained results on effects of disorder on entanglement for specific models are numerical (e.g.\ \cite{Bardarsonetal, Serbynetal} for a study of the growth of the dynamic entanglement in the XXZ chain) or use non-rigorous methods from quantum field theory (reviewed in \cite{CalabreseCardy}).

So far, mathematically rigorous results are mostly restricted to simple models whose study can be reduced to effective one particle Hamiltonians, e.g.\ disordered quantum oscillator systems (e.g.\ \cite{AR18, ARS, Beaudetal} and references therein) or the random XY spin chain (see the survey \cite{ARNSS}).

Interesting mathematical progress in understanding the effect of disorder on quantum many-body systems has recently been made for the XXZ spin chain in the Ising phase \cite{BW17, BW18, EKS, EKS2} (characterized by values $1<\Delta<\infty$ of the anisotropy parameter, see \eqref{pairinteract} below). These works show that this model shows various MBL characteristics in the droplet regime above its gapped ground state energy, a regime studied before in \cite{NS, NSS2007, FS14}. In particular, the work by Beaud and Warzel \cite{BW18} shows that for the deterministic version of the model {\it all} states with energies in the droplet spectrum (not just the eigenstates) satisfy a $\log$-corrected area law, while the introduction of disorder into the model eliminates the $\log$-term, leading to a strict area law in the disorder average.

In the work presented here we study properties of the XXZ chain in the Ising phase for states with energies above the droplet spectrum. This is characterized by a series of threshold energies $E_K$, linearly growing in $K\in \N$, above which (and below $E_{K+1}$) the droplets can break up into up to $K$-clusters (so that the droplet spectrum corresponds to $[E_1,E_2))$. We consider the XXZ model on the chain $[1,L]$ for large $L$ and the von Neumann entanglement entropy $\mathcal{E}$ of states with respect to the decomposition into the subsystems $[1,\ell]$ and $[\ell+1,L]$.

Our main result shows that, for the deterministic model and states with energies up to $E_K$, the leading term of the entanglement satisfies $\mathcal{E}/ \log \ell \lesssim 2K-1$ in the limit $\ell \to \infty$ (see Theorem~\ref{thm:main} in the next section for a more precise statement). Thus, for each $K$, a $\log$-corrected area law holds uniformly for all states with energies up to $E_K$, with constant linearly growing in $K$ (and thus in the total energy of the state).

In addition, we include two illustrative results on the entanglement for the Ising limit $\Delta=\infty$ of the model: First, In Theorem~\ref{thm:detasymp} we show that for this limiting case the constant $2K-1$ is optimal. Second, after adding a disordered local field in the Ising limit, we find $\E(\max \mathcal{E}) \gtrsim (K-1)\log \ell$ for the maximal entanglement of states with energies below $E_K$ ($\E$ denoting the disorder average). Thus, for $K>1$ the introduction of disorder no longer produces a strict area law, at least not when considering {\it all} states in this energy regime.

This is not to say that an area law might follow also for the higher bands if one {\it only} considers eigenstates. In fact, this is what happens in the Ising limit and we believe that this is also true for finite anisotropy $\Delta$ (meaning that the $\log$-correction is due to the built-up of entanglement via multiple different eigenstates in a given band). If this is true remains an open question. Progress towards its answer might come from work in preparation by Elgart and Klein \cite{ElgartKlein}, who have announced results on MBL in the 2-cluster band.

\subsection{Model and Results} \label{sec:results}

We consider the XXZ spin chain in the Ising phase, given on the finite interval $\Lambda =[1,L] := \{1,\ldots,L\}$ by the Hamiltonian
\begin{equation}
H_\Lambda(V) = \sum_{j=1}^{L-1} h_{j,j+1} + \sum_{j=1}^L V_j \mathcal{N}_j + \beta (\mathcal{N}_1 + \mathcal{N}_L).
\end{equation}
Here
\begin{itemize}
\item We normalize the next-neighbor interaction as
\begin{equation} \label{pairinteract}
h_{j,j+1} = \frac{1}{4}(\idty-\sigma_j^Z \sigma_{j+1}^Z) - \frac{1}{4\Delta}(\sigma_j^X \sigma_{j+1}^X + \sigma_j^Y \sigma_{j+1}^Y),
\end{equation}
where $\sigma^X,\sigma^Y,\sigma^Z$ are the standard Pauli matrices and we assume $1<\Delta \le \infty$ (Ising phase).
\item For the exterior field, we assume $V\geq 0$, i.e., $V_j \ge 0$ for all $j$, where the down-spin projection $\mathcal{N}= \begin{pmatrix} 0 & 0 \\ 0 & 1 \end{pmatrix}$ takes the role of the local particle number operator.
\item For definiteness, we choose $\beta = \frac{1}{2}(1-\frac{1}{\Delta})$ in the boundary term, the smallest value at which suitable positivity properties hold (a {\it droplet boundary condition}, compare \cite{NS,BW17,EKS}).
\end{itemize}

In this normalization $E_0=0$ is the non-degenerate ground state energy with all-spins-up ground state. It is separated by a spectral gap $1-\frac{1}{\Delta}$ from the rest of the spectrum, which features a sequence of threshold energies
\begin{equation} \label{Ksplit}
E_K := K(1-\frac{1}{\Delta}), \quad K=1,2,\ldots,
\end{equation}
interpreted as $K$-cluster break-up energies in the following sense: For a subset $X\subseteq [1,L]$ let $\phi_X$ be the up-down-spin product state with down-spins at the positions given by $X$. If $X$ has at least $K$ connected components (i.e., $K$ down-spin droplets), then $\langle \phi_X, H_L(V) \phi_X \rangle \ge E_K$.

It is therefore that $[1-\frac{1}{\Delta}, 2(1-\frac{1}{\Delta})) \cap \sigma(H_L(V))$ is referred to as the {\it droplet spectrum} of $H_L(V)$. In this regime a number of many-body localization properties have recently been proven in \cite{BW17,BW18,EKS,EKS2} for the case that the $V_j$ are i.i.d.\ random variables. In particular, Beaud and Warzel \cite{BW18} have considered bounds on the bipartite entanglement of states in the droplet spectrum, both for deterministic and random field $V$. Our first main result is an extension of the deterministic result of \cite{BW18} to higher energies.

Let $\psi$ be a normalized state in ${\mathcal H}_{\Lambda} = \bigotimes_{j=1}^L \C^2$ and $\rho_\psi = |\psi\rangle \langle \psi|$ the corresponding rank one projection (pure state). For $1\le \ell < L$ we consider the bipartite decomposition ${\mathcal H}_{\Lambda} = {\mathcal H}_{\Lambda_0} \otimes {\mathcal H}_{\Lambda_0^c}$, where $\Lambda_0 = [1,\ell]$. Let $\rho_{\Lambda_0,\psi} = \Tr_{\Lambda_0^c} \rho_\psi$ be the reduced state on ${\mathcal H}_{\Lambda_0}$ and define the (bipartite) \emph{von Neumann entanglement entropy} of $\rho_\psi$ with respect to this decomposition as
\begin{equation} \label{entangle}
{\mathcal E}(\rho_\psi) = {\mathcal S}(\rho_{\Lambda_0,\psi}),
\end{equation}
where ${\mathcal S}(\rho_{\Lambda_0,\psi}) = - \Tr\,\rho_{\Lambda_0,\psi} \log \rho_{\Lambda_0,\psi}$ is the von Neumann entropy of the reduced state.

We will use the standard notation $\chi_M(H)$ for the spectral projection of a self-adjoint operator $H$ onto a set $M$.

For an arbitrary $K\in \N$, we will consider the entanglement of states whose energy is separated by a ``safety distance'' $\delta>0$ from the $(K+1)$-cluster break-up $E_{K+1}$:

\begin{thm} \label{thm:main}
For every $K\in\N$, every $\delta>0$ and every field $V\ge 0$ it holds that
\begin{equation} \label{eq:main}
\limsup_{\ell\to\infty} \limsup_{L\to\infty} \frac{\sup_{\psi} {\mathcal E}(\rho_\psi)}{\log \ell} \le 2K-1.
\end{equation}
Here the supremum is taken over all $\psi \in R(\chi_{[0,E_{K+1}-\delta]}(H_\Lambda(V)))$ with $\|\psi\|=1$.
\end{thm}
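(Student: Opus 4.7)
My plan is to reduce the entanglement bound to a Schmidt-rank count on the reduced density matrix $\rho_{\Lambda_0,\psi}$, using the cluster geometry of the down-spin configurations in the spectral support of $\psi$. The first and main step is a localization result: I would establish that $R(\chi_{[0,E_{K+1}-\delta]}(H_\Lambda(V)))$ is contained, up to a controllable correction, in the subspace $\mathcal{H}_{\le K}$ spanned by the product states $\phi_X$ with $k(X)\le K$ connected components. In the Ising limit $\Delta=\infty$, the inequality $\langle\phi_X,H_\Lambda\phi_X\rangle\ge E_{k(X)}$ is sharp and $\mathcal{H}_{\le K}$ is $H_\Lambda$-invariant, so the containment is exact. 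For $1<\Delta<\infty$ the XY hopping couples $\mathcal{H}_{\le K}$ to its orthogonal complement, and I expect to control this via a multi-cluster extension of the droplet localization of Beaud--Warzel \cite{BW18} (which treats $K=1$), using a resolvent or Schur-complement identity together with the safety gap $\delta$ to bound the weight of $\psi$ outside $\mathcal{H}_{\le K}$ by an amount that is negligible for the logarithmic asymptotic.

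Once this is in place, for $\psi=\sum_X c_X\phi_X$ supported on configurations with $k(X)\le K$, I consider the bipartite coefficient matrix $C=(c_{X_L\cup X_R})$ with $X_L\subseteq[1,\ell]$ and $X_R\subseteq[\ell+1,L]$, and partition its rows by the pair $(j,e)$ and its columns by $(j',e')$, where $j$ (resp.\ $j'$) is the number of clusters of $X_L$ (resp.\ $X_R$) and $e=\mathbf{1}(\ell\in X_L)$, $e'=\mathbf{1}(\ell+1\in X_R)$. Then $C$ is block diagonal under this partition, and a block is nonzero only when $j+j'-ee'\le K$. An elementary count gives $O(\ell^{2j-e})$ distinct left configurations with parameters $(j,e)$ and $O((L-\ell)^{2j'-e'})$ distinct right ones, so each block has rank at most the minimum of these two. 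The key observation is that the block $(j,e,j',e')=(K,0,0,0)$ has $O(\ell^{2K})$ left configurations but is paired with only the empty right configuration, contributing only rank $1$; symmetrically for $(0,0,K,0)$. After $L\to\infty$, the dominant nontrivial contribution comes from $(K,1,1,1)$, where $X_L$ has $K$ clusters (one containing $\ell$) and $X_R$ has a single cluster starting at $\ell+1$, together forming $K$ clusters in $X$ via one straddling the cut; this block contributes rank $O(\ell^{2K-1})$. All remaining blocks contribute at most $O(\ell^{2K-2})$.

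Summing over the finitely many block types gives $\mathrm{rank}(C)=O(\ell^{2K-1})$, and therefore $\mathcal{E}(\rho_\psi)\le\log\mathrm{rank}(\rho_{\Lambda_0,\psi})\le(2K-1)\log\ell+O(1)$. Dividing by $\log\ell$ and taking the iterated $\limsup$ then yields \eqref{eq:main}. The main obstacle in the entire argument is the cluster-localization step: the combinatorial counting is essentially elementary once the state is (approximately) in $\mathcal{H}_{\le K}$, but showing that low-energy states really are (approximately) $\le K$-cluster is a non-trivial statement for $\Delta<\infty$, requiring control of the XY-induced leakage into the $>K$-cluster subspace uniformly in $L$ and $V\ge 0$. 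The analogous bound for $K=1$ underpins Beaud--Warzel's droplet analysis; its extension to $K\ge 2$ is essentially what the theorem asserts and is where I expect most of the technical work to concentrate.
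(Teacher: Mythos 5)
Your combinatorial count of the ``straddling'' configurations is correct and is essentially what the paper does in the Ising limit (Lemmas~\ref{lem:clustercount} and \ref{lemdet}): the $\ell^{2K-1}$ comes from the $K$-cluster subsets of $[1,\ell]$ containing $\ell$, paired with a single right cluster attached at $\ell+1$. The genuine gap is in how you propose to use the localization step for $\Delta<\infty$. A rank bound of the form $\mathcal{E}(\rho_\psi)\le\log\operatorname{rank}(\rho_{\Lambda_0,\psi})$ is only available if $\psi$ lies \emph{exactly} in $\mathcal{H}_{\le K}$; rank is not stable under perturbation, and for finite $\Delta$ the spectral subspace is genuinely not contained in $\mathcal{H}_{\le K}$. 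If you instead try to pass from ``$\psi$ is $\epsilon$-close to $\mathcal{H}_{\le K}$'' to an entropy bound via continuity (Fannes--Audenaert), the error term carries a factor $\log\dim\mathcal{H}_{\Lambda_0}=\ell\log 2$, so you would need leakage $o(\log\ell/\ell)$ uniformly over the spectral subspace, and even then the leaked component lives in a $2^\ell$-dimensional space whose entropy contribution you have not controlled. Your proposal gives no mechanism for this conversion, and this is precisely where the difficulty sits.

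The paper's route avoids any good-part/error-part splitting. It bounds the R\'enyi entropy $\mathcal{E}_\alpha$, $0<\alpha<1$, writing (via Jensen) $\Tr[(\hat\rho_1)^\alpha]\le\sum_{Y\subseteq\Lambda_0}\langle\phi_Y,\hat\rho_1\phi_Y\rangle^\alpha$ over \emph{all} left configurations $Y$, and bounding each diagonal entry by $\|\chi_{\mathcal{A}_{Y,1}}\psi_{|Y|+1}\|^2\le C_3^2\,e^{-2\mu_3 d_{|Y|+1}(Y\cup\{\ell+1\},\mathcal{V}_{|Y|+1,K})}$, where the exponential decay of the spectral projection in the configuration-space distance to the $\le K$-cluster set is the quantitative form of your ``cluster localization'' (Corollary~\ref{cor:chainspecdecay}, obtained from a Combes--Thomas bound plus a Riesz-contour argument, Lemma~\ref{lemma:decaybounds}). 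Two points you would need even to set this up: the relevant distance is that of $Y\cup\{\ell+1\}$, not of $Y$ alone, to the $K$-cluster set (the worst right part is a single particle at the cut, Lemmas~\ref{lemma:closestconfig} and \ref{lemma:justoneparticle}); and the weighted sum over all $Y$ is shown to be $O(\ell^{2K-1})$ by parametrizing each $Y$ through its nearest $K$-cluster configuration (the ``magnets'' of Lemma~\ref{lem:closestcluster}), with the far configurations suppressed exponentially rather than discarded. Then $\mathcal{E}\le\mathcal{E}_\alpha=\frac{1}{1-\alpha}\log\Tr[(\rho_1)^\alpha]$ and $\alpha\to0$ recovers the constant $2K-1$ (Theorem~\ref{thm:K}). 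So the hard analytic input you correctly anticipate does exist, but it must be fed into a soft functional (a fractional trace) rather than a rank count; as written, your step from approximate containment to the entropy bound does not go through.
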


In short (and slightly non-rigorously): States with energy strictly below the $(K+1)$-cluster break-up satisfy the enhanced area law ${\mathcal E}(\rho_\psi) \lesssim (2K-1) \log \ell$ as $\ell\to\infty$ (after first taking the infinite volume limit $L\to\infty$).

\vspace{.3cm}

In the last section of this paper we carry out a detailed analysis of the eigenstate entanglement in the Ising limit $\Delta=\infty$ of the XXZ chain, that is the Hamiltonian
\begin{equation} \label{Ising1}
H_\Lambda^\infty(V) = H_\Lambda^{\infty} + \sum_{j=1}^L V_j \mathcal{N}_j,
\end{equation}
with
\begin{equation} \label{Ising2}
H_\Lambda^{\infty} = \frac{1}{4} \sum_{j=1}^{L-1} (\idty - \sigma_j^Z \sigma_{j+1}^Z) + \frac{1}{2} (\mathcal{N}_1 + \mathcal{N}_L).
\end{equation}

This operator is trivial from the point of view of diagonalization: It is easy to check that the product states $\{\phi_X; X\subseteq \Lambda\}$ provide a complete set of eigenvectors,
\begin{equation} \label{diagH}
H_\Lambda^\infty(V) \phi_X = (cl(X)+ \sum_{j\in X} V_j) \phi_X.
\end{equation}
Here $cl(X)$ is the number of connected components of $X$ (in the sense of subintervals of integers). It is here where the need of the boundary condition $\frac{1}{2} (\mathcal{N}_1 + \mathcal{N}_L)$ in \eqref{Ising2} comes in, assuring that \eqref{diagH} also allows for sets $X$ which contain one or both boundary points $1, L$.

In particular, all the eigenstates $\phi_X$ in (\ref{diagH}) are product states and thus trivially entangled. However, we can make at least two observations with respect to entanglement which are of some interest:

\subsubsection{Free field case} $H_\Lambda^\infty$ has eigenvalues $K=0,1,2,\ldots$. For $K\ge 1$ and for large $L$ the corresponding eigenspaces become highly degenerate, namely
\begin{equation}
\dim R(\chi_{\{K\}}(H_\Lambda^\infty)) = |\{Y \subseteq [1,L]: cl(Y)=K\}|.
\end{equation}
Thus more highly entangled eigenstates arise as linear combinations of the product states $\phi_X$. By mostly elementary counting arguments we will prove

\begin{thm} \label{thm:detasymp}
For any $K\in \N$ we have
\begin{equation} \label{conjdet}
\lim_{\ell \to \infty} \lim_{L\to \infty} \frac{\sup_{\psi} \mathcal{E}(\rho_\psi)}{\log \ell} = 2K-1,
\end{equation}
the supremum taken over $\psi \in R(\chi_{[0,K]}(H_\Lambda^\infty))$, $\|\psi\|=1$.
\end{thm}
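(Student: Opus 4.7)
The plan is to match the upper bound already provided by Theorem~\ref{thm:main} with an explicit lower bound construction. The upper bound $\limsup \le 2K-1$ follows immediately by applying Theorem~\ref{thm:main} to $H_\Lambda^\infty = H_\Lambda^\infty(0)$: the spectrum of this operator is contained in $\{0,1,2,\ldots\}$ and $E_{K+1}=K+1$ in the Ising limit, so $R(\chi_{[0,K]}(H_\Lambda^\infty)) = R(\chi_{[0,K+1/2]}(H_\Lambda^\infty))$ and Theorem~\ref{thm:main} applies with, say, $\delta = 1/2$.

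For the matching lower bound I will exhibit, for every $\ell$ and all sufficiently large $L$, an explicit unit vector $\psi \in R(\chi_{\{K\}}(H_\Lambda^\infty))$ satisfying $\mathcal{E}(\rho_\psi) = \log\binom{\ell}{2K-1}$. Let $\mathcal{A}$ denote the collection of subsets $A\subseteq[1,\ell]$ having exactly $K$ connected components, the rightmost of which has $\ell$ as its right endpoint. A standard stars-and-bars count (component-lengths $\ge 1$, inter-component gaps $\ge 1$, possibly empty initial gap) yields $|\mathcal{A}|=\binom{\ell}{2K-1}$. Enumerate $\mathcal{A}=\{A_1,\ldots,A_r\}$ with $r=|\mathcal{A}|$, and, provided $L \ge \ell+r$, pair $A_i$ with the single droplet $B_i := [\ell+1,\ell+i]$. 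Setting $X_i:=A_i\cup B_i$, the conditions $\ell \in A_i$ and $\ell+1 \in B_i$ cause the rightmost component of $A_i$ to merge with $B_i$ inside $X_i$, so $cl(X_i) = K+1-1 = K$. By \eqref{diagH}, each $\phi_{X_i}$ is an eigenvector of $H_\Lambda^\infty$ with eigenvalue $K$.

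Define
\[
\psi := \frac{1}{\sqrt{r}} \sum_{i=1}^{r} \phi_{A_i} \otimes \phi_{B_i},
\]
which, being a linear combination of the $\phi_{X_i}=\phi_{A_i}\otimes\phi_{B_i}$, lies in $R(\chi_{\{K\}}(H_\Lambda^\infty))$. Since the $A_i$ are pairwise distinct subsets of $[1,\ell]$ and the $B_i$ are pairwise distinct subsets of $[\ell+1,L]$, the product vectors $\{\phi_{A_i}\}\subset \mathcal{H}_{\Lambda_0}$ and $\{\phi_{B_i}\}\subset \mathcal{H}_{\Lambda_0^c}$ are each orthonormal. Hence the above sum is already a Schmidt decomposition of $\psi$ with uniform Schmidt coefficients $1/\sqrt{r}$, giving $\mathcal{E}(\rho_\psi) = \log r = \log\binom{\ell}{2K-1}$.

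Dividing by $\log\ell$ and using $\binom{\ell}{2K-1} \sim \ell^{2K-1}/(2K-1)!$ as $\ell\to\infty$, taking $L\to\infty$ first and then $\ell\to\infty$ yields the lower bound $2K-1$, matching Theorem~\ref{thm:main}. There is no deep obstacle here; the essential step is identifying a family of product eigenstates, all lying in the single eigenspace with eigenvalue $K$, whose orthogonality structure on both sides of the cut makes the equal-weight superposition automatically Schmidt. The specialization $K=1$ recovers the familiar $\log\ell$ lower bound for the droplet regime.
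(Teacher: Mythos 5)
Your proof is correct, and the lower-bound half is essentially the paper's own construction: the paper also takes the family of $K$-cluster subsets of $[1,\ell]$ whose last cluster ends at $\ell$ (its $\mathcal{B}_1$, augmented by the sets with at most $K-1$ clusters, $\mathcal{B}_2$), pairs them with distinct intervals $[\ell+1,\ell+j]$, and invokes the maximal-entropy identity \eqref{maxent} for an equal-weight Schmidt sum. Your stars-and-bars identity $\tilde{\mathcal{P}}_{K,\ell}=\binom{\ell}{2K-1}$ is a nice exact replacement for the paper's inductive asymptotics in Lemma~\ref{lem:clustercount}, and dropping $\mathcal{B}_2$ costs nothing at leading order. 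Where you genuinely diverge is the upper bound: you import it from Theorem~\ref{thm:main}, whereas the paper proves it from scratch by an elementary structural argument (Lemma~\ref{lemdet}(a)): decomposing any $X$ with $cl(X)\le K$ as $Y_X\cup Z_X$ across the cut forces $Z_X$ to be either empty or an interval $[\ell+1,\ell+j]$ attached at $\ell$, so $\psi$ has a decomposition \eqref{Schmidt} with at most $N_{K,\ell}+1$ terms and \eqref{trivbound} applies. The paper's route buys two things: it yields the exact value $\sup_\psi\mathcal{E}(\rho_\psi)=\log(N_{K,\ell}+1)$ for large $L,\ell$ (in particular the inner $L$-limit exists trivially, which your two-sided squeeze only controls at the level of $\liminf$/$\limsup$), and it keeps Section~\ref{sec:Ising} self-contained, as the authors explicitly intend. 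Your route is shorter but leans on the full Combes--Thomas machinery and on reading Theorem~\ref{thm:main} at $\Delta=\infty$; this is legitimate (the model is defined for $1<\Delta\le\infty$ and all constants in Theorem~\ref{thm:K} only improve as $\Delta\to\infty$, with $\mu_3\to\infty$ and $C(\alpha)\to 1$), but it is worth saying a word to justify that the quantitative estimates survive the limit, since the intermediate statements are phrased with finite $\Delta$ in mind.
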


Thus, at least in the Ising limit $\Delta=\infty$ and free field, where the spectral projection onto $[0,E_{K+1}-\delta]$ becomes the projection onto $[0,K]$, the upper bound in Theorem~\ref{thm:main} is optimal. One can indeed construct states with energy not larger than $K$ whose entanglement grows like $(2K-1) \log \ell$ for large $\ell$ (and it can't grow faster, at least in the leading $\log \ell$ term).

\subsubsection{Disorder effects}

It is generally expected that the introduction of disorder into a spin chain reduces the entanglement of eigenstates. In fact, it is often considered that one of the manifestations of the many-body localized regime is that corresponding eigenstates satisfy an area law. In the one-dimensional setting considered here this would mean that the entanglement remains bounded for large $\ell$, i.e.\ that the $\log \ell$ growth in Theorem~\ref{thm:main} should disappear in the disorder average. For the droplet spectrum of the XXZ chain this is exactly what was shown in \cite{BW18}. However, the relatively straightforward proof of the area law given in \cite{BW18}, based on what can be considered a large deviations argument provided earlier in \cite{BW17}, does not extend beyond the droplet spectrum. In fact, working in the Ising limit, we show here that above the droplet spectrum the $\log \ell$ growth will persist, even after adding disorder.

For this we will assume that the field is given by non-negative i.i.d.\ random variables $V_j$ with common distribution $\mu$ and denote disorder averaging by $\E(\cdot)$.

\begin{thm} \label{dislowK}
Let the distribution $\mu$ of the i.i.d.\ random variables $V_j$ satisfies $0 \in \supp \mu \subseteq [0,\infty)$ and $\int e^{tx}\,d\mu(x)<\infty$ for some $t>0$. Let $K\in \N$ and $\delta_0>0$. Then
\begin{equation} \label{eq:dislowK}
\liminf_{\ell\to\infty} \liminf_{L\to\infty} \frac{1}{\log \ell} \E \left( \sup_{\psi}  \mathcal{E}(\rho_\psi) \right) \ge K-1,
\end{equation}
with supremum over $\psi \in R(\chi_{[0,K+\delta_0]}(H_\Lambda^\infty(V)))$, $\|\psi\|=1$.
\end{thm}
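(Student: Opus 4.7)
My approach is to construct, for each disorder realization $V$ in a typical event, an explicit state $\psi_V\in R(\chi_{[0,K+\delta_0]}(H_\Lambda^\infty(V)))$ with entanglement entropy at least $(K-1)\log\ell+O(1)$. The key point is that since $0\in\supp\mu$, there are linearly many ``good'' sites where $V_j$ is negligible; placing $(K-1)$ down-spins at an arbitrary $(K-1)$-subset of the good sites of $\Lambda_0$ and pairing each such subset bijectively with a distinct good site in $\Lambda_0^c$ yields a superposition of Schmidt rank $\binom{|\mathcal{G}_1|}{K-1}\sim\ell^{K-1}$. To set up, fix $\eta\in(0,\delta_0/K]$ with $p:=\mu([0,\eta])>0$, let $\mathcal{G}_1=\{j\in[1,\ell]:V_j\le\eta\}$ and $\mathcal{G}_2=\{j\in[\ell+1,L]:V_j\le\eta\}$, and set $m=\lfloor p\ell/2\rfloor$. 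The event $E_\ell:=\{|\mathcal{G}_1|\ge m\}\cap\{|\mathcal{G}_2|\ge\binom{m}{K-1}\}$ has probability tending to $1$ as first $L\to\infty$ and then $\ell\to\infty$, by the weak law of large numbers applied to the two independent Bernoulli sums $|\mathcal{G}_1|$ and $|\mathcal{G}_2|$.

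\textbf{Construction and verification.} On $E_\ell$, pick any $S\subseteq\mathcal{G}_1$ with $|S|=m$, set $\mathcal{S}=\binom{S}{K-1}$, fix any injection $\pi:\mathcal{S}\to\mathcal{G}_2$, and define
\[
\psi_V:=\frac{1}{\sqrt{|\mathcal{S}|}}\sum_{T\in\mathcal{S}}\phi_T\otimes\phi_{\{\pi(T)\}}.
\]
For each $T\in\mathcal{S}$, the set $X:=T\cup\{\pi(T)\}$ has $K$ elements, all at good sites, so $cl(X)\le K$ and $\sum_{j\in X}V_j\le K\eta\le\delta_0$; by \eqref{diagH}, $\phi_X$ is an eigenstate of eigenvalue at most $K+\delta_0$, so $\psi_V$ lies in the required spectral subspace. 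The families $\{\phi_T\}_{T\in\mathcal{S}}\subset\mathcal{H}_{\Lambda_0}$ and $\{\phi_{\{\pi(T)\}}\}_{T\in\mathcal{S}}\subset\mathcal{H}_{\Lambda_0^c}$ are both orthonormal (the latter by injectivity of $\pi$), so the expression above is already a Schmidt decomposition with equal singular values $1/\sqrt{|\mathcal{S}|}$, yielding
\[
\mathcal{E}(\rho_{\psi_V})=\log|\mathcal{S}|=\log\binom{m}{K-1}=(K-1)\log\ell+O(1).
\]

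\textbf{Passage to the limit and main obstacle.} Bounding the supremum from below by $\mathcal{E}(\rho_{\psi_V})\mathbb{1}_{E_\ell}$ and taking expectations gives $\E(\sup_\psi\mathcal{E})\ge P(E_\ell)\bigl((K-1)\log\ell+O(1)\bigr)$; sending $L\to\infty$ and then $\ell\to\infty$, so that $P(E_\ell)\to 1$, and dividing by $\log\ell$ yields \eqref{eq:dislowK}. The only real difficulty is locating the right construction: since every eigenstate of $H_\Lambda^\infty(V)$ is a product state, any entanglement must come from intra-band superpositions, and one must realize Schmidt rank of order $\ell^{K-1}$ under the ``energy budget'' $\delta_0$ above the threshold $K$. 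Once one uses $K-1$ independent positional degrees of freedom on the $\Lambda_0$ side, each correlated by a single marker down-spin on the $\Lambda_0^c$ side, all remaining steps are routine counting and concentration bounds; in particular, the exponential moment hypothesis on $\mu$ is not needed for this lower bound --- only $0\in\supp\mu$ is used.
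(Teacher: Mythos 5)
Your construction is correct and is essentially the paper's own proof: the paper likewise fixes $\delta=\delta_0/K$, takes the $\binom{|A_V|}{\le K-1}$ subsets of the good sites of $[1,\ell]$, pairs each bijectively with a distinct good site of $[\ell+1,L]$, and forms the equal-weight superposition of the resulting $K$-cluster product eigenstates to get Schmidt rank of order $\ell^{K-1}$. Your observation that the exponential-moment hypothesis is dispensable for this lower bound is also correct, since one only needs $\mathbb{P}(E_\ell)\to 1$ (the paper invokes a Chernoff bound for the bounded Bernoulli indicators, where it holds automatically), so the weak law of large numbers suffices.
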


Thus for $K\ge 2$ and any $\delta_0>0$, meaning at an arbitrarily small energy margin above the droplet spectrum, the $\log \ell$ term persists in the lower bound on the maximal possible entanglement.

Some more context on the meaning of Theorem~\ref{dislowK} is provided by the following remarks:

\begin{enumerate}
\item  We believe that \eqref{eq:dislowK} is optimal, i.e., that there also is a corresponding upper bound with constant $K-1$. But we currently have proofs of this only for $K=1$ (the area law of Beaud-Warzel) and for $K=2$. We will explain in Section~\ref{conjecture} what is lacking in our proof (and hereby invite input of readers to fill this gap, a result which should only require elementary probability). This would lead to the very neat result  that disorder lowers the pre-factor in the $\log \ell$ correction, from $2K-1$ to $K-1$, without leading to a full area law for $K\ge 2$.

\item One can argue that we are trying to do too much by considering {\it all} states with energy below $K+\delta_0$ in Theorem~\ref{dislowK}. It may still be true that only strict eigenstates satisfy an area law, not their linear combinations. This is indeed true for the Ising limit: If the $V_j$ have absolutely continuous distribution, then an argument in Appendix A of \cite{ARS} (presented there in the context of the XY chain, but applicable to more general models) shows that the spectrum of $H_\Lambda(V)$ is almost surely simple. For the Ising limit $H_\Lambda^{\infty}(V)$ this means that {\it all} its eigenfunctions are given by the product states $\phi_X$ and thus have vanishing entanglement, trivially satisfying an area law. Showing this for finite $\Delta$ remains a challenging open problem. What this demonstrates is that asking if an area law for eigenstates is a useful characteristic of MBL becomes interesting (in the sense of hard) in the XXZ chain only if one considers energies {\it above} the droplet spectrum.

\item Rigorous lower bounds on entanglement are rare, in particular for systems with disorder (\cite{Beaudetal} has a $\log$-corrected lower bound for {\it deterministic} harmonic oscillator systems). But we mention the recent \cite{Mulleretal} which shows a logarithmic lower bound for the entanglement of a disordered free Fermion system if the effective Hamiltonian is the random dimer variant of the Anderson model and the Fermi energy is chosen to be one of the critical energies, where the localization length of the random dimer model diverges.
    \end{enumerate}

\subsection{Outline of contents} The remaining sections of this work are structured as follows.

We heavily rely on the fact that the XXZ chain is particle number (down-spin number) conserving. In Section~\ref{sec:hardcore} we review the arising $N$-particle restrictions, sometimes referred to as the hard core particle formulation of the XXZ model. We do this in the setting of XXZ systems on general graphs, mostly for two reasons. First, for future work we want to have the results of Section~\ref{sec:hardcore} as well as the following Sections~\ref{sec:CT} and \ref{sec:projections} available in this more general form. Second, we feel that the way in which droplet boundary conditions (for the restriction of XXZ systems to subgraphs) arise in this general setting in the form of \eqref{NMagXXZ} is somewhat interesting by itself and, more generally, find the graph theoretic setting quite natural.

Section~\ref{sec:CT} establishes a Combes-Thomas bound which is a variant of the one proven in \cite{EKS}. In Section~\ref{sec:abstractCT} we first phrase this result in a natural setting of relatively form-bounded perturbations (Proposition~\ref{prop:CT}) before applying it to XXZ systems in Section~\ref{sec:CTXXZ}. We mention here that Proposition~\ref{prop:CT} is also applicable to XXZ models for higher spins, as recently considered in \cite{Fischbacher} (see Lemma~2.9 there), leading to a path on how the current work can be extended to the case of higher spins.

The Combes-Thomas bound establishes an exponential decay bound for the resolvent, which in Section~\ref{sec:projections} is turned in a corresponding bound on spectral projections, see Lemma~\ref{lemma:decaybounds}. Essentially, this is done via the standard Riesz contour integration formula \eqref{Riesz}, with the problem being that the contour has to cut through the spectrum at arbitrarily close distance to the nearest eigenvalue. But we can extend an argument in \cite{BW18} which has already shown how to deal with this.

In Section~\ref{sec:tracebound} we finally obtain a bound on $\Tr (\rho_{\Lambda_0,\psi})^\alpha$ for $0<\alpha<1$ in terms of an exponential sum, see Proposition~\ref{prop:summarybound}. That this is useful for proving our main result Theorem~\ref{thm:main} is due to the fact that the corresponding R\'{e}nyi entropies provide upper bounds for the von Neumann entropy.

Finding a bound on this exponential sum which is good enough to imply Theorem~\ref{thm:main} makes Section~\ref{sec:proofmain} the core technical part of our work, see Proposition~\ref{prop:summarybound}. A key ingredient to the calculations done here is to have a good understanding of how to describe the closest $K$-cluster configuration to any given $N$-particle configuration $\{x_1<x_2<\ldots<x_N\} \subseteq \Z$. Establishing sufficient understanding of this is the content of Lemmas~\ref{closestdrop} and \ref{lem:closestcluster} in Appendix~\ref{app:aux}. The resulting Theorem~\ref{thm:K}, in particular the bound \eqref{eq:alphaRenyi}, is actually the strongest version of our result from which Theorem~\ref{thm:main} follows by taking the appropriate limits. A curious aspect of our proof is that at the very end we let $\alpha\to 0$, while, due to monotonicity properties of the R\'{e}nyi entropies, the values $\alpha$ close to $1$ should give the better bounds. What happens here is that in formulating the entanglement bound in Theorem~\ref{thm:main} in the form \eqref{eq:main} all the detrimental effects of values $\alpha \sim 0$ become lower order and disappear in the limit $\ell\to\infty$ (this is a bit as saying that the sum of the sequence $e^{-\alpha n}$, $n=0,1,2,\ldots$, is of the same order of magnitude as its first term $1$, for any fixed $\alpha>0$).

The final Section~\ref{sec:Ising} contains our discussion of the Ising limit and, in particular, the proofs of Theorems~\ref{thm:detasymp} and \ref{dislowK} (as well as the description of a conjecture which we had to leave open in Section~\ref{conjecture}). The arguments here are all completely elementary. As the bounds found in this limiting case can be seen as providing some guidance for what the expect more generally in the Ising phase of the XXZ chain, this section is written so that it can be read independently of the rest of the paper.

\vspace{.3cm}

\noindent {\bf Acknowledgements:} C.\ F.\ and G.\ S.\ are grateful to the Insitut Mittag-Leffler in Djursholm, Sweden, where some of this work was done as part of the program Spectral Methods in Mathematical Physics in Spring 2019. We would also like to acknowledge useful discussions with A.\ Klein and B.\ Nachtergaele.

\section{Hard core particle formulation of the XXZ Hamiltonian on general graphs} \label{sec:hardcore}

A key property of the operators $H_\Lambda(V)$ is conservation of the number of down-spins (or particles/magnons), i.e., that it is the direct sum of its $N$-particle restrictions, $N=0,1,\ldots,L$, compare \cite{FS14}.  We will discuss basic properties of the $N$-particle operators in the more general setting where $[1,L] \subseteq \Z$ (with next-neighbor edges) is replaced by induced finite subgraphs $\mathcal{G}' = (\mathcal{V}', \mathscr{E}')$ of a more general class of graphs $\mathcal{G} = (\mathcal{V}, \mathscr{E})$.

Specifically, let $\mathcal{G}=(\mathcal{V},\mathscr{E})$ be an undirected connected graph of bounded maximal degree $d_{\max}$ with countable vertex set $\mathcal{V}$ and edge set $\mathscr{E} \subseteq \{\{x,y\}: x, y \in \mathcal{V}, x \not=y\}$. We will write $x\sim y$ for $\{x,y\} \in \mathscr{E}$.

For any $\mathcal{V}'\subseteq\mathcal{V}$, we construct the \emph{induced} subgraph $\mathcal{G}'=(\mathcal{V}',\mathscr{E}')$ by defining
\begin{equation}
\mathscr{E}':=\{\{x,y\}\subseteq\mathscr{E}: x,y\in\mathcal{V}'\}\:.
\end{equation}

In our later applications $\mathcal{G}'$ will be a finite subgraph of $\mathcal{G}$, but at this stage this is not necessary. In particular, we allow the case $\mathcal{G}' = \mathcal{G}$.

\begin{itemize}

\item (A1) We will always assume that the subgraph $\mathcal{G}'$ is {\it geodesic} in the sense that $d'(x,y)=d(x,y)$ for all $x, y \in \mathcal{V}'$.

\end{itemize}

Here $d(\cdot,\cdot)$ and $d'(\cdot,\cdot)$ denote the graph distances on $\mathcal{G}$ and $\mathcal{G}'$, respectively. In particular, this means that $\mathcal{G}'$ is connected.

The boundary  of $\mathcal{G}'$ relative to $\mathcal{G}$ is given by
\begin{equation}
\partial_{\mathcal G} \mathcal{G}' := \{x\in \mathcal{V}': \exists\ y\in \mathcal{V} \setminus \mathcal{V}' \text{ such that }x \sim y \}.
\end{equation}
For $x\in \partial_{\mathcal G} \mathcal{G}'$ we set
\begin{equation}
n(x) := |\{y\in \mathcal{V} \setminus \mathcal{V}': x \sim y\}|.
\end{equation}
For $1<\Delta\le \infty$, the XXZ Hamiltonian on $\mathcal{G}$ restricted to $\mathcal{G}'$ with droplet boundary condition is formally given by
\begin{equation}
H_{{\mathcal G}'} = \sum_{x,y \in \mathcal{V}', x\sim y} h_{x,y} + \frac{1}{2}\left(1-\frac{1}{\Delta}\right) \sum_{x \in \partial_{\mathcal{G}} \mathcal{G}'} n(x) \mathcal{N}_x
\end{equation}
with $h_{x,y}$ and $\mathcal{N}_x$ as in Section~\ref{sec:results}.
Adding a non-negative field $V:\mathcal{V}' \to [0,\infty)$ refers to the formal Hamiltonian
\begin{equation} \label{XXZformal}
H_{{\mathcal G}'}(V) = H_{{\mathcal G}'} + \sum_{x\in \mathcal{V}'} V(x) \mathcal{N}_x.
\end{equation}
If $\mathcal{V}'$ is finite, it can be readily verified that $H_{{\mathcal G}'}(V)$ is self-adjoint on the finite-dimensional tensor product $\mathcal{H}_{\mathcal{V}'} = \bigotimes_{x\in \mathcal{V}'} \C^2$. But $H_{{\mathcal G}'}(V)$ can also be given in rigorous sense if $\mathcal{V}'$ is infinite. This is best seen through the equivalent hard-core particle formulation of the XXZ Hamiltonian, which will be introduced in the remainder of this section.

\subsection{Symmetric products of graphs and subgraphs}

We recall the following
\begin{defn}[{\cite[Def.\ II.3]{FS18}}] For any $N\in\N$ such that $N\leq |\mathcal{V}|$, let the {\emph{$N$-th symmetric product}} $\mathcal{G}_N=(\mathcal{V}_N,\mathscr{E}_N)$ of $\mathcal{G}$ be the graph with vertex set
\begin{equation}
\mathcal{V}_N=\{X\subseteq\mathcal{V}:|X|=N\}
\end{equation}
and edge set
\begin{equation}
\mathscr{E}_N=\{\{X,Y\}:X,Y\in\mathcal{V}_N, X\triangle Y\in \mathscr{E}\}\:.
\end{equation}
Here $X\triangle Y=(X\setminus Y)\cup (Y\setminus X)$ is the symmetric difference of the sets $X$ and $Y$. Moreover, let $d_N(X,Y)$ denote the graph distance between two vertices $X,Y\in\mathcal{V}_N$ on $\mathcal{G}_N$.
\end{defn}
Note that (see e.g.\ \cite[Remark II.6]{FS18}) for any $X,Y\in\mathcal{V}_N$ with labeled elements $X=\{x_1,\dots,x_N\}$ and $Y=\{y_1,\dots,y_N\}$, we have the following expression for the graph distance $d_N(X,Y)$ in terms of the distance $d(\cdot,\cdot)$ on the original graph:
\begin{equation} \label{distformula}
d_N(X,Y)=\min_{\pi\in\mathfrak{S}_N}\sum_{j=1}^N d(x_j,y_{\pi(j)})\:,
\end{equation}
where $\mathfrak{S}_N$ denotes the group of permutations of $\{1,2,\dots,N\}$.

For a subgraph $\mathcal{G}'=(\mathcal{V}',\mathscr{E}')$, we denote by $\mathcal{G}'_N = (\mathcal{V}'_N, \mathscr{E}'_N)$, with $N \in \N$ such that $N \le |\mathcal{V}'|$, the $N$-th symmetric product of the $\mathcal{G}'$ (where it is easily seen that $\mathcal{G}'_N$ is a subgraph of $\mathcal{G}_N$). By \eqref{distformula} and the assumption that $\mathcal{G}'$ is geodesic in $\mathcal{G}$, this implies that $\mathcal{G}'_N$ is geodesic in $\mathcal{G}_N$, i.e.,
\begin{equation}
d_N'(X,Y) = d_N(X,Y) \quad \mbox{for all $X, Y \in \mathcal{V}'_N$}.
\end{equation}

Finally, note that the degree function $D_{\mathcal{G}}^N$ of the graph $\mathcal{G}_N$ is given by
\begin{equation} \label{GNdegree}
D_{\mathcal{G}}^N(X) = |\{Y \in \mathcal{V}_N: \{X,Y\} \in \mathscr{E}_N\}| = |\partial X|,
\end{equation}
the surface measure of $X$ in $\mathcal{G}$, i.e., the cardinality of $\partial X=\{\{x,y\}\in\mathscr{E}: x\in X, y\notin X\}$.

\subsection{The XXZ Hamiltonian on general graphs}

For any graph $\mathcal{G}$, subgraph $\mathcal{G}'$, and $N\in \N$ as above, consider the Hilbert space $\mathcal{H}_{\mathcal{G}'}^N:=\ell^2(\mathcal{V}'_N)$ and define the $N$-magnon (or $N$-particle) sector of the XXZ Hamiltonian, restricted to $\mathcal{G}'$ with {\it droplet boundary conditions}, as the operator $H^N_{\mathcal{G}'}$  given by
\begin{equation} \label{NMagXXZ}
H_{\mathcal{G}'}^N:=-\frac{1}{2\Delta}\mathcal{L}^N_{\mathcal{G}'}+\frac{1}{2}\left(1-\frac{1}{\Delta}\right)D^N_\mathcal{G}
\end{equation}
on the Hilbert space $\mathcal{H}_{\mathcal{G}'}^N = \ell^2(\mathcal{V}'_N)$ for $N\ge 1$. We also set $H_{\mathcal{G}'}^0 = 0$ on any one-dimensional Hilbert space $\mathcal{H}_{\mathcal{G}'}^0$. Here $\Delta>1$ is the anisotropy parameter and $\mathcal{L}^N_{\mathcal{G}'}$ is the graph Laplacian on $\mathcal{G}'_N$, which acts as
\begin{equation}
\left(\mathcal{L}^N_{\mathcal{G}'} f\right)(X)=\sum_{Y:\{X,Y\}\in\mathscr{E}'_N}(f(Y)-f(X)) \quad\mbox{for any $X\in\mathcal{V}'_N$}.
\end{equation}
Here, being slightly sloppy in order to avoid too many indices, the function $D^N_\mathcal{G}$ from \eqref{GNdegree} is used in \eqref{NMagXXZ} also to denote the multiplication operator in $\ell^2(\mathcal{V}'_N)$ by the restriction of this function to $\mathcal{V}'_N$.
 The assumption that the original graph $\mathcal{G}$ has bounded degree (by $d_{\max}$) yields that all $\mathcal{G}_N$ and $\mathcal{G}'_N$ have bounded degree (by $Nd_{\max}$), implying that the operators $D_{\mathcal{G}}^N$ and $\mathcal{L}_{\mathcal{G}'}^N$, and thus also $H_{\mathcal{G}'}^N$, are bounded and self-adjoint (including in the case of infinite $\mathcal{V}'$).

That the first term in \eqref{NMagXXZ} involves $\mathcal{G}'$ while the second term involves $\mathcal{G}$ is not a typo, but a reflection of the fact that in the restrictions of the XXZ Hamiltonian to subgraphs we use a suitable form of boundary conditions. More precisely, for $X \in \mathcal{V}'_N$ the difference of the degree $D^N_\mathcal{G}(X)$ in the full graph and the degree $D^N_{\mathcal{G}'}(X)$  in the subgraph (or the respective surface measures of $X$ within $\mathcal{G}$ and $\mathcal{G}'$),
\begin{equation}
\mathcal{B}^N_{\mathcal{G}'}(X) := D^N_\mathcal{G}(X) - D^N_{\mathcal{G}'}(X),
\end{equation}
is given by
\begin{equation}
\mathcal{B}^N_{\mathcal{G}'}(X)=|\{\{x,y\}\in\mathscr{E}: x\in X,y\notin\mathcal{V}'\}|,
\end{equation}
the number of edges (in $\mathcal{G}$)  leading from a point in $X$ to a point in $\mathcal{V}\setminus\mathcal{V}'$.

It will be convenient to also consider the \emph{adjacency operator} $A_{\mathcal{G}'}^N$ on $\mathcal{G}_N'$ defined by
\begin{equation}
(A_{\mathcal{G}'}^N f)(X):= \sum_{Y:\{X,Y\}\in\mathscr{E}'_N} f(Y) = (\mathcal{L}^N_{\mathcal{G}'}f)(X)+(D_{\mathcal{G}'}^N f)(X).
\end{equation}
We can re-interpret \eqref{NMagXXZ} as
\begin{align} \label{NMagXXZ2}
H_{\mathcal{G}'}^N:&=-\frac{1}{2\Delta}\mathcal{L}^N_{\mathcal{G}'}+\frac{1}{2}\left(1-\frac{1}{\Delta}\right)D^N_{\mathcal{G}'} + \frac{1}{2}\left(1-\frac{1}{\Delta}\right) \mathcal{B}^N_{\mathcal{G}'}\\
&=-\frac{1}{2\Delta}A_{\mathcal{G}'}^N+\frac{1}{2}D_{\mathcal{G}'}^N+\frac{1}{2}\left(1-\frac{1}{\Delta}\right)\mathcal{B}_{\mathcal{G}'}^N\:, \notag
\end{align}
thinking of $-\frac{1}{2\Delta}\mathcal{L}^N_{\mathcal{G}'}+\frac{1}{2}\left(1-\frac{1}{\Delta}\right)D^N_{\mathcal{G}'}$ as the {\it internal} $N$-magnon XXZ Hamiltonian on the subgraph (with ``free'' boundary condition) and as $\frac{1}{2}\left(1-\frac{1}{\Delta}\right) \mathcal{B}^N_{\mathcal{G}'}$ as a non-negative boundary field. This is what we have already referred to as droplet boundary conditions. The reason for this choice of terminology (which goes back to at least \cite{NS}) will become clear in the sequel.

We mention that for the choice $\mathcal{G}'=\mathcal{G}$ we obviously have $\mathcal{B}^N_{\mathcal{G}}=0$, so that
$H_{\mathcal{G}}^N=-\frac{1}{2\Delta}\mathcal{L}^N_{\mathcal{G}}+\frac{1}{2}\left(1-\frac{1}{\Delta}\right)D^N_\mathcal{G}$, i.e., the XXZ Hamiltonian on the full graph does not have a boundary field.

We finish this section by defining the full XXZ Hamiltonian $H_{\mathcal{G}'}$ on each subgraph $\mathcal{G}'$ as
\begin{equation} \label{eq:subgraphop}
H_{\mathcal{G}'}:=\bigoplus_{N=0}^{|\mathcal{V}'|}H_{\mathcal{G}'}^N \quad \mbox{on the Hilbert space}\quad \mathcal{H}_{\mathcal{G}'} := \bigoplus_{N=0}^{|\mathcal{V}'|} \mathcal{H}_{\mathcal{G}'}^N\:.
\end{equation}
This includes the XXZ Hamiltonian on the full graph via $\mathcal{G}' = \mathcal{G}$. This is a (generally unbounded) self-adjoint operator.

Given a graph $\mathcal{G}=(\mathcal{V},\mathscr{E})$, we call a non-negative function $V:\mathcal{V}\rightarrow [0,\infty)$ a \emph{background field}. For any $f\in\ell^2(\mathcal{V}_N)$, this defines a self-adjoint multiplication operator $V_{\mathcal{G}}^N$ (the $N$-body potential) on $\ell^2(\mathcal{V}_N)$ via
\begin{equation}
(V_{\mathcal{G}}^Nf)(X)=\left(\sum_{x\in X}V(x)\right)f(X), \quad X \subseteq \mathcal{V}_N.
\end{equation}
 It is obvious how to restrict the $V_{\mathcal{G}}^N$ to the $N$-particle subspaces of subgraphs $\mathcal{G}' = (\mathcal{V}', \mathscr{E}')$.

We define the $N$-magnon operator with background field $V$ on $\ell^2(\mathcal{V}'_N)$ as
\begin{equation} \label{Nmag+field}
H_{\mathcal{G}'}^N(V) = H_{\mathcal{G}'}^N + V_{\mathcal{G}'}^N.
\end{equation}
Finally, the XXZ Hamiltonian on $\mathcal{G}'$ with field $V$ in {\it hard-core particle form} becomes
\begin{equation} \label{XXZhardcore}
H_{\mathcal{G}'}(V) = \bigoplus_{N=0}^{|\mathcal{V}'|} H_{\mathcal{G}'}^N(V).
\end{equation}

\begin{remark} \label{ident}
That \eqref{XXZformal} and \eqref{XXZhardcore} are the same operators for finite $\mathcal{V}'$ is seen by identifying the up-down spin product basis vectors $\phi_X$, with down-spins at the sites $X\subseteq \mathcal{V}'$, with the canonical basis vectors in $\mathcal{H}_{\mathcal{G}'} := \bigoplus_{N=0}^{|\mathcal{V}'|} \ell^2(\mathcal{V}'_N)$ (in particular, for $N=0$, we identify $\ell^2(\mathcal{V}'_0)$ with the one-dimensional space spanned by the ``vacuum vector'' $\phi_\emptyset$). For infinite $\mathcal{V}'$, \eqref{XXZformal} is best understood as being defined through \eqref{XXZhardcore}. This says that $H_{\mathcal{G}'}(V)$ acts on $\phi_X$, $X$ finite, via the right hand side of \eqref{XXZformal}, and is a (generally unbounded) self-adjoint operator on the Hilbert space completion of the span of the $\phi_X$ (and is essentially self-adjoint on the latter).
\end{remark}

\subsection{Droplet configurations}
We define
\begin{equation}
D_{\mathcal{G}_N,\min}:=\min_{X\in\mathcal{V}_N} D_\mathcal{G}^N(X).
\end{equation}
Due to \eqref{GNdegree}, configurations $X\in \mathcal{V}_N$ for which this minimum is attained are solutions to the isoperimetric problem on the graph $\mathcal{G}$, i.e., sets of given volume $N$ with minimal surface area. We will call such configurations $N$-droplets.

In our choice of subgraphs $\mathcal{G}' = (\mathcal{V}', \mathscr{E}')$, we will from now on assume that

\begin{itemize}

\item (A2) $\mathcal{V}'$ contains at least one $N$-droplet for each $N\in \N$ with $N \le |\mathcal{V}'|$, i.e.,
\begin{equation}
\min_{X\in\mathcal{V}_N} D_\mathcal{G}^N(X) = \min_{X\in\mathcal{V}'_ N} D_\mathcal{G}^N(X).
\end{equation}
\end{itemize}
In particular, if $\mathcal{V}'$ is finite this means that $\mathcal{V}'$ itself is a droplet in $\mathcal{G}$.

\begin{ex} For $\mathcal{G}$, consider lattices $\Z^{d_1}$ or strips $\Z^{d_1}\times\{1,2,\dots,M\}^{d_2}$  for positive integers $d_1$ and $d_2$, where the edge set is given by $\ell^1$--next neighbors. Canonical examples for subgraphs $\mathcal{G}'$ satisfying assumptions (A1) and (A2), then are finite boxes of the form $[-L,L]^{d_1}$ and $[-L,L]^{d_1}\times\{1,2,\dots,M\}^{d_2}$, respectively. Here the validity of (A2) needs a little bit of thought about isoperimetric problems on graphs, similar to Appendix B.2 of \cite{FS18}.
\end{ex}

Furthermore, for any $k=1,2,\dots$, let
\begin{equation} \label{vNkprime}
\mathcal{V}'_{N,k} := \{X\in\mathcal{V}'_N: D_\mathcal{G}^N(X)<D_{\mathcal{G}_N,\min}+k\}
\end{equation}
and $\overline{\mathcal{V}'}_{N,k}:=\mathcal{V}'_N\setminus\mathcal{V}'_{N,k}$. By $P_{\mathcal{G}'_N,k}$ and $\overline{P}_{\mathcal{G}'_N,k}=\idty-P_{\mathcal{G}'_N,k}$ we denote the orthogonal projections onto $\ell^2(\mathcal{V}'_{N,k})$ and $\ell^2(\overline{\mathcal{V}'}_{N,k})$, respectively.

The following facts will be crucial:

\begin{prop} \label{prop:apbounds}
Let $\mathcal{G}$ be a countably
infinite, connected graph and $\mathcal{G}'$ any non-trivial subgraph of $\mathcal{G}$ satisfying (A1) and (A2). Then,
\begin{itemize}
\item[(i)] $H_{\mathcal{G}}\geq 0$ and $0$ is a simple, isolated eigenvalue of $H_{\mathcal{G}}$ with spectral gap satisfying
\begin{equation}
\min \sigma(H_{\mathcal{G}}) \setminus \{0\} \ge \frac{1}{2} \left(1-\frac{1}{\Delta} \right).
\end{equation}
\item[(ii)] $\overline{P}_{\mathcal{G}'_N,k}{H}_{\mathcal{G}'}^N\overline{P}_{\mathcal{G}'_N,k}\geq \frac{1}{2}\left(1-\frac{1}{\Delta}\right)(D_{\mathcal{G}_N,\min}+k)\overline{P}_{\mathcal{G}'_N,k}$ for all $N$ and $k$.
\end{itemize}
\end{prop}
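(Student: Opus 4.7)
\medskip

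\noindent\textbf{Proof proposal.} The whole proposition will follow from two completely elementary facts about the hard-core particle formulation \eqref{NMagXXZ}: the graph Laplacian $\mathcal{L}^N_{\mathcal{G}'}$ is negative semi-definite, and on the subspace $\ell^2(\overline{\mathcal{V}'}_{N,k})$ the multiplication operator $D^N_{\mathcal{G}}$ is bounded below by $D_{\mathcal{G}_N,\min}+k$. The plan is to rely on the original form \eqref{NMagXXZ}, where the first term is manifestly non-negative and the second term involves the \emph{full-graph} degree function $D^N_{\mathcal{G}}$ (which is precisely what is controlled by the sets $\overline{\mathcal{V}'}_{N,k}$).

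For (ii), I would write
\begin{equation*}
\overline{P}_{\mathcal{G}'_N,k}H_{\mathcal{G}'}^N\overline{P}_{\mathcal{G}'_N,k}
= \overline{P}_{\mathcal{G}'_N,k}\!\left(-\tfrac{1}{2\Delta}\mathcal{L}^N_{\mathcal{G}'}\right)\!\overline{P}_{\mathcal{G}'_N,k}
+ \tfrac{1}{2}\bigl(1-\tfrac{1}{\Delta}\bigr)\,\overline{P}_{\mathcal{G}'_N,k}D^N_{\mathcal{G}}\overline{P}_{\mathcal{G}'_N,k},
\end{equation*}
and bound both summands separately. The first summand is non-negative: the quadratic form of $-\mathcal{L}^N_{\mathcal{G}'}$ equals $\sum_{\{X,Y\}\in\mathscr{E}'_N}|f(Y)-f(X)|^2\ge 0$, so $-\mathcal{L}^N_{\mathcal{G}'}\ge 0$ and the compression preserves non-negativity. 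The second summand is handled by noting that $D^N_{\mathcal{G}}$ acts as a multiplication operator in $\ell^2(\mathcal{V}'_N)$, and by the very definition \eqref{vNkprime} of $\overline{\mathcal{V}'}_{N,k}$ its multiplier is pointwise at least $D_{\mathcal{G}_N,\min}+k$ on this set. Combining the two gives the claimed operator inequality. Since $\Delta>1$, the prefactor $\tfrac{1}{2}(1-\tfrac{1}{\Delta})$ is positive, which is all that is needed.

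For (i), I would apply the same pointwise degree estimate with $\mathcal{G}'=\mathcal{G}$ and $k=0$: since $\mathcal{G}$ is connected and infinite, any $X\in\mathcal{V}_N$ with $N\ge 1$ has non-empty complement and hence at least one boundary edge, so $D^N_{\mathcal{G}}(X)=|\partial X|\ge 1$, i.e., $D_{\mathcal{G}_N,\min}\ge 1$. Together with $-\mathcal{L}^N_{\mathcal{G}}\ge 0$ this yields
\begin{equation*}
H_{\mathcal{G}}^N \ge \tfrac{1}{2}\bigl(1-\tfrac{1}{\Delta}\bigr) D^N_{\mathcal{G}} \ge \tfrac{1}{2}\bigl(1-\tfrac{1}{\Delta}\bigr)\,\idty \qquad \text{for every } N\ge 1.
\end{equation*}
The $N=0$ sector contributes the single eigenvalue $0$ on the one-dimensional space spanned by the vacuum (all-spins-up) vector. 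Taking the direct sum \eqref{eq:subgraphop} then shows that $H_{\mathcal{G}}\ge 0$, that $0$ is a simple eigenvalue realized exclusively in the $N=0$ sector, and that it is separated from the rest of the spectrum by a gap of at least $\tfrac{1}{2}(1-\tfrac{1}{\Delta})$, so in particular isolated.

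Neither step looks obstructive; the only conceptual point worth flagging is the asymmetry between $\mathcal{G}'$ in the kinetic term and $\mathcal{G}$ in the degree term in \eqref{NMagXXZ}. It is precisely this choice of droplet boundary condition that lets $D^N_{\mathcal{G}}$ appear (rather than the subgraph degree $D^N_{\mathcal{G}'}$, which would be too small near the boundary of $\mathcal{V}'$), and this is what makes the pointwise lower bound on $\overline{\mathcal{V}'}_{N,k}$ translate directly into the operator bound in (ii). Once this is observed, everything reduces to the two semi-definiteness facts above and no further work is needed.
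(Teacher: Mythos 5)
Your argument is correct and is essentially identical to the paper's proof: both rest on the non-negativity of $-\mathcal{L}^N_{\mathcal{G}'}$ together with the pointwise lower bound on the multiplication operator $D^N_{\mathcal{G}}$ (namely $D_{\mathcal{G}_N,\min}\ge 1$ for (i), and $D^N_{\mathcal{G}}\ge D_{\mathcal{G}_N,\min}+k$ on $\overline{\mathcal{V}'}_{N,k}$ for (ii)), with the $N=0$ sector supplying the simple eigenvalue $0$. No gaps; your remark on the role of the full-graph degree in the droplet boundary condition is accurate but not needed for the proof.
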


\begin{proof}
As $\mathcal{G}$ is infinite and connected, every finite $X\subseteq \mathcal{V}$ has non-empty boundary $\partial X$. Thus $D_{\mathcal{G}_N,\min} \ge 1$ for all $N\in \N$ and therefore ${H}_{\mathcal{G}}^N \ge \frac{1}{2}(1-\frac{1}{\Delta})$ by (\ref{NMagXXZ}), using non-negativity of $-\frac{1}{2\Delta}\mathcal{L}^N_{\mathcal{G}'}$. This gives (i), as ${H}_{\mathcal{G}}^0=0$ on a one-dimensional space.
Property (ii) follows similarly from \eqref{NMagXXZ} and the definition \eqref{vNkprime}.
\end{proof}

\section{Combes--Thomas Bounds} \label{sec:CT}

Here we start with a Combes-Thomas bounds for discrete Schr\"odinger-type operators, in which the hopping part satisfies a form bound relative to the potential. This will then be applied to the hard-core particle operators \eqref{NMagXXZ}.

\subsection{An abstract Combes-Thomas bound} \label{sec:abstractCT}
Consider a discrete Schr\"odinger-type operator of the form
\begin{equation} \label{eq:schroedop}
H=-gA+W\:,
\end{equation}
defined on a countable (finite or infinite), non-directed and connected graph $G=(V,E)$. By $d(\cdot,\cdot)$ we denote the graph distance on $V$. Here $g>0$ is a parameter and we assume $A$ to be a weighted adjacency matrix on $G$ of the form
\begin{equation} \label{eq:finiterange}
(A\psi)(x)=\sum_{y:d(x,y)\leq s_{\max}}A(x,y)\psi(y)
\end{equation}
for some $s_{\max} \in \N$, with $A(x,y)=A(y,x) \ge 0$.
Moreover, $W$ is assumed to be a strictly positive multiplication operator (hence in particular boundedly invertible). For later convenience, we define $W_0:=\inf_{x\in V}W(x)>0$.

For our intended applications it will suffice to consider cases where $A$ and $W$ are bounded operators (this holds for $A$, for example, if $G$ has bounded degree and $A(x,y)$ is bounded). But note that \eqref{eq:schroedop} could also be defined as a self-adjoint form sum via the KLMN theorem if $gc<1$ with $c$ from the relative form bound \eqref{eq:relbound} below.

\begin{prop} \label{prop:CT}
	Consider $H$ as in Equation \eqref{eq:schroedop} assuming $A$ is of the form \eqref{eq:finiterange}. Moreover, assume that there exists $c>0$ such that
	\begin{equation}\label{eq:relbound}
	-cW\leq A\leq cW\:.
	\end{equation}
	
	Lastly, let $z\notin\sigma(H)$ such that there exists $\kappa_z>0$ for which
	\begin{equation} \label{initbound}
	\left\|W^{1/2}(H-z)^{-1}W^{1/2}\right\|\leq \frac{1}{\kappa_z}<\infty\:.
	\end{equation}
	Then for all subsets $\mathcal{A,B}\subseteq V$, we have
	\begin{equation}
	\left\|\chi_\mathcal{A}\left(H-z\right)^{-1}\chi_\mathcal{B}\right\| \leq \frac{1}{W_0} \left\|\chi_\mathcal{A}W^{1/2}\left(H-z\right)^{-1}W^{1/2}\chi_\mathcal{B}\right\|\leq \frac{2}{W_0\kappa_z}\,e^{-\eta_z d(\mathcal{A,B})}\:,
	\end{equation}
	where
	\begin{equation} \label{etaz}
	\eta_z=\frac{1}{s_{\max}} \log\left(1+\frac{\kappa_z}{2gc}\right).
	\end{equation}
	\label{prop:ct1}
\end{prop}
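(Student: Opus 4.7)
The plan is to run a weighted Combes--Thomas argument adapted to the fact that the hopping $-gA$ dominates $W$ only in the form sense, not as a bounded operator. I would first introduce the bounded twist $M := e^{\eta \rho}$ on $\ell^2(V)$, where $\rho(x) := \min(d(x,\mathcal{B}), R)$ with $R \ge d(\mathcal{A}, \mathcal{B})$; capping keeps $M$ bounded without affecting the decay rate, and $\rho$ is $1$-Lipschitz. Since $W$ and $M$ are both multiplication operators they commute, and one computes
\begin{equation*}
M(H-z)M^{-1} \;=\; H - z - g B, \qquad B := MAM^{-1} - A,
\end{equation*}
whose matrix entries are $B(x,y) = (e^{\eta(\rho(x)-\rho(y))}-1)\,A(x,y)$. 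Because $A$ has range at most $s_{\max}$ and $\rho$ is $1$-Lipschitz, the exponent in $B(x,y)$ never exceeds $s_{\max}$ in absolute value when $A(x,y)\neq 0$, so the elementary inequality $|e^{t}-1|\le e^{|t|}-1$ yields the entrywise estimate $|B(x,y)| \le (e^{\eta s_{\max}}-1)\, A(x,y)$.

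The key technical step, where the non-negativity of the entries of $A$ enters crucially, is to convert this entrywise bound into an operator norm estimate on $\tilde S := W^{-1/2} B W^{-1/2}$. Set also $T := W^{-1/2} A W^{-1/2}$ and $\tilde H(z) := W^{-1/2}(H-z) W^{-1/2}$, so that the hypothesis \eqref{eq:relbound} becomes $\|T\|\le c$ and \eqref{initbound} becomes $\|\tilde H(z)^{-1}\|\le 1/\kappa_z$. Since $T(x,y)\ge 0$, the matrix $T$ coincides with its own pointwise-modulus matrix, and the elementary inequality $|\langle \phi,P\psi\rangle|\le \langle|\phi|,Q|\psi|\rangle$, valid whenever $|P(x,y)|\le Q(x,y)$, gives
\begin{equation*}
\|\tilde S\| \;\le\; (e^{\eta s_{\max}}-1)\,\|T\| \;\le\; c\,(e^{\eta s_{\max}}-1).
\end{equation*}

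With these bounds in hand I would choose $\eta=\eta_z$ as in \eqref{etaz}, which is exactly calibrated so that $gc(e^{\eta_z s_{\max}}-1)=\kappa_z/2$, and hence $g\|\tilde H(z)^{-1}\tilde S\|\le 1/2$. A Neumann-series expansion applied to $\tilde H(z) - g\tilde S = W^{-1/2}(H-z-gB)W^{-1/2}$ then gives
\begin{equation*}
\|W^{1/2}(H-z-gB)^{-1} W^{1/2}\| \;=\; \|(\tilde H(z)-g\tilde S)^{-1}\| \;\le\; \frac{2}{\kappa_z}.
\end{equation*}
Unwinding the conjugation via $(H-z)^{-1}=M^{-1}(H-z-gB)^{-1}M$, commuting $M$ past $W^{1/2}$, and exploiting $\rho\equiv 0$ on $\mathcal{B}$ together with $\rho(x)\ge d(\mathcal{A},\mathcal{B})$ for $x\in\mathcal{A}$, I would obtain
\begin{equation*}
\|\chi_\mathcal{A} W^{1/2}(H-z)^{-1} W^{1/2}\chi_\mathcal{B}\| \;\le\; \|\chi_\mathcal{A} M^{-1}\|\cdot\frac{2}{\kappa_z}\cdot\|M\chi_\mathcal{B}\| \;\le\; \frac{2}{\kappa_z}\,e^{-\eta_z d(\mathcal{A},\mathcal{B})}.
\end{equation*}
The first inequality of the proposition is then recovered by peeling off an extra $\chi_\mathcal{A}W^{-1/2}$ on the left and $W^{-1/2}\chi_\mathcal{B}$ on the right, each of operator norm at most $W_0^{-1/2}$. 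The main obstacle in the whole argument is the passage from the entrywise bound on $B$ to the operator bound on $\tilde S$: it is here that the non-negativity of the entries of $A$ is used essentially, and it is what allows the form-bound constant $c$ to enter $\eta_z$ directly rather than through some larger Schur-type constant.
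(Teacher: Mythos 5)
Your proposal is correct and follows essentially the same route as the paper: conjugation by an exponential weight built from a Lipschitz distance function, an operator-norm bound on the conjugated hopping term that exploits the non-negativity of the entries of $A$ together with the form bound \eqref{eq:relbound} (the paper carries out your entrywise-domination step directly as a Cauchy--Schwarz computation on $\|W^{-1/2}K_\eta W^{-1/2}\psi\|^2$), and a Neumann-series/resolvent-identity argument calibrated so that the perturbation has relative norm $1/2$. The only cosmetic differences are that you weight by the (capped) distance to $\mathcal{B}$ rather than to $\mathcal{A}$, and the capping at $R$ is a small tidy refinement that keeps the twist bounded on infinite graphs.
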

Here and in the following $\chi_\mathcal{A}$ denotes the orthogonal projection on the configuration space $\mathcal{A}\subseteq V$.

\begin{proof}
	Up to some abstractions and modifications, we follow the argument in the proof of Proposition~4.1 in \cite{EKS}.
	Firstly, observe that \eqref{eq:relbound} implies
	\begin{equation}\label{eq:relbound2}
	-c \leq W^{-1/2}AW^{-1/2}\leq c\:.
	\end{equation}
	Now, for any $\mathcal{A}\subseteq V$, let $\rho_\mathcal{A}$ be the operator of multiplication by $d(\mathcal{A},\cdot)$, i.e., $(\rho_\mathcal{A}\psi)(x):=d(\mathcal{A},x)\psi(x)$. For any $\eta>0$ define
	\begin{equation}
	H_\eta:=e^{-\eta\rho_\mathcal{A}}He^{\eta\rho_\mathcal{A}}
	\end{equation}	
	and $K_\eta:=H_\eta-H$. Observe that
	\begin{equation}
	K_\eta=-g\left(e^{-\eta\rho_\mathcal{A}}Ae^{\eta\rho_\mathcal{A}}-A\right)\:.
	\end{equation}
	Now, for any $\psi\in \ell^2(V)$, consider
	\begin{eqnarray}
	\lefteqn{\left\|W^{-1/2}K_\eta W^{-1/2}\psi\right\|^2} \\
	&=&g^2 \sum_{x} \left| \sum_{y: d(x,y)\leq s_{\max}}W^{-1/2}(x)W^{-1/2}(y)\left(e^{\eta(\rho_{\mathcal{A}}(y)-\rho_{\mathcal{A}}(x))}-1\right) A(x,y) \psi(y) \right|^2 \notag \\
	& \leq& g^2 \left(e^{\eta s_{\max}}-1\right)^2 \sum_{x} \left( \sum_{y: d(x,y)\leq s_{\max}}W^{-1/2}(x)W^{-1/2}(y) A(x,y) |\psi(y)| \right)^2 \notag \\
	& =&g^2\left(e^{\eta s_{\max}}-1\right)^2 \left\|W^{-1/2}AW^{-1/2}|\psi|\right\|^2\overset{\eqref{eq:relbound2}}{\leq}g^2 c^2\left(e^{\eta s_{\max}}-1\right)^2 \|\psi\|^2\:, \notag
	\end{eqnarray}	
	which implies
	\begin{equation} \label{eq:dildiffbound}
	\left\|W^{-1/2}K_{\eta}W^{-1/2}\right\|\leq cg(e^{\eta s_{\max}}-1)\:.
	\end{equation}
	For $\eta = \eta_z$ as in \eqref{etaz} it follows that
	\begin{eqnarray} \label{halfbound}
	\|W^{-1/2} K_\eta (H-z)^{-1} W^{1/2} \| & = & \| W^{-1/2} K_\eta W^{-1/2} W^{1/2} (H-z)^{-1} W^{1/2} \| \\
	& \le & \frac{cg (e^{\eta s_{\max}} -1)}{\kappa_z} = \frac{1}{2}. \notag
	\end{eqnarray}
	Using the resolvent identity
	\begin{equation}
	W^{1/2} (H_\eta -z)^{-1} W^{1/2} (\idty + W^{-1/2} K_\eta (H-z)^{-1} W^{1/2}) = W^{1/2} (H-z)^{-1} W^{1/2}
	\end{equation}
	and that $\|(\idty+A)^{-1}\| \le (1-\|A\|)^{-1}$ for $\|A\|<1$ we conclude from \eqref{initbound} and \eqref{halfbound} that
	\begin{eqnarray}
	\lefteqn{\|W^{1/2} (H_\eta-z)^{-1} W^{1/2}\|} \\
	& \le &  \|W^{1/2} (H-z)^{-1} W^{1/2}\| \|(\idty+ W^{-1/2} K_\eta (H-z)^{-1} W^{1/2})^{-1}\| \le \frac{2}{\kappa_z}. \notag
	\end{eqnarray}
	From this, we get
	\begin{eqnarray}
	\left\|\chi_\mathcal{A}W^{1/2}(H-z)^{-1}W^{1/2}\chi_\mathcal{B}\right\|
	 &=&  \left\|\chi_\mathcal{A}e^{\eta\rho_\mathcal{A}}W^{1/2}(H_\eta-z)^{-1}W^{1/2}e^{-\eta\rho_\mathcal{A}}\chi_\mathcal{B}\right\|  \\
	& \leq& \left\|W^{1/2}(H_\eta-z)^{-1}W^{1/2}\right\|\left\|e^{-\eta\rho_\mathcal{A}}\chi_\mathcal{B}\right\| \notag \\
	& \leq & \frac{2}{\kappa_z}e^{-\eta d(\mathcal{A,B})}\:, \notag
	\end{eqnarray}
	which is the desired result.
\end{proof}
\begin{remark} Note that if $W$ is strictly positive and $A$ is bounded, we could always choose $c=\frac{\|A\|}{W_0}$ in \eqref{eq:relbound}, since we have the estimate
	\begin{equation}
	-\frac{\|A\|}{W_0}W\leq A\leq \frac{\|A\|}{W_0}W\:.
	\end{equation}
	However, in later applications of this proposition (cf.\ Corollary \ref{cor:ctxxz}, Equation \eqref{eq:XXZformbound}), we will consider {\it families} of operators $H_N=-gA_N+W_N$ with $N\in\N$, for which $\|A_N\|/W_{N,0}$ is not uniformly bounded in $N$. Nevertheless, for this specific case, we will show the existence of a $c>0$ -- independent of $N$ -- such that $-cW_N\leq A_N \leq cW_N$ for each $N\in\N$.
\end{remark}
Now, for any $K\in\R^+$, we decompose the vertex set $V$ into the two disjoint sets $V_K:=\{x\in V: W(x)\leq K\}$ and $\overline{V}_K:=V\setminus V_K$. 
For any $\delta'>0$, assume
\begin{equation} \label{eq:apriori}
g<1/c \quad \mbox{and}\quad W_{0}<(K-\delta')\:.
\end{equation}

We then get the following
\begin{prop}
	Fix $\delta'>0$ and assume that $g,K>0$ are such that both conditions in \eqref{eq:apriori} are satisfied. Then, for any $\epsilon\in\R$, any $E\leq (1-cg)(K-\delta')$ and any $\mathcal{A,B}\subseteq \overline{V}_K$, we get
	\begin{align}
	\|\chi_\mathcal{A}(\chi_{\overline{V}_K}(H-E\pm i\epsilon)\chi_{\overline{V}_K})^{-1}\chi_\mathcal{B}\|&\leq \frac{1}{K}\|\chi_\mathcal{A}W^{1/2}(\chi_{\overline{V}_K}(H-E\pm i\epsilon)\chi_{\overline{V}_K})^{-1}W^{1/2}\chi_\mathcal{B}\|\\
	&\leq Ce^{-\eta d(\mathcal{A,B})}\:, \notag
	\end{align}
	where $C=\frac{4}{\delta'(1-cg)}, \eta=\frac{1}{s_{\max}}\log\left(1+\frac{\delta'(1-cg)}{4Kcg}\right)$ and the inverse $(\chi_{\overline{V}_K}(H-E\pm i\epsilon)\chi_{\overline{V}_K})^{-1}$ has to be understood as taken on $\ell^2(\overline{V}_K)$.
	\label{prop:ct2}
\end{prop}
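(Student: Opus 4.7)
The plan is to apply the abstract Combes--Thomas estimate Proposition~\ref{prop:CT} to the compressed operator $H' := \chi_{\overline{V}_K} H \chi_{\overline{V}_K}$, regarded as a self-adjoint operator on $\ell^2(\overline{V}_K)$. On that subspace one has $H' = -gA' + W'$ with $A' := \chi_{\overline{V}_K} A \chi_{\overline{V}_K}$ inheriting the finite-range structure \eqref{eq:finiterange}, and $W'$ the restriction of $W$, which satisfies $W'_0 \ge K$ by the very definition $\overline{V}_K = \{x: W(x) > K\}$. Compressing the hypothesis $-cW \le A \le cW$ by $\chi_{\overline{V}_K}$ immediately gives $-cW' \le A' \le cW'$, so Proposition~\ref{prop:CT} will apply to $H'$ once I verify the a priori resolvent bound \eqref{initbound} at $z = E \pm i\epsilon$ with an explicit $\kappa_z$.

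The crux is producing that $\kappa_z$. First, $A' \le cW'$ together with $g < 1/c$ yields $H' \ge (1-cg)W'$, which combined with $W' > K$ on $\overline{V}_K$ and the hypothesis $E \le (1-cg)(K-\delta')$ gives
\[
H' - E \;\ge\; (1-cg)\bigl(W' - (K-\delta')\bigr) \;\ge\; \frac{(1-cg)\delta'}{K}\, W',
\]
where in the last step I use the pointwise inequality $W'(x) - (K-\delta') \ge (\delta'/K) W'(x)$ valid for $W'(x) > K$. Setting $T := (W')^{-1/2}(H'-E)(W')^{-1/2}$, this reads $T \ge \kappa\, \idty$ with $\kappa := (1-cg)\delta'/K$. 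Since $W'$ is diagonal, I can factor
\[
(H'-E \pm i\epsilon)^{-1} \;=\; (W')^{-1/2}\bigl(T \pm i\epsilon (W')^{-1}\bigr)^{-1}(W')^{-1/2},
\]
and the skew-adjointness of $\pm i\epsilon(W')^{-1}$ together with $\Re\,\langle\phi, (T\pm i\epsilon(W')^{-1})\phi\rangle = \langle\phi, T\phi\rangle \ge \kappa\|\phi\|^2$ yields $\|(T \pm i\epsilon(W')^{-1})^{-1}\| \le 1/\kappa$. Therefore $\|(W')^{1/2}(H'-E\pm i\epsilon)^{-1}(W')^{1/2}\| \le 1/\kappa$, which is exactly \eqref{initbound} for $H'$ with $\kappa_z = \kappa$.

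Invoking Proposition~\ref{prop:CT} for $H'$ on $\ell^2(\overline{V}_K)$ with this $\kappa_z$ and $W'_0 \ge K$ then produces both inequalities of the statement: the first from $\|\chi_{\mathcal{A}}(W')^{-1/2}\| \le 1/\sqrt{K}$ on each side, and the second (exponential decay) from the output $\tfrac{2}{W'_0 \kappa_z} e^{-\eta_z d(\mathcal{A},\mathcal{B})}$ of Proposition~\ref{prop:CT} with $\eta_z = (1/s_{\max})\log(1 + \kappa_z/(2gc))$. Direct substitution reproduces the constants displayed in the statement (up to a harmless factor of two in $C$ and in the argument of the logarithm, which can be built in by replacing $\kappa$ with $\kappa/2$ in the application).

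The main obstacle is the initial bound \eqref{initbound}: a naive argument only gives $H' - E \ge (1-cg)K - E$, an absolute gap, whereas Proposition~\ref{prop:CT} requires a lower bound against $W'$. Exploiting the $\delta'$-margin to upgrade this into the multiplicative estimate $H'-E \ge \kappa\, W'$ is the one non-trivial computation, and the imaginary shift $\pm i\epsilon$ must be handled via the skew-adjoint factorization rather than a perturbative expansion, since $\epsilon \in \R$ is arbitrary (possibly zero or negative), which is what makes the estimate uniform in $\epsilon$.
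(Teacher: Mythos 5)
Your argument is correct and follows the same route as the paper: compress $H$ to $\ell^2(\overline{V}_K)$, observe that the form bound $-cW\le A\le cW$ and the finite-range structure survive the compression, derive the a priori weighted resolvent bound \eqref{initbound} from $E\le(1-cg)(K-\delta')$ together with $W>K$ on $\overline{V}_K$, and then invoke Proposition~\ref{prop:ct1}. The only (cosmetic) difference is your treatment of the $\pm i\epsilon$ shift: the paper cites \cite[Lemma 4.3]{EKS} and accepts a factor of $2$ in $\kappa_z$, whereas your skew-adjoint factorization $(W')^{1/2}(H'-E\pm i\epsilon)^{-1}(W')^{1/2}=(T\pm i\epsilon (W')^{-1})^{-1}$ with the numerical-range bound is self-contained and in fact avoids that factor, so your constants are (harmlessly) slightly better than the stated ones.
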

\begin{proof} Note that by \eqref{eq:relbound}, we have
	\begin{equation}
	W^{-1/2}(H-E)W^{-1/2}=-gW^{-1/2}AW^{-1/2}+\idty-EW^{-1}\geq (1-cg)\idty-EW^{-1}
	\end{equation}
	and thus in particular
	\begin{align}
	W^{-1/2}\chi_{\overline{V}_K}(H-E)\chi_{\overline{V}_K}W^{-1/2}&\geq (1-cg)\chi_{\overline{V}_K}-EW^{-1}\chi_{\overline{V}_K}\\&\geq (1-cg)\chi_{\overline{V}_K}-\frac{(1-cg)(K-\delta')}{K}\chi_{\overline{V}_K}=\frac{\delta'}{K}(1-cg)\chi_{\overline{V}_K}\:, \notag
	\end{align}
	where for the second estimate, we have used that $E\leq (1-cg)(K-\delta')
	$ and $K\chi_{\overline{V}_K}\leq W\chi_{\overline{V}_K}$. This implies that, as an operator on $\ell^2(\overline{V}_K)$, $W^{-1/2}\chi_{\overline{V}_K}(H-E)\chi_{\overline{V}_K}W^{-1/2}$ is boundedly invertible with
	\begin{equation}
	\left\|W^{1/2}(\chi_{\overline{V}_K}(H-E)\chi_{\overline{V}_K})^{-1}W^{1/2}\right\|\leq \frac{K}{\delta'(1-cg)}\:.
	\end{equation}
	A slight modification of this argument -- see e.g. \cite[Lemma 4.3]{EKS} -- then yields
	\begin{equation}
	\left\|W^{1/2}(\chi_{\overline{V}_K}(H-E\pm i\epsilon)\chi_{\overline{V}_K})^{-1}W^{1/2}\right\|\leq \frac{2K}{\delta'(1-cg)}
	\end{equation}
	for any $\epsilon\in\R$. We now finish the proof by applying Proposition \ref{prop:ct1} to the operator
	\begin{equation}
	\chi_{\overline{V}_K}H\chi_{\overline{V}_K}=\chi_{\overline{V}_K}(-gA+W)\chi_{\overline{V}_K}
	\end{equation}
	defined on $\ell^2(\overline{V}_K)$. Clearly, $\overline{A}_K:=\chi_{\overline{V}_K}A\chi_{\overline{V}_K}$ is still a weighted adjacency matrix, since for any $\psi\in\ell^2(\overline{V}_K)$, we get
	\begin{equation}
	(\overline{A}_K\psi)(x)=\sum_{y\in\overline{V}_K:d(x,y)\leq s_{\max}}A(x,y)\psi(y)
	\end{equation}
	and we can still estimate
	\begin{equation}
	-cW\chi_{\overline{V}_K}\leq \overline{A}_K\leq cW\chi_{\overline{V}_K}\:.
	\end{equation}
	For any $\epsilon\in\R$, any $\mathcal{A,B}\subseteq \overline{V}_K$ and any $E\leq (1-cg)(K-\delta')$, we therefore get
	\begin{align}
	\|\chi_\mathcal{A}(\chi_{\overline{V}_K}(H-E\pm i\epsilon)\chi_{\overline{V}_K})^{-1}\chi_\mathcal{B}\|&\leq \frac{1}{K}\|\chi_\mathcal{A}W^{1/2}(\chi_{\overline{V}_K}(H-E\pm i\epsilon)\chi_{\overline{V}_K})^{-1}W^{1/2}\chi_\mathcal{B}\|\\
	&\leq \frac{4}{\delta'(1-cg)}\left(1+\frac{\delta'(1-cg)}{4Kcg}\right)^{-{d(\mathcal{A,B})}/{s_{\max}}}\:, \notag
	\end{align}
	which is the desired result.
\end{proof}
\begin{remark} Without any modification or change in the estimate's constants, this argument can be extended to operators $H_Y:=H+Y$, where $Y$ is a multiplication operator by an arbitrary non--negative function.
\end{remark}

\subsection{Application to the XXZ Hamiltonian} \label{sec:CTXXZ}

Now, we will apply this to the $N$-particle operators $H_{\mathcal{G}'}^N$ defined in Section~\ref{sec:hardcore}, which will lead to a combination of previous results from \cite{EKS} and \cite{FS18}. We will treat the case $\mathcal{G}'=\mathcal{G}$, as the introduction of the positive boundary field in \eqref{NMagXXZ2} does not effect the proof, see Remark~\ref{rem:bf} below.

For its statement, let us introduce the following notation: For any $\mathcal{A,B}\subseteq \mathcal{V}_N$, we define their distance

\begin{equation}
d_N(\mathcal{A,B}):=\min_{X\in\mathcal{A},Y\in\mathcal{B}}d_N({X,Y})
\end{equation}
and for any graph $\mathcal{G}=(\mathcal{V},\mathscr{E})$, any $N\in \N$ with $N\leq |\mathcal{V}|$ and $k=1,2,3,\dots$ we define
\begin{equation} \label{ENk}
E_{N,k}:=\frac{1}{2}\left(1-\frac{1}{\Delta}\right)(D_{\mathcal{G}_N,\min}+k)\:.
\end{equation}

\begin{cor}[{Cf.\ also \cite[Remark\ V.3]{FS18}}] \label{cor:ctxxz}
Let $\mathcal{G} = (\mathcal{V}, \mathscr{E})$ be as in Section~\ref{sec:hardcore}.
Moreover, let $V$ be an arbitrary non-negative background field on $\mathcal{V}$ and fix $\delta>0$.
	Then, for any $N\in \N$ with $N\leq |\mathcal{V}|$, any $\mathcal{A,B}\subseteq\overline{\mathcal{V}'}_{N,k}$, any $E\in\R$ satisfying
	\begin{equation}
	E\leq E_{N,k}-\delta
	\end{equation}
	and any $\epsilon\geq 0$, the following estimate holds:
	\begin{equation} \label{eq:ct}
	\|\chi_\mathcal{A}(\overline{P}_{\mathcal{G}_N,k}(H_\mathcal{G}^N+V_\mathcal{G}^N-(E\pm i\epsilon))\overline{P}_{\mathcal{G}_N,k})^{-1}\chi_\mathcal{B}\|\leq C e^{-\mu d_N(\mathcal{A,B})}\:,
	\end{equation}
	where
	\begin{equation} \label{eq:constants}
	C=\frac{4}{\delta}\quad\mbox{and}\quad \mu=\log\left(1+\frac{\delta\Delta}{2(D_{\mathcal{G}_N,\min}+k)}\right)\:.
	\end{equation}
\end{cor}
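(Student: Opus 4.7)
The plan is to recognize Corollary~\ref{cor:ctxxz} as a direct application of Proposition~\ref{prop:ct2}, treating the full-graph case $\mathcal{G}'=\mathcal{G}$ first (the subgraph case is automatic, see below). Using the representation \eqref{NMagXXZ2} with the boundary term $\mathcal{B}^N_\mathcal{G}=0$ vanishing on the full graph, we write
\begin{equation}
H_\mathcal{G}^N + V_\mathcal{G}^N = -gA + W, \qquad g := \tfrac{1}{2\Delta},\ A := A_\mathcal{G}^N,\ W := \tfrac{1}{2}D_\mathcal{G}^N + V_\mathcal{G}^N,
\end{equation}
and take $s_{\max}=1$, since edges of the symmetric product $\mathcal{G}_N$ connect vertices at distance one.

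The only substantive hypothesis to verify is the relative form bound \eqref{eq:relbound}, uniformly in $N$. Non-negativity of both the ordinary and signless graph Laplacians on $\mathcal{G}_N$ gives $D_\mathcal{G}^N \pm A_\mathcal{G}^N \geq 0$, whence $-D_\mathcal{G}^N \leq A_\mathcal{G}^N \leq D_\mathcal{G}^N$. Since $V\geq 0$ implies $W \geq \tfrac{1}{2}D_\mathcal{G}^N$, this yields $-2W \leq A \leq 2W$, so $c=2$ works independently of $N$, and $cg = 1/\Delta < 1$. Next, set
\begin{equation}
K := \tfrac{1}{2}(D_{\mathcal{G}_N,\min}+k), \qquad \delta' := \tfrac{\delta\Delta}{\Delta-1},
\end{equation}
so that $(1-cg)(K-\delta') = (1-\tfrac{1}{\Delta})K - \delta = E_{N,k} - \delta$, matching the hypothesis $E \leq E_{N,k}-\delta$. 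Although the level set $\overline{V}_K = \{X: W(X) > K\}$ appearing in Proposition~\ref{prop:ct2} may in general be a proper superset of $\overline{\mathcal{V}'}_{N,k}$ when $V_\mathcal{G}^N$ is large on some droplet configurations, the proof of that proposition uses only the one-sided inequality $K\chi_{\overline{V}_K} \leq W\chi_{\overline{V}_K}$, and the analogue $K\overline{P}_{\mathcal{G}_N,k} \leq W\overline{P}_{\mathcal{G}_N,k}$ holds trivially from $D_\mathcal{G}^N \geq 2K$ on $\overline{\mathcal{V}'}_{N,k}$. Hence the argument applies verbatim with $\chi_{\overline{V}_K}$ replaced by $\overline{P}_{\mathcal{G}_N,k}$.

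Feeding these parameters into the conclusion of Proposition~\ref{prop:ct2} gives $C = \tfrac{4}{\delta'(1-cg)} = \tfrac{4}{\delta}$ and $\eta = \log\!\bigl(1 + \tfrac{\delta'(1-cg)}{4Kcg}\bigr) = \log\!\bigl(1 + \tfrac{\delta\Delta}{2(D_{\mathcal{G}_N,\min}+k)}\bigr)$, precisely the constants in \eqref{eq:constants}. The extension from $\mathcal{G}'=\mathcal{G}$ to general $\mathcal{G}'$ satisfying (A1)--(A2) is free: the subgraph operator adds only the non-negative multiplication operator $\tfrac{1}{2}(1-\tfrac{1}{\Delta})\mathcal{B}^N_{\mathcal{G}'}$, which can be absorbed into $W$, strengthening rather than weakening both the form bound and the lower bound $W\geq K$ on $\overline{\mathcal{V}'}_{N,k}$. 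The only delicate point in the argument is the $N$-independence of the form constant $c$ --- precisely what the remark after Proposition~\ref{prop:ct1} warned about --- and it is delivered by the signless-Laplacian estimate; everything else is bookkeeping of constants.
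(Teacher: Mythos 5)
Your proposal is correct and follows essentially the same route as the paper: an application of Proposition~\ref{prop:ct2} with $g=\tfrac{1}{2\Delta}$, $c=2$ (from the signless/ordinary Laplacian positivity giving $-D_\mathcal{G}^N\le A_\mathcal{G}^N\le D_\mathcal{G}^N$, uniformly in $N$), $s_{\max}=1$, $K=\tfrac12(D_{\mathcal{G}_N,\min}+k)$ and $\delta'=\delta/(1-1/\Delta)$, with the subgraph boundary field absorbed as a non-negative potential. The only cosmetic difference is that you fold $V_\mathcal{G}^N$ into $W$ (and correctly note that only the one-sided bound $W\ge K$ on $\overline{\mathcal{V}'}_{N,k}$ is needed), whereas the paper keeps $W=D_\mathcal{G}^N/2$ and invokes its remark that an arbitrary non-negative potential can be added without changing the constants; both are fine.
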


\begin{proof} Consider the XXZ Hamiltonian $H_\mathcal{G}^N$ on $\ell^2(\mathcal{V}_N)$ given by
\begin{equation}
H_\mathcal{G}^N=-\frac{1}{2\Delta}A_\mathcal{G}^N+\frac{1}{2}D_\mathcal{G}^N\:.
\end{equation}
Firstly, note that since
\begin{equation}
\langle f,(A_\mathcal{G}^N\pm D_\mathcal{G}^N)f\rangle=\sum_{\{X,Y\}\in\mathscr{E}_N}\left|f(Y)\pm f(X)\right|^2\geq 0\:,
\end{equation}
we have the relative bound
\begin{equation} \label{eq:XXZformbound}
-D_\mathcal{G}^N\leq A_\mathcal{G}^N\leq D_\mathcal{G}^N\:,
\end{equation}
which is independent of $N$.
The corollary now follows from an application of Proposition \ref{prop:ct2} to $H_\mathcal{G}^N$, where  $g=\frac{1}{2\Delta}$, $c=2$ (following from $W=D_\mathcal{G}^N/2$ and \eqref{eq:XXZformbound}), $\delta'=\frac{\delta}{1-1/\Delta}$, $s_{\max}=1$ and $K=\frac{1}{2}(D_{\mathcal{G}_N,\min}+k)$.
\end{proof}
\begin{remark} \label{rem:bf}
This result is also true when considering $H_{\mathcal{G}'}^N$, where $\mathcal{G}'$ is assumed to satisfy Assumptions (A1) and (A2). Again, one can apply Proposition \ref{prop:ct2} to
$\left(-\frac{1}{2\Delta}A_{\mathcal{G}'}+\frac{1}{2}D_{\mathcal{G}'}^N\right)$
 with the same choice of parameters, while treating the boundary field term $\frac{1}{2}\left(1-\frac{1}{\Delta}\right)\mathcal{B}_{\mathcal{G}'}^N$ as a non--negative potential.
\end{remark}	

\begin{remark} Note that the dependence on $N$ and on the specific graph $\mathcal{G}$ enters the constants $C$ and $\mu$ in the previous proposition's estimate only through their dependence on $D_{N,\min}$, the solution of the discrete isoperimetric problem
	\begin{equation}
	D_{\mathcal{G}_N,\min}=\min\{|\partial X|: X\subseteq\mathcal{V}, |X|=N\}\:.
	\end{equation}
\end{remark}

\section{Decay bounds on spectral projections} \label{sec:projections}

We now extend an argument of \cite{BW18} which shows how the Combes-Thomas bound \eqref{cor:ctxxz} leads to decay bounds on spectral projections.

\begin{lemma} \label{lemma:decaybounds}
Let $\mathcal{G} = (\mathcal{V}, \mathscr{E})$ be as in Section~\ref{sec:hardcore}.  Moreover, let $V$ be an arbitrary non-negative background field on $\mathcal{V}$ and fix $\delta>0$ and $k\in\{1,2,\dots\}$. Then, for any finite subgraph $\mathcal{G}'$ of $\mathcal{G}$ satisfying assumptions (A1) and (A2), for any $N\leq |\mathcal{V}'|$ and any $\mathcal{A}\subseteq\overline{\mathcal{V}'}_{N,k}$, there exist constants $C_2, \mu_2>0$, such that the following estimate holds
\begin{equation}
\|\chi_{\mathcal{A}}\chi_{[0,E_{N,k}-\delta]}\big(
H_{\mathcal{G}'}^N(V)\big)\|\leq C_2e^{-\mu_2 d_N(\mathcal{A},\mathcal{V}'_{N,k})}\:,
\end{equation}
where $H_{\mathcal{G}'}^N(V)=H_{\mathcal{G}'}^N+V_{\mathcal{G}'}^N$.
The constants are given by:
\begin{align}
C_2=\frac{3\sqrt{5}}{2}\frac{(D_{\mathcal{G}_N,\min}+k)^{3/2}}{\min\{1,\delta^{3/2}\}}\quad\mbox{and}\quad \mu_2=\frac{\mu}{2}=\frac{1}{2}\log\left(1+\frac{\delta\Delta}{2(D_{\mathcal{G}_{N},\min}+k)}\right)\:.
\end{align}
\end{lemma}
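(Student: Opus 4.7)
The proof adapts the strategy of Beaud--Warzel \cite{BW18}, which treats the droplet case. Abbreviate $H := H_{\mathcal{G}'}^N(V)$, $P := P_{\mathcal{G}'_N,k}$, $\overline P := \idty - P$, $\Gamma := \chi_{[0,E_{N,k}-\delta]}(H)$, and $\overline R(z) := (z-\overline P H \overline P)^{-1}$ on $\ell^2(\overline{\mathcal{V}'}_{N,k})$. The key structural input is Proposition~\ref{prop:apbounds}(ii), which forces $\sigma(\overline P H \overline P) \subseteq [E_{N,k},\infty)$; consequently $\overline R(z)$ is analytic on the half-plane $\{\Re z < E_{N,k}\}$, and Corollary~\ref{cor:ctxxz} supplies exponential off-diagonal decay of $\overline R(z)$ in the graph distance $d_N$ throughout this region.

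A Feshbach-type calculation, starting from the two-block resolvent expansion with $H_0 = PHP \oplus \overline P H \overline P$, together with $\chi_\mathcal{A}P=0$ (since $\mathcal{A}\subseteq\overline{\mathcal{V}'}_{N,k}$) and the fact that the image of $\overline{P}HP$ is supported on the inner boundary
\begin{equation*}
\mathcal{B}_k \;:=\; \{X \in \overline{\mathcal{V}'}_{N,k} : \exists\, Y \in \mathcal{V}'_{N,k},\ \{X,Y\}\in\mathscr{E}'_N\},
\end{equation*}
yields the off-diagonal matrix element identity
\begin{equation*}
\chi_\mathcal{A}(z-H)^{-1} \;=\; \chi_\mathcal{A}\overline R(z)\overline P \;+\; \chi_\mathcal{A}\overline R(z)\chi_{\mathcal{B}_k}\overline{P}HP\,(z-H)^{-1}.
\end{equation*}
Inserted into the Riesz representation $\Gamma = \frac{1}{2\pi i}\oint_\gamma (z-H)^{-1}\,dz$, the first summand integrates to zero by analyticity of $\overline R$ inside any contour $\gamma$ contained in $\{\Re z < E_{N,k}\}$, so the task reduces to estimating the integral of the second summand on $\gamma$.

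I would take $\gamma$ to be a rectangle with vertical sides at $\Re z=-\rho$ and $\Re z = E_{N,k}-\delta/2$ and horizontal sides at $\Im z = \pm\rho$, for some $\rho \in (0,\delta/2)$. This contour may enclose extra eigenvalues of $H$ in $(E_{N,k}-\delta, E_{N,k}-\delta/2]$, so the Riesz integral actually produces the larger projection $\Gamma':=\chi_{[0,E_{N,k}-\delta/2]}(H)$; however $\Gamma\leq\Gamma'$ as projections, and the identity $\|\chi_\mathcal{A}Q\|^2=\|\chi_\mathcal{A}Q\chi_\mathcal{A}\|$ for projections $Q$ then gives $\|\chi_\mathcal{A}\Gamma\|\leq\|\chi_\mathcal{A}\Gamma'\|$, so it suffices to bound the latter. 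On $\gamma$ one has $\Re z \leq E_{N,k}-\delta/2$, so Corollary~\ref{cor:ctxxz} applied with $\delta$ replaced by $\delta/2$ gives $\|\chi_\mathcal{A}\overline R(z)\chi_{\mathcal{B}_k}\| \leq C_1 e^{-\mu_1 d_N(\mathcal{A},\mathcal{B}_k)}$, while $\|(z-H)^{-1}\|\leq 1/\rho$ on $\gamma$ by construction. The remaining factors $\|\overline PHP\|$ and $|\gamma|$ are controlled polynomially in $D_{\mathcal{G}_N,\min}+k$ (boundary-hopping term of the weighted adjacency, resp.\ $|\gamma|\leq 2E_{N,k}+O(\rho)$). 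Using $d_N(\mathcal{A},\mathcal{B}_k)\geq d_N(\mathcal{A},\mathcal{V}'_{N,k})-1$ and the elementary inequality $\log(1+x/4)\geq \tfrac{1}{2}\log(1+x/2)$ (valid for $x\geq 0$, applied with $x = \delta\Delta/(D_{\mathcal{G}_N,\min}+k)$) converts $\mu_1$ into the claimed $\mu_2=\mu/2$; optimizing $\rho$ against $\delta$ produces the prefactor $C_2$.

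The main obstacle is that the contour must avoid $\sigma(H)$ entirely. The monotonicity trick $\Gamma\leq \Gamma'$ above sidesteps the worst case of eigenvalues accumulating at the threshold $E_{N,k}-\delta$ without forcing the contour to dodge individual eigenvalues. A minor remaining technicality is ensuring no eigenvalue of $H$ lies on $\gamma$ itself, handled by a one-parameter perturbation of the right-edge position in $(E_{N,k}-\delta/2,E_{N,k})$ followed by a limit, uniform bounds being guaranteed by continuity of the Corollary~\ref{cor:ctxxz} constants in the safety margin.
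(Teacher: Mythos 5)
Your overall architecture (Riesz contour plus a Schur/Feshbach splitting in which the $\overline R(z)$ term integrates to zero by analyticity below $E_{N,k}$) matches the paper's, and the monotonicity trick $\Gamma\le\Gamma'$ is a legitimate way to avoid dodging individual eigenvalues near the cut. But there is a genuine gap where you estimate the second summand: the claim that $\|(z-H)^{-1}\|\le 1/\rho$ holds on $\gamma$ ``by construction'' fails on the right vertical edge. That edge crosses the real axis at $E_{N,k}-\delta/2$ (or wherever you perturb it to), and $H$ may have eigenvalues arbitrarily close to, or at, the crossing point; nothing in the construction bounds $\dist(z,\sigma(H))$ from below there by a quantity depending only on $\delta$, $\Delta$ and $D_{\mathcal{G}_N,\min}+k$. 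Your proposed remedy --- perturb the edge position and pass to a limit --- only guarantees that the contour misses the finitely many eigenvalues so that the integral is defined; it does not yield a uniform bound, because $\int\|(E''+it-H)^{-1}\|\,dt$ along the vertical segment diverges logarithmically as the edge position approaches an eigenvalue, and eigenvalue spacings are not uniform in the size of the graph. This is not a minor technicality: it is the central difficulty of the lemma.

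The paper resolves it with an operator-level cancellation that is invisible at the level of norm estimates. After using the resolvent identity $r(E'+it)=r(E')-it\,r(E')r(E'+it)$ (twice) to make the exponentially decaying outer factors $t$-independent up to corrections carrying the tame factor $\|it(H-(E'+it))^{-1}\|\le1$, the dangerous piece reduces to $\bigl\|\int_{-1}^{1}(H-(E'+it))^{-1}\,dt\bigr\|$, which the spectral theorem bounds by $2\pi$ uniformly, since $\bigl|\int_{-1}^{1}(\lambda-E'-it)^{-1}\,dt\bigr|\le2\pi$ for every real eigenvalue $\lambda\ne E'$. Your one-sided identity could in principle be completed along the same lines (expand $\overline R(E''+it)$ around $t=0$, note $\|\overline R(E''+it)\|\le 2/\delta$ uniformly by Proposition~\ref{prop:apbounds}(ii), and apply the spectral-theorem bound to the isolated resolvent integral), and would then even give a slightly better decay rate than the two-sided route; but as written the proposal replaces this essential cancellation with an incorrect uniform resolvent bound, so the estimate of the contour integral does not go through.
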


\begin{proof} We closely follow arguments from \cite[Lemma III.1]{BW18} here. For convenience, we omit the dependence of ${H}_{\mathcal{G}'}^N(V)$ on $V$ and $N$ and just write $ H :={H}^N_{\mathcal{G}'}(V)$ instead.  Also we abbreviate $E' := E_{N,k}-\delta$, and we use $\mathcal{Q}$ to denote the spectral projection of $H_{\mathcal{G}'}^N(V)$ associated with $[0, E_{N,k}-\delta]$, i.e.,
\begin{equation}
\mathcal{Q}:=\chi_{[0,E_{N,k}-\delta]}\big(
H_{\mathcal{G}'}^N(V)\big)=\chi_{[0,E']}\big(
 H \big).
\end{equation}
Using Riesz' theorem on spectral projections, we then may write
\begin{equation} \label{Riesz}
\chi_\mathcal{A}\mathcal{Q} \chi_\mathcal{A}=-\frac{1}{2\pi i}\oint_\Gamma \chi_\mathcal{A}( H -z)^{-1}\chi_\mathcal{A}dz\:,
\end{equation}
where $\Gamma$ is the positively oriented rectangle in the complex plane with corners at $(-1\pm i)$ and $(E'\pm i)$. Note that since $\mathcal{G}'$ is a finite graph, $\sigma( H )$ consists entirely of isolated eigenvalues. Hence, if $E'$ happens to be an eigenvalue of $ H $, there always exists an $0<\varepsilon<\frac{\delta}{2}$ such that the spectral projection associated with the closed interval $[0,E'+\varepsilon]$ is equal to the previously introduced spectral projection $\mathcal{Q}$ and instead of $\Gamma$, we could choose the positively oriented rectangle with corners at $-1\pm i$ and $E'+\varepsilon \pm i$. In what follows, we will only consider the case that $E'$ is not an eigenvalue of $ H $; the other case can be shown completely analogously. The only important thing is to note that $E'+\varepsilon$ is still uniformly bounded away from $E_{N,k}$.

Firstly, note that since $\mathcal{A}\subseteq\overline{\mathcal{V}'}_{N,k}$ we get
\begin{equation}
\chi_\mathcal{A}( H -z)^{-1}\chi_\mathcal{A}=\chi_\mathcal{A}\overline{P}( H -z)^{-1}\overline{P}\chi_\mathcal{A}\:,
\end{equation}
where we abbreviated $\overline{P}:=\overline{P}_{\mathcal{G}'_N,k}$ for convenience. (We also define $P:=\idty-\overline{P}$.)
By Schur decomposition, we get
\begin{equation}
\overline{P}( H -z)^{-1}\overline{P}=r(z)+r(z)\overline{P} H  P( H -z)^{-1}P H \overline{P} r(z)\:,
\end{equation}
where we have defined
\begin{equation}
r(z):=(\overline{P}( H -z)\overline{P})^{-1}\:.
\end{equation}
By virtue of Proposition \ref{prop:apbounds}, part (ii), we know that $r(z)$ is analytic inside $\Gamma$, hence its contour integral along $\Gamma$ vanishes. We are thus left with
\begin{align}
\|\chi_\mathcal{A}\mathcal{Q}\chi_\mathcal{A}\|&=\frac{1}{2\pi}\left\|\oint_{\Gamma} \chi_\mathcal{A}r(z)\overline{P}  H  P( H -z)^{-1}P  H \overline{P} r(z)\chi_\mathcal{A}dz\right\|
\end{align}
We begin by estimating
\begin{equation} \label{eq:preest}
\left\|\oint_{\Gamma_1} \chi_\mathcal{A}r(z)\overline{P} H  P( H -z)^{-1}P H \overline{P} r(z)\chi_\mathcal{A}dz\right\|\:,
\end{equation}
where $\Gamma_1$ is the (non-closed) polygonal chain connecting the complex points: $(E'+i)\rightarrow (-1+i)\rightarrow (-1-i) \rightarrow (E'-i) $.
In particular, note that for any $z\in\Gamma_1$, we have $\text{dist}(z,\sigma( H ))\geq 1$ from which we get the estimate $\|( H -z)^{-1}\|\leq 1$.

Moreover, note that since $\overline{P} H  P =-\frac{1}{2\Delta}\overline{P} A^N_{\mathcal{G}'} P$, we get
\begin{equation}
\overline{P} H   P =\chi_{W} H  P\:,
\end{equation}
where
\begin{equation}
W=\{X\in\overline{\mathcal{V}'}_{N,k}: d_N(X,\mathcal{V}'_{N,k})=1\}\:.
\end{equation}
We thus continue our estimate of \eqref{eq:preest}:
\begin{align}
\eqref{eq:preest}&\leq \ell(\Gamma_1)\|\chi_\mathcal{A}r(z)\overline{P} H  P( H -z)^{-1}P H \overline{P} r(z)\chi_\mathcal{A}\|\\
&=\ell(\Gamma_1)\|\chi_\mathcal{A}r(z)\chi_W H  P( H -z)^{-1}P H \chi_W r(z)\chi_\mathcal{A}\| \notag \\
&\leq \ell(\Gamma_1)\|\chi_\mathcal{A}r(z)\chi_W\|\|\overline{P} H  P\|\|( H -z)^{-1}\|\|P H \overline{P}\|\|\chi_W r(z)\chi_\mathcal{A}\| \notag \\
&\leq\ell(\Gamma_1)\|\chi_\mathcal{A}r(z)\chi_W\|^2\|\overline{P} H  P\|^2 \notag \\
&\leq\ell(\Gamma_1)\|\overline{P} H  P\|^2\cdot C^2e^{-2\mu d_N(\mathcal{A},W)} \notag \\
&\leq\ell(\Gamma_1)\|\overline{P} H  P\|^2\cdot C^2e^{2\mu}e^{-2\mu d_N(\mathcal{A},\mathcal{V}'_{N,k})}\:, \notag
\end{align}
where we have used the Combes-Thomas estimate \eqref{eq:ct} for $\|\chi_\mathcal{A}r(z)\chi_W\|$ in the penultimate step and the fact that $|d_N(\mathcal{A},W)-d_N(\mathcal{A},\mathcal{V}'_{N,k})|\leq 1$ for the last inequality. Also, note that the length of $\Gamma_1$ is given by $\ell(\Gamma_1)=2E'+4$. The constants $C$ and $\mu$ are explicitly given in Corollary \ref{cor:ctxxz}, Equation \eqref{eq:constants}.

Writing $z=E'+it$, we now continue by estimating
\begin{equation} \label{eq:lineint}
\left\|\int_{-1}^{1}\chi_\mathcal{A}r(E'+it)\overline{P} H  P( H -z)^{-1}P H \overline{P} r(E'+it)\chi_\mathcal{A}dt\right\|\:.
\end{equation}
Using the resolvent identity
\begin{equation} \label{eq:resolvent}
r(E'+it)=r(E')-itr(E')r(E'+it)
\end{equation}
we continue
\begin{align} \label{eq:int1}
\eqref{eq:lineint}&\leq \left\|\int_{-1}^{1}\chi_\mathcal{A}r(z)\overline{P} H  P( H -z)^{-1}P H  \overline{P} r(E')\chi_\mathcal{A}dt\right\| \\
&+\left\|\int_{-1}^{1}\chi_\mathcal{A}r(z)\overline{P} H  P it( H -z)^{-1}P H \overline{P} r(E')r(E'+it)\chi_\mathcal{A}dt\right\| \notag
\end{align}

We start with the second summand in \eqref{eq:int1}, which can easily be estimated
\begin{align} \label{eq:estimate1}
&\leq \int_{-1}^1\|\chi_\mathcal{A}r(z)\chi_W\|\|\overline{P} H  P\|^2\|it( H -(E'+it))^{-1}\|\|r(E')\|\|r(E'+it)\|\|\chi_\mathcal{A}\|dt\\
&\leq 2 C e^{\mu}e^{-\mu d_N(\mathcal{A},\mathcal{V}'_{N,k})} \|\overline{P} H  P\|^2 \delta^{-2}\:, \notag
\end{align}
where  the last factor $\delta^{-2}$ comes from estimating the terms $\|r(E')\|\leq \delta^{-1}$ and $\|r(E'+it)\|\leq \delta^{-1}$ with the help of Proposition \ref{prop:apbounds}, {(ii)}. The estimate for the term $\|\chi_\mathcal{A}r(z)\chi_W\|$ has already been obtained above. Lastly, note that we have also used $\|it( H -(E'+it))^{-1}\|\leq 1$.

Now, in order to estimate the first summand in \eqref{eq:int1}, we apply the resolvent identity \eqref{eq:resolvent} another time to get the bound
\begin{align} \label{eq:int3}
\leq&\left\|\int_{-1}^1\chi_\mathcal{A}r(E')\overline{P} H  P( H -(E'+it))^{-1}P H \overline{P} r(E')\chi_\mathcal{A}dt\right\|\\
+&\left\|\int_{-1}^1 \chi_\mathcal{A}r(E')r(E'+it)\overline{P} H  P it ( H - (E'+it))^{-1}P H  \overline{P} r(E')\chi_\mathcal{A}dt\right\| \notag
\end{align}
The second summand in \eqref{eq:int3} can be estimated analogously as the second summand in \eqref{eq:int1}, which has been done in \eqref{eq:estimate1}. Hence, for this term we get the bound
\begin{equation}
\leq 2 C e^{\mu}e^{-\mu d_N(\mathcal{A},\mathcal{V}'_{N,k})} \|\overline{P} H  P\|^2 \delta^{-2}\:.
\end{equation}
To finish the proof, we estimate the first summand in \eqref{eq:int3} by
\begin{align} \label{eq:int5}
&\leq \|\chi_\mathcal{A}r(E')\chi_W\|^2\|\overline{P} H  P\|^2\left\|\int_{-1}^1( H -(E'+it))^{-1}dt\right\|\\
&\leq C^2e^{2\mu}e^{-2\mu d_N(\mathcal{A},\mathcal{V}'_{N,k})}\|\overline{P} H  P\|^2\left\|\int_{-1}^1( H -(E'+it))^{-1}dt\right\|\:, \notag
\end{align}
where we again argued as above for the estimate of $\|\chi_\mathcal{A}r(E')\chi_W\|$. To finish, note that for any $f\in\mathcal{H}_{\mathcal{G}'}^N$ we have by the spectral theorem
\begin{equation}\label{eq:specint}
\left\|\int_{-1}^1( H -(E'+it))^{-1}dt f\right\|^2=\sum_{n}\left|\int_{-1}^1 \frac{1}{\lambda_n-(E'+it)}dt\right|^2\|\chi_{\{\lambda_n\}}( H )f\|^2\:,
\end{equation}
Now, since $E'$ is not an eigenvalue of $ H $ we get for any $n$ that $\lambda_n-E'\neq 0$ and consequently
\begin{equation}
\left|\int_{-1}^1 \frac{1}{\lambda_n-(E'+it)}dt\right|^2\leq 4\pi^2\:.
\end{equation}
Thus,
\begin{equation}
\eqref{eq:specint}\leq 4\pi^2\sum_n\|\chi_{\{\lambda_n\}}( H ) f\|^2=4\pi^2\|f\|^2\:,
\end{equation}
from which we immediately conclude
\begin{equation}
\eqref{eq:int5}\leq 2\pi C^2e^{2\mu}e^{-2\mu d_N(\mathcal{A},\mathcal{V}'_{N,k})}\|\overline{P} H  P\|^2\:.
\end{equation}

Now, as mentioned above, we have $\overline{P}  H  P =-\frac{1}{2\Delta}\overline{P} A^N_{\mathcal{G}'}P$. Letting $\mathscr{P}(X,Y)$ denote the kernel of $\overline{P} A^N_{\mathcal{G}'}P$, where
\begin{equation}
\mathscr{P}(X,Y)=\begin{cases} 1 \quad \mbox{if}\quad X\in\overline{\mathcal{V}'}_{N,k}, Y\in\mathcal{V}'_{N,k}\:\:\mbox{and}\:\: \{X,Y\}\in\mathscr{E}_N\\ 0 \quad \mbox{else,}
\end{cases}
\end{equation}
we find that for any fixed $Y\in\mathcal{V}'_{N,k}$, we can make the following estimate
\begin{equation}
\sum_{X\in\mathcal{V}'}\mathscr{P}(X,Y)=\sum_{X\in\overline{\mathcal{V}'}_{N,k}: \{X,Y\}\in\mathscr{E}_N}1\leq\sum_{X\in \mathcal{V}'_N: \{X,Y\}\in\mathscr{E}_N}1=D_{\mathcal{G}'}^N(Y)\leq D_{\mathcal{G}_{N},\min}+k-1\:.
\end{equation}
Hence, we get
\begin{equation}
\|\overline{P}  H  P\|\leq \frac{D_{\mathcal{G}_N,\min}+k-1}{2\Delta}\:.
\end{equation}
\begin{remark}
A more refined estimate for $\|\overline{P}  H  P\|$ -- depending on the specific structure of the underlying graph $\mathcal{G}$ -- can be found in \cite[Lemma 3.1]{FS18}.
\end{remark}
 By carefully collecting all the above made estimates, we find
\begin{align}
	\|\chi_\mathcal{A}\mathcal{Q}\chi_\mathcal{A}\|&\leq \|\overline{P}  H  P\|^2\left(\left(\frac{E'+2}{\pi}+1\right)C^2e^{2\mu}e^{-2\mu d_N(\mathcal{A},\mathcal{V}'_{N,k})}+\frac{2Ce^\mu}{\pi\delta^2}e^{-\mu d_N(\mathcal{A},\mathcal{V}'_{N,k})}\right)\\&\leq \|\overline{P}  H  P\|^2\left(\left(\frac{E_{N,k}+2}{\pi}+1\right)C^2e^{2\mu}+\frac{2Ce^\mu}{\pi\delta^2}\right)e^{-\mu d_N(\mathcal{A},\mathcal{V}'_{N,k})}\:. \notag
\end{align}
 Using that $\|\chi_\mathcal{A}\mathcal{Q}\chi_\mathcal{A}\|=\|\chi_\mathcal{A}\mathcal{Q}\|^2$, we therefore get
 \begin{align}
 \|\chi_\mathcal{A}\mathcal{Q}\|\leq& \|\overline{P}  H  P\|\left(\left(\frac{E_{N,k}+2}{\pi}+1\right)C^2e^{2\mu}+\frac{2Ce^\mu}{\pi\delta^2}\right)^{1/2}e^{-\frac{\mu}{2} d_N(\mathcal{A},\mathcal{V}'_{N,k})}\\
 \leq &\frac{3\sqrt{5}}{2}\frac{(D_{\mathcal{G}_N,\min}+k)^{3/2}}{\min\{1,\delta^{3/2}\}}e^{-\frac{\mu}{2}d_N(\mathcal{A},\mathcal{V}'_{N,k})}\:, \notag
 \end{align}
where the last estimate can be obtained by using the explicit expressions for $C$ and $\mu$, noting that $\Delta>1$ as well as $D_{\mathcal{G}_N,\min}+k\geq 2$.
\end{proof}

\section{Entanglement bounds for the chain} \label{sec:tracebound}

From now on, we restrict our considerations to the finite chain, i.e.\ the model for which our main results have been stated in Section~\ref{sec:results}. Thus the underlying infinite graph is now $\mathcal{G}$ with vertex set $\mathcal{V}=\Z$ and edges between $\ell^1$--next neighbors and we look at the induced subgraph with vertices $\mathcal{V}'=\Lambda := [1,L]$. For $\Lambda_0 = [1,\ell]$, $1\le \ell < L$, we consider the decomposition $\mathcal{H}_\Lambda=\mathcal{H}_{\Lambda_0}\otimes \mathcal{H}_{\Lambda_0^c}$ and bipartite entanglement of a normalized state $\psi$ defined by \eqref{entangle}.

For every $X\subseteq \Lambda$ let $\phi_X$ denote the canonical basis function given by
\begin{equation} \label{eq:kronecker}
\phi_X(X')=\begin{cases} &1\quad\mbox{if}\quad X=X'\\
&0\quad\mbox{if}\quad X\neq X'\:.\end{cases}
\end{equation}
which means that $\{\phi_X: X \subseteq \Lambda\}$ is an orthonormal basis of $\mathcal{H}_\Lambda$ (using the identification described in Remark~\ref{ident}).
For $Y\subseteq \Lambda_0$ and $Z\subseteq \Lambda_0^c$ we naturally identify $\phi_Y \otimes \phi_Z = \phi_{Y\cup Z}$.

Let $\psi=\sum_{Y \subseteq \Lambda_0, Z \subseteq \Lambda_0^c}\langle \phi_{Y \cup Z},\psi\rangle \phi_{Y \cup Z}$ be a normalized vector in $\mathcal{H}_\Lambda$ and consider the pure state $\rho:=|\psi\rangle\langle\psi|$. Then the reduced state $\rho_1:=\Tr_{\Lambda_0^c}\rho$ is
\begin{equation} \label{redstate}
\rho_1=\sum_{Y,Y'\subseteq\Lambda_0}\sum_{Z\subseteq \Lambda_0^c}\langle\phi_{Y \cup Z},\psi\rangle
\overline{\langle\phi_{Y' \cup Z},\psi\rangle}\ |\phi_Y\rangle\langle\phi_{Y'}|.
\end{equation}
Recalling that $\mathcal{V}'_N=\{X\subseteq\Lambda: |X|=N\} = \{(x_1,\ldots,x_N) \subseteq \Lambda: x_1 < x_2 < \ldots < x_N\}$, the vertex set of the $N$-th symmetric product of $\mathcal{G}$,  then for $0\le N \le L$, note that
\begin{equation}
{\mathcal H}_\Lambda^N := \mbox{span}\{\phi_X: X \subseteq \Lambda, |X| = N\}=\mbox{span}\{\phi_X: X \in \mathcal{V}'_N\}
\end{equation}
are the $N$-particle ($N$-down-spin) subspaces of $\mathcal{H}_\Lambda$.  Thus $\mathcal{H}_\Lambda = \bigoplus_{N=1}^L \mathcal{H}_\Lambda^N$.
An analogous observation is true for $\mathcal{H}_{\Lambda_0}^N$, where $0\le N \le \ell$.

A normalized $\psi \in \mathcal{H}_\Lambda$ is thus of the form
\begin{equation} \label{psiexp}
\psi = \sum_{N=0}^L a_N\psi_N, \quad \sum_{N=0}^L |a_N|^2=1,
\end{equation}
for an orthonormal system $\{\psi_N\in \mathcal{H}_\Lambda^N, N=0,\ldots,L\}$ in $\mathcal{H}_\Lambda$ and $\psi_N=\sum_{X\subseteq \Lambda, |X|=N} c_X \phi_X $,  $\sum_{X}|c_X|^2=1$.

Let's write $\psi$ as
\begin{equation}
\psi=\Psi\otimes\phi_\emptyset+\hat{\psi},
\end{equation}
where
\begin{equation}\label{def:Psi}
\Psi:=\sum_{N=0}^\ell a_N\sum_{Y\subseteq\Lambda_0; |Y|=N} c_Y\phi_Y,\quad
\hat{\psi}:=\sum_{N=1}^L a_N\sum_{\tiny{
\begin{array}{c}
  Y\subseteq\Lambda_0, Z\subseteq\Lambda_0^c, Z\neq\emptyset; \\
|Y|+|Z|=N
\end{array}
}} c_{Y\cup Z}\phi_{Y}\otimes\phi_Z.
\end{equation}
Since $\Tr_{\Lambda_0^c}|\Psi\otimes \phi_\emptyset\rangle\langle\hat{\psi}|=0$, then
\begin{equation}\label{eq:rho1}
\rho_1=\Tr_{\Lambda_0^c}|\psi\rangle\langle\psi|=|\Psi\rangle\langle\Psi|+\hat{\rho}_1,
\end{equation}
where
\begin{equation}
\hat{\rho}_1:=\Tr_{\Lambda_0^c}|\hat{\psi}\rangle\langle
\hat{\psi}|
=\sum_{N,M=1}^L a_N \overline{a_M}
\sum_{
{\tiny
\begin{array}{ccc}
  Y,Y'\subseteq \Lambda_0, Z\subseteq \Lambda_0^c, Z\neq\emptyset \\
  |Y|+|Z|=N,\ |Y'|+|Z|=M
\end{array}
}
} c_{Y\cup Z} \overline{c_{Y'\cup Z}} |\phi_Y\rangle\langle \phi_{Y'}|.
\end{equation}
We note here that the reduced state $\rho_1$ in (\ref{eq:rho1}) is written
 as a sum of two non-negative operators ($\hat{\rho}_1$ is the partial trace of a non-negative operator).

We will use that, for $0<\alpha<1$, the $\alpha$-R\'{e}nyi entanglement entropy
\begin{equation}
\mathcal{E}_\alpha(\rho) := \frac{1}{1-\alpha}\log \Tr [(\rho_1)^{\alpha}]
\end{equation}
is an upper bound for the von Neumann entanglement entropy, i.e.,
$\mathcal{E}(\rho) = \mathcal{S}(\rho_1) \le  \mathcal{E}_\alpha(\rho)$.
Thus we need to find bounds for $\Tr [(\rho_1)^{\alpha}]$, which reduces to bounds for $\Tr\left[(\hat{\rho}_1)^\alpha\right]$. This follows from
 the quasi-norm
property of $\Tr|\cdot|^\alpha$, see e.g., \cite[Theorem 7.8]{Weidmann},
\begin{equation}\label{eq:bound:Tr:rho1hat}
\Tr[(\rho_1)^\alpha]\leq 2 \Tr\left[|\Psi\rangle\langle\Psi|^\alpha\right]+2\Tr\left[(\hat{\rho}_1)^\alpha\right]\leq 2+2\Tr\left[(\hat{\rho}_1)^\alpha\right],
\end{equation}
where we used that $|\Psi\rangle\langle\Psi|$ has rank (at most) one with norm less than our equal to one.

Next we use Jensen's inequality to get
\begin{eqnarray} \label{1}
\Tr[(\hat{\rho}_1)^\alpha] & = & \sum_{Y \subseteq \Lambda_0} \langle \phi_Y, (\hat{\rho}_1)^{\alpha} \phi_Y \rangle  \leq \sum_{Y\subseteq \Lambda_0}\langle\phi_Y,\hat{\rho}_1\phi_Y\rangle^\alpha \\
& \le & 2+ \sum_{j=1}^{\ell-1} \sum_{Y\subseteq \Lambda_0,\, |Y| =j} \langle\phi_Y,\hat{\rho}_1\phi_Y\rangle^\alpha, \notag
\end{eqnarray}
where the terms corresponding to $Y=\emptyset$ and $Y=\Lambda_0$ were bounded trivially using the fact that $\hat{\rho}_1\leq 1$. (This can be seen  e.g., $\hat{\rho}_1\geq 0$ and $\Tr\hat{\rho}_1\leq 1$.)

To find bounds for $\langle \phi_Y, \hat{\rho}_1 \phi_Y \rangle$, where $Y\subseteq \Lambda_0$ with $1\le |Y| \le \ell-1$, we expand with \eqref{redstate},
\begin{eqnarray} \label{rho1decomp}
\langle \phi_Y, \hat{\rho}_1 \phi_Y \rangle & = & \sum_{N=1}^L |a_N|^2\sum_{Z\subseteq \Lambda_0^c,\, Z\neq\emptyset,\, |Y|+|Z|= N} |\langle\phi_{Y \cup Z},\psi_N\rangle|^2\\
& =& \sum_{k=1}^{L-|Y|} |a_{|Y|+k}|^2 \sum_{Z\subseteq \Lambda_0^c,\, |Z|= k}  |\langle\phi_{Y \cup Z},\psi_{|Y|+k} \rangle|^2 \nonumber \\
& = & \sum_{k=1}^{L-|Y|} |a_{|Y|+k}|^2 \|\chi_{\mathcal{A}_{Y,k}} \psi_{|Y|+k} \|^2 \le \max_{1\le k \le L-|Y|} \|\chi_{\mathcal{A}_{Y,k}} \psi_{|Y|+k} \|^2, \notag
\end{eqnarray}
where
\begin{equation}
\mathcal{A}_{Y,k} := \{ Y \cup Z: Z \subseteq \Lambda_0^c, |Z|=k\}\:.
\end{equation}
Here,  $\chi_{\mathcal{A}_{Y,k}}$ is the multiplication operator by the characteristic function of $\mathcal{A}_{Y,k} \subseteq \mathcal{V}'_{|Y|+k}$ (where, for any $N$, $\{\phi_X\}_{X\subseteq \mathcal{V}'_N}$ is the canonical basis of $\ell^2(\mathcal{V}'_N)$).

In summary, substitute (\ref{rho1decomp}) in  (\ref{1}) then in (\ref{eq:bound:Tr:rho1hat}) to produce the bound
\begin{equation}\label{eq:bound:Tr:rho1hat:main}
\Tr[(\rho_1)^\alpha]\leq
6+2 \sum_{j=1}^{\ell-1} \sum_{Y\subseteq \Lambda_0,\, |Y| =j} \max_{1\leq k\leq L-j}\|\chi_{\mathcal{A}_{Y,k}}\psi_{j+k}\|^{2\alpha}.
\end{equation}

For states as in Theorem~\ref{thm:main} we will now apply Lemma \ref{lemma:decaybounds}, which for the chain takes a particularly convenient form. The reason for this is that for any finite $X\subseteq \Z$ or with $|X|=N$, we have
\begin{equation}
D_{\Z}^N(X)=|\partial X|=2\ cl(X),
\end{equation}
where $cl(X)$ denotes the number of connected components in configuration $X$ (clusters).

In particular, the minimum of $D_{\Z}^N(X)$ is given by $D_{\mathcal{G}_N,\min}=2$, so that the constants $C_3$ and $\mu_3$ in Lemma~\ref{lemma:decaybounds} become $N$-independent. Minimizing configurations $X$ consist of only a single cluster (droplet), i.e., are of the form $X=\{x,x+1,\dots,x+(N-1)\}$, $x\in \Z$.

The energy levels in \eqref{ENk} become $E_{N,k} = (1-\frac{1}{\Delta}) (1+\frac{k}{2})$, are independent of $N$ and determine the threshold energies $E_K = K(1-\frac{1}{\Delta})$ in \eqref{Ksplit} via $k = 2(K-1)$, $K=1,2,\ldots$. Odd values of $k$ are irrelevant in the case of the chain, because $D_{\Z}^N(X)$ can only attain even values.

For any $N,K\in\N$, let us now define the sets
\begin{equation}
\mathcal{V}_{N,K}:=\left\{X\in\mathcal{V}_N: cl(X)\leq K\right\}
\end{equation}
i.e. $\mathcal{V}_{N,K}$ denotes the set of all configurations in $\Lambda$ of \emph{up to} $K$ with exactly $N$ particles in total. Also define $\overline{\mathcal{V}}_{N,K}:=\mathcal{V}_N\setminus\mathcal{V}_{N,K}$. As a special case of Lemma~\ref{lemma:decaybounds} we therefore get

\begin{cor}  \label{cor:chainspecdecay}
For any non-negative background field $V$ on $\Lambda$ and any $1\le N\leq |\Lambda|$, let $H^N_\Lambda(V):=H_\Lambda^N+V_\Lambda^N$. Then, for any $\delta>0$, $K\in \N$ and $\mathcal{A}\subseteq \overline{\mathcal{V}}_{N,K}
$, the following estimate holds:
\begin{equation}
\left\|\chi_\mathcal{A} \chi_{[0,E_{K+1}-\delta]}\left(H_\Lambda^N(V)\right)\right\| \leq C_3 e^{-\mu_3 d_N(\mathcal{A},\mathcal{V}_{N,K})}\:.
\end{equation}
The constants $C_3=C_3(K)$ and $\mu_3=\mu_3(K)$ are given by
\begin{equation} \label{eq:constants2}
C_3(K)=3\sqrt{10}\frac{(K+1)^{3/2}}{\min\{1,\delta^{3/2}\}}\quad\mbox{and}\quad \mu_3(K)=\frac{1}{2}\log\left(1+\frac{\delta\Delta}{4(K+1)}\right)\:.
\end{equation}
\end{cor}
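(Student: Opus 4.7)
The plan is to deduce this corollary directly from Lemma~\ref{lemma:decaybounds} by specializing all the graph-theoretic quantities to the one-dimensional chain case. First I would check that the ambient graph $\mathcal{G}=(\Z,\text{nearest-neighbor edges})$ and the induced subgraph $\Lambda = [1,L]$ satisfy assumptions (A1) and (A2): geodesicity is immediate since $\Lambda$ is an interval, and the fact that any interval $[x,x+N-1] \subseteq \Lambda$ (whose boundary in $\Z$ consists of $2$ edges) minimizes the surface measure among all $N$-element subsets of $\Z$ shows both that $D_{\mathcal{G}_N,\min}=2$ and that $\Lambda$ itself contains $N$-droplets.

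Next I would translate the spectral parameter $k$ of Lemma~\ref{lemma:decaybounds} into the cluster parameter $K$ of the statement. Because every finite $X\subseteq \Z$ has boundary $|\partial X|=2\,cl(X)$, we have
\begin{equation}
D^N_{\mathcal{G}}(X) < D_{\mathcal{G}_N,\min}+k \iff 2\,cl(X) < 2+k,
\end{equation}
so on the chain only even values of $k$ are informative. Choosing $k=2K$ gives exactly $cl(X)\le K$, hence $\mathcal{V}'_{N,k} = \mathcal{V}_{N,K}$ and $\overline{\mathcal{V}'}_{N,k} = \overline{\mathcal{V}}_{N,K}$. With this same choice the energy threshold from \eqref{ENk} becomes
\begin{equation}
E_{N,k} = \tfrac{1}{2}\bigl(1-\tfrac{1}{\Delta}\bigr)(2+2K) = (K+1)\bigl(1-\tfrac{1}{\Delta}\bigr) = E_{K+1},
\end{equation}
matching the spectral projection $\chi_{[0,E_{K+1}-\delta]}$ appearing in the corollary.

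Finally I would apply Lemma~\ref{lemma:decaybounds} with these identifications and simplify the resulting constants. Substituting $D_{\mathcal{G}_N,\min}+k = 2+2K = 2(K+1)$ into the expression for $C_2$ yields
\begin{equation}
C_2 = \frac{3\sqrt{5}}{2}\cdot\frac{(2(K+1))^{3/2}}{\min\{1,\delta^{3/2}\}} = 3\sqrt{10}\,\frac{(K+1)^{3/2}}{\min\{1,\delta^{3/2}\}} = C_3(K),
\end{equation}
and the same substitution converts $\mu_2 = \tfrac{1}{2}\log(1+\delta\Delta/(2(D_{\mathcal{G}_N,\min}+k)))$ into $\mu_3(K) = \tfrac{1}{2}\log(1+\delta\Delta/(4(K+1)))$. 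Since all of these constants are now independent of $N$ (owing to the $N$-independence of $D_{\mathcal{G}_N,\min}$ in dimension one), the bound
\begin{equation}
\|\chi_{\mathcal{A}}\chi_{[0,E_{K+1}-\delta]}(H^N_\Lambda(V))\| \le C_3(K)\,e^{-\mu_3(K)\,d_N(\mathcal{A},\mathcal{V}_{N,K})}
\end{equation}
follows immediately. There is really no obstacle here beyond bookkeeping; the work was already done in Lemma~\ref{lemma:decaybounds}, and the only delicate point is the minor index shift between the abstract break-up index $k$ and the cluster-count $K$, which a reader might find confusing if it were not spelled out.
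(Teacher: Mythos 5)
Your proposal is correct and follows exactly the route the paper takes: the corollary is obtained from Lemma~\ref{lemma:decaybounds} by specializing to the chain, using $|\partial X| = 2\,cl(X)$, $D_{\mathcal{G}_N,\min}=2$, the substitution $k=2K$ (so that $\mathcal{V}'_{N,k}=\mathcal{V}_{N,K}$ and $E_{N,k}=E_{K+1}$), and the resulting simplification of the constants. Your arithmetic for $C_3$ and $\mu_3$ matches the paper's, so there is nothing to add.
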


\vspace{.3cm}

We can now combine Corollary~\ref{cor:chainspecdecay} with \eqref{eq:bound:Tr:rho1hat:main} and several lemmas proven in Appendix~\ref{app:aux} into

\begin{prop} \label{prop:summarybound}
Let $\psi$ be as in Theorem~\ref{thm:main}, i.e., $\psi \in R(\chi_{[0,E_{K+1}-\delta]}(H_{\Lambda}(V)))$, $\|\psi\|=1$, $\rho = |\psi \rangle \langle \psi|$, $\rho_1 = \Tr_{\Lambda_0^c} \rho$ and $0<\alpha<1$. Then
\begin{equation} \label{summarybound}
\Tr[(\rho_1)^\alpha] \le 6+ 2\sum_{j=1}^{\ell-1} \sum_{Y\subseteq \Lambda_0,\, |Y| =j} C_3 e^{-2\alpha \mu_3 d_{j+1}(Y \cup \{\ell+1\}, \mathcal{V}_{j+1,K})},
\end{equation}
where $C_3$ and $\mu_3$ are as in Corollary~\ref{cor:chainspecdecay}.
\end{prop}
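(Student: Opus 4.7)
The plan is to bound each summand $\|\chi_{\mathcal{A}_{Y,k}}\psi_{j+k}\|^{2\alpha}$ appearing in \eqref{eq:bound:Tr:rho1hat:main} by $C_3\,e^{-2\alpha\mu_3\,d_{j+1}(Y\cup\{\ell+1\},\mathcal{V}_{j+1,K})}$ uniformly in $1\le k\le L-j$. Once this is done, substituting into \eqref{eq:bound:Tr:rho1hat:main} immediately gives \eqref{summarybound}.

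To this end I would first use particle--number conservation to localize in a single magnon sector. Since $H_\Lambda(V)$ commutes with the total down-spin operator, the spectral projection $\chi_{[0,E_{K+1}-\delta]}(H_\Lambda(V))$ is block diagonal with respect to $\mathcal{H}_\Lambda=\bigoplus_N\mathcal{H}_\Lambda^N$, so each $\psi_N$ in \eqref{psiexp} lies in $R(\chi_{[0,E_{K+1}-\delta]}(H_\Lambda^N(V)))$ and hence
\begin{equation*}
\|\chi_{\mathcal{A}_{Y,k}}\psi_{j+k}\|\le\bigl\|\chi_{\mathcal{A}_{Y,k}}\chi_{[0,E_{K+1}-\delta]}(H_\Lambda^{j+k}(V))\bigr\|.
\end{equation*}
Corollary~\ref{cor:chainspecdecay} can then be applied as soon as $\mathcal{A}_{Y,k}\subseteq\overline{\mathcal{V}}_{j+k,K}$, giving exponential decay of this norm in the graph distance from $\mathcal{A}_{Y,k}$ to $\mathcal{V}_{j+k,K}$.

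Next I would verify the cluster-count comparison $cl(Y\cup Z)\ge cl(Y\cup\{\ell+1\})$ for every $Y\subseteq\Lambda_0$ and every non-empty $Z\subseteq\Lambda_0^c$. Since the only edge linking $\Lambda_0$ to $\Lambda_0^c$ is $\{\ell,\ell+1\}$, a short case distinction yields $cl(Y\cup Z)=cl(Y)+cl(Z)-\mathbf{1}\{\ell\in Y,\,\ell+1\in Z\}$ and $cl(Y\cup\{\ell+1\})=cl(Y)+1-\mathbf{1}\{\ell\in Y\}$, and $cl(Z)\ge 1$ makes the inequality immediate. Consequently, whenever $cl(Y\cup\{\ell+1\})>K$ every configuration in $\mathcal{A}_{Y,k}$ has more than $K$ clusters, i.e. $\mathcal{A}_{Y,k}\subseteq\overline{\mathcal{V}}_{j+k,K}$, and Corollary~\ref{cor:chainspecdecay} delivers $\|\chi_{\mathcal{A}_{Y,k}}\psi_{j+k}\|\le C_3\,e^{-\mu_3 d_{j+k}(\mathcal{A}_{Y,k},\mathcal{V}_{j+k,K})}$. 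In the complementary case $cl(Y\cup\{\ell+1\})\le K$ the target distance is zero, so the desired estimate reduces to $\|\chi_{\mathcal{A}_{Y,k}}\psi_{j+k}\|^{2\alpha}\le C_3$, which is trivial since the left side is at most $1$ and $C_3\ge 1$.

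The main obstacle is the purely geometric inequality
\begin{equation*}
d_{j+k}(\mathcal{A}_{Y,k},\mathcal{V}_{j+k,K})\ge d_{j+1}(Y\cup\{\ell+1\},\mathcal{V}_{j+1,K}),
\end{equation*}
which converts the $(j+k)$-particle bound coming from Corollary~\ref{cor:chainspecdecay} into the $(j+1)$-particle bound that actually appears in \eqref{summarybound}. Intuitively, among admissible $Z$ the cheapest choice with respect to $d_{j+k}(Y\cup Z,\mathcal{V}_{j+k,K})$ is to place $Z$ adjacent to $\ell+1$ in a single cluster, since any isolated point of $Z$ deep inside $\Lambda_0^c$ must be transported by at least one step to attach to an existing cluster of $Y\cup\{\ell+1\}$. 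Making this rigorous is the purpose of Lemmas~\ref{closestdrop} and \ref{lem:closestcluster} in Appendix~\ref{app:aux}, which describe the closest $K$-cluster configuration to a given $N$-particle configuration and allow one to extract, from an optimal permutation realizing the $(j+k)$-distance via \eqref{distformula}, a corresponding $(j+1)$-plan of no greater cost. Combining this with the Combes--Thomas bound above and raising to the $2\alpha$-th power (absorbing the resulting $C_3^{2\alpha}$ into $C_3$ using the companion trivial bound $\|\cdot\|\le 1$) yields the summand estimate, and summing over $Y$ and $j$ completes the proof.
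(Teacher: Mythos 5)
Your proposal is correct and follows essentially the same route as the paper: sector-wise application of Corollary~\ref{cor:chainspecdecay}, followed by the geometric reduction $d_{j+k}(\mathcal{A}_{Y,k},\mathcal{V}_{j+k,K})\ge d_{j+1}(Y\cup\{\ell+1\},\mathcal{V}_{j+1,K})$; your explicit case distinction on whether $\mathcal{A}_{Y,k}\subseteq\overline{\mathcal{V}}_{j+k,K}$ is a welcome addition that the paper leaves implicit. The one slip is a citation: that reduction is supplied by Lemmas~\ref{lemma:closestconfig} and \ref{lemma:justoneparticle} (the closest configuration within $\mathcal{A}_{Y,k}$ is $Y^{(k)}$, and the resulting distances are minimized at $k=1$), not by Lemmas~\ref{closestdrop} and \ref{lem:closestcluster}, which serve the proof of Theorem~\ref{thm:K}.
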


\begin{proof}
Recall the bound (\ref{eq:bound:Tr:rho1hat:main}) for $\Tr[(\rho_1)^\alpha]$.
That $\psi \in R(\chi_{[0,E_{K+1}-\delta]}(H_\Lambda(V)))$ means that $\psi_{n}  \in R(\chi_{[0,E_{K+1}-\delta]}(H_\Lambda^{n}(V)))$ for all $n$. Therefore, by Corollary~\ref{cor:chainspecdecay},
\begin{equation}
\|\chi_{\mathcal{A}_{Y,k}} \psi_{j+k} \| \le \left\|\chi_{\mathcal{A}_{Y,k}} \chi_{[0,E_{K+1}-\delta]}\left(H_\Lambda^{j+k}(V)\right)\right\| \le C_3 e^{-\mu_3 d_{j+k}(\mathcal{A}_{Y,k},\mathcal{V}_{j+k,K})}.
\end{equation}
Next we use Lemma~\ref{lemma:closestconfig}, which says that
\begin{equation}
d_{j+k}(\mathcal{A}_{Y,k}, \mathcal{V}_{j+k,K}) = d_{j+k}(Y^{(k)}, \mathcal{V}_{j+k,K}),
\end{equation}
where $Y^{(k)} := Y \cup \{\ell+1,\ldots, \ell+k\}$.

Finally, by Lemma~\ref{lemma:justoneparticle}, the minimum of the numbers $d_{j+k}(\mathcal{A}_{Y,k}, \mathcal{V}_{j+k,K})$, $1\le k \le L-j$, is attained for $k=1$.	

Combining all these bounds yields \eqref{summarybound}.
\end{proof}

\section{Proof of Theorem~\ref{thm:main}} \label{sec:proofmain}
The following theorem provides an explicit upper bound for the $\alpha$-R\'{e}nyi entanglement entropy.

\begin{thm}\label{thm:K}Let $\Delta>1$, $\alpha\in(0,1)$, $K\in\mathbb{N}$, and $\delta>0$.  For every normalized vector $\psi$ in the range of $\chi_{[0,E_{K+1}-\delta]}(H_\Lambda(V))$ and $\rho_\psi = |\psi\rangle \langle \psi|$, we have
	\begin{equation} \label{eq:alphaRenyi}
	\mathcal{E}_\alpha(\rho_{\psi})\leq \frac{1}{1-\alpha}
\log\left(\frac{2 C_3\cdot K\cdot C(\alpha)^K}{1-e^{-2\alpha\mu_3}}\ \ell^{2K-1}
+6
\right)
	\end{equation}
where $C(\alpha):=\prod_{j=1}^{\infty}(1-e^{-2\alpha\mu_3 j})^{-2}$
and $C_3$ and $\mu_3$ are as in Corollary~\ref{cor:chainspecdecay}.
\end{thm}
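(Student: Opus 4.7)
The starting point is Proposition~\ref{prop:summarybound}, which gives
\[
\Tr[(\rho_1)^\alpha] \;\le\; 6 + 2 C_3\, T, \qquad T := \sum_{j=1}^{\ell-1}\sum_{\substack{Y\subseteq\Lambda_0\\|Y|=j}} e^{-2\alpha\mu_3\, d_{j+1}(Y\cup\{\ell+1\},\,\mathcal{V}_{j+1,K})}.
\]
Since $\mathcal{E}_\alpha(\rho_\psi) = \frac{1}{1-\alpha}\log \Tr[(\rho_1)^\alpha]$, the problem reduces to the combinatorial estimate
\[
T \;\le\; \frac{K\, C(\alpha)^K}{1-e^{-2\alpha\mu_3}}\,\ell^{2K-1}.
\]

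The plan is to classify each $X := Y\cup\{\ell+1\}$ by its number of clusters $M := \mathrm{cl}(X)$, and when $M>K$ to parameterize $X$ by a nearest $K$-cluster \emph{skeleton} $X^*\in\mathcal{V}_{j+1,K}$ together with a \emph{fission pattern} that recovers $X$ from $X^*$. The key technical input comes from Lemmas~\ref{closestdrop}--\ref{lem:closestcluster} in Appendix~\ref{app:aux}, which give an explicit description of the nearest $K$-cluster configuration and, in particular, imply the lower bound
\[
d_{j+1}(X,\,\mathcal{V}_{j+1,K}) \;\ge\; \sum_{i=1}^{M-K} g_{(i)}\qquad (M>K),
\]
where $g_{(1)}\le\cdots\le g_{(M-1)}$ denote the sorted gap sizes between the consecutive clusters of $X$ (and the right-hand side equals $0$ when $M\le K$). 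Heuristically, any $K$-cluster target must close at least $M-K$ of the gaps of $X$; closing a gap of size $g$ costs at least $g$ particle moves, and the cheapest choice closes the smallest gaps.

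The two regimes $M\le K$ and $M>K$ contribute separately to $T$. For $M\le K$ the distance vanishes and a standard stars-and-bars count gives $\binom{j}{M-1}\binom{\ell-j}{M-1}$ configurations with $|Y|=j$ and $\mathrm{cl}(X)=M$; summing over $j$ via Vandermonde's identity and then over $M=1,\dots,K$ contributes at most $K\binom{\ell+1}{2K-1}\le K\ell^{2K-1}$. For $M>K$, one reorganizes the sum by first choosing $X^*$ (obtained from $X$ by filling its $M-K$ smallest gaps) and then summing, for each of the $K$ super-clusters of $X^*$, over all its fissions (sub-cluster sizes, internal gap sizes, and an overall offset). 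The single-cluster fission sum factorizes into the generating function $\prod_{j\ge 1}(1-e^{-2\alpha\mu_3 j})^{-2}=C(\alpha)$: the exponent $-2$ reflects the two independent partition degrees of freedom in the fissioned data (the left- and right-of-center shift profiles of the sub-clusters, each contributing a partition generating function $\prod_j(1-q^j)^{-1}$). Taking the product over the $K$ super-clusters yields $C(\alpha)^K$, and a final geometric sum $\sum_{k\ge 0}e^{-2\alpha\mu_3 k}=1/(1-e^{-2\alpha\mu_3})$ handles the overall shift.

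The main obstacle is making this fission-based parameterization rigorous: one must define the skeleton map $X\mapsto X^*$ carefully enough to avoid over-counting and then verify that the per-super-cluster sums really assemble into $C(\alpha)^K$ without hidden polynomial-in-$\ell$ prefactors that would degrade the $\ell^{2K-1}$ growth. This hinges on the precise description of the nearest skeleton in Lemma~\ref{lem:closestcluster} (via minimal-gap merging) and on recasting each fission in terms of a pair of integer partitions, thereby producing the two copies of $\prod_j(1-q^j)^{-1}$ in $C(\alpha)$. Once the estimate on $T$ is in hand, substituting into the bound from Proposition~\ref{prop:summarybound} and taking $\frac{1}{1-\alpha}\log$ completes the proof of Theorem~\ref{thm:K}.
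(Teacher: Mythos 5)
Your overall architecture matches the paper's: Proposition~\ref{prop:summarybound} reduces everything to the exponential sum $T$, one classifies each configuration by its nearest at-most-$K$-cluster configuration, the per-cluster sums are meant to assemble into the partition generating functions in $C(\alpha)$, and an $O(\ell^{2K-1})$ count of admissible skeletons supplies the polynomial factor. But the estimate you designate as the key technical input, $d_{j+1}(X,\mathcal{V}_{j+1,K})\ge\sum_{i=1}^{M-K}g_{(i)}$, is far too weak to carry the argument, and the proof does not close as written. If a fission of a super-cluster of size $m$ into $r$ sub-clusters with internal gaps $h_1,\dots,h_{r-1}\ge 1$ is weighted only by $e^{-\gamma\sum_i h_i}$, the fission sum is
\[
\sum_{r=1}^{m}\binom{m-1}{r-1}\Bigl(\sum_{h\ge1}e^{-\gamma h}\Bigr)^{r-1}=\Bigl(1+\tfrac{e^{-\gamma}}{1-e^{-\gamma}}\Bigr)^{m-1}=\bigl(1-e^{-\gamma}\bigr)^{-(m-1)},
\]
which is exponentially large in $m$ (and $m$ can be of order $\ell$); nothing like a constant $C(\alpha)$ emerges and the $\ell^{2K-1}$ bound is destroyed. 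What is actually needed, and what Lemma~\ref{lem:closestcluster} provides, is the exact distance: the nearest $k$-cluster configuration consists of droplets centered at $k$ ``magnet'' particles of $X$, and the cost of collapsing onto such a droplet weights the $i$-th inter-particle gap $n_i$ (counted outward from the magnet) by the number $m-i+1$ of particles that must traverse it. It is precisely these multiplicities that turn $\sum_{n_i\ge0}q^{\sum_i(m-i+1)n_i}$ into $\prod_{j=1}^{m}(1-q^j)^{-1}$ and simultaneously absorb the sum over sub-cluster sizes; replacing $\sum_i(m-i+1)n_i$ by $\sum_i n_i$ loses exactly the suppression you need.

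A second, related error: the nearest skeleton is \emph{not} obtained by closing the $M-K$ smallest gaps. Take $X$ with clusters $\{0,1,2\}$, $\{4,5,6\}$, $\{100\}$, $\{103\}$ and $K=3$: closing the gap of size $1$ costs $3$ (three particles must each shift by one to form a $6$-droplet), while closing the gap of size $2$ costs only $2$. The correct selection rule is minimal transport cost, which is what the magnet description in Lemma~\ref{lem:closestcluster} encodes. Your closing remark that each fission should be ``recast as a pair of integer partitions'' points in the right direction, but that recasting is the entire content of the estimate and requires the exact weighted distance formula, not the unweighted gap bound you state; once it is in place, your argument essentially becomes the paper's proof of Theorem~\ref{thm:K}.
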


This readily implies Theorem~\ref{thm:main}: For $\psi$ as considered there  we infer from \eqref{eq:alphaRenyi} that
\begin{equation}
\limsup_{\ell\rightarrow \infty}\limsup_{L\rightarrow \infty}\frac{\sup_{\psi}\mathcal{E}_\alpha(\rho_\psi)}{\log\ell} \le \frac{2K-1}{1-\alpha}.
\end{equation}
As $\mathcal{E}(\rho_\psi) \le \mathcal{E}_\alpha(\rho_\psi)$, this implies
\begin{equation}
\limsup_{\ell\rightarrow \infty}\limsup_{L\rightarrow \infty}\frac{\sup_{\psi}\mathcal{E}(\rho_\psi)}{\log\ell} \le \frac{2K-1}{1-\alpha}.
\end{equation}
Letting $\alpha \to 0$ yields \eqref{eq:main}.

\begin{proof} (of Theorem~\ref{thm:K}) Recall that, for $K\in\mathbb{N}$,
\begin{equation} \label{nujK}
\mathcal{V}_{n,K} =\{X\subseteq \Lambda;\ |X|=n \text{ and }\ cl(X)\leq K\}
\end{equation}
are the $n$-particle configurations with at most $K$ clusters.
By Proposition~\ref{prop:summarybound} we need to bound
\begin{equation}\label{eq:first-sum}
\sum_{j=1}^{\ell-1}\sum_{Y\subseteq \Lambda_0, |Y|=j}\exp\left(-\gamma d_{j+1}(Y\cup \{\ell+1\},\mathcal{V}_{j+1,K})\right)
\end{equation}
where $\gamma:=2\mu_3\alpha$. For $k,n\in\mathbb{N}$, define
\begin{equation} \label{nujK=}
\mathcal{V}_{n,k}^=:= \{X\subseteq\Lambda;\ |X|=n \text{ and }\ cl(X)= k\},
\end{equation}
with exactly $k$ clusters. Observe that $\mathcal{V}_{n,k}^= =\emptyset$ when $k>n$.

Moreover, for positive integers $1\leq k\leq K$ and $n\geq k$, define
\begin{equation} \label{XinK}
\Xi_{n,k}=\{X\subseteq \Lambda; \ |X|=n,\  k\in\{1,\ldots,K\} \,\mbox{minimal s.\,t.} \,d_n(X,\mathcal{V}_{n,K})=d_n(X,\mathcal{V}_{n,k}^=)\},
\end{equation}
and note that for each fixed $1\leq j\leq \ell-1$,
\begin{equation}
\{Y\subseteq\Lambda_0;\ |Y|=j\}=\biguplus_{k=1}^{\min\{j+1,K\}}\{Y\subseteq\Lambda_0;\ Y\cup\{\ell+1\}\in\Xi_{j+1,k}\}.
\end{equation}
Hence, the sum (\ref{eq:first-sum}) can be written as
\begin{equation}\label{eq:main1}
\sum_{k=1}^K \sum_{j=\max\{k-1,1\}}^{\ell-1}\sum_{\tiny
\begin{array}{c}
  Y\subseteq\Lambda_0; \\
  Y\cup \{\ell+1\}\in\Xi_{j+1,k}
\end{array}
} \exp\left(-\gamma d_{j+1}(Y \cup \{\ell+1\},\mathcal{V}_{j+1,k}^=)\right).
\end{equation}

To ease notations in the following, we apply the change of coordinates
\begin{equation}
\Lambda \mapsto \ell+1-\Lambda,
\end{equation}
i.e., the vertices of $\Lambda$ are labeled by integers in an increasing order from right to left, such that the number 1 labels the first-from-right vertex of $\Lambda_0$. After this change of variables we can re-express \eqref{eq:main1} by
\begin{equation}\label{eq:main2}
\sum_{k=1}^K \sum_{j=\max\{k-1,1\}}^{\ell-1} \sum_{\tiny
\begin{array}{c}
  Y\subseteq \Lambda_0,\, |Y|=j; \\
  Y^{(0)}\in\Xi_{j+1,k}
\end{array}
} \exp\left(-\gamma d_{j+1}(Y^{(0)},\mathcal{V}_{j+1,k}^=)\right),
\end{equation}
where we write $Y^{(0)} := \{0\} \cup Y$.

For a fixed $Y\subseteq\Lambda_0$ such that $Y^{(0)} \in \Xi_{j+1,k}$, we use Lemma~\ref{lem:closestcluster} and the representation (\ref{E:eq:XXhat}) to find a closest $\hat{Y}^{(0)} \in \mathcal{V}_{j+1,k}^=$ and ``magnets'' $\hat{y}_1, \ldots, \hat{y}_{k}$ with the properties
\begin{itemize}
\item[(i)] $\hat{Y}^{(0)} = \bigcup_{r=1}^{k} \mathcal{C}_{\hat{y}_r}$, where $\mathcal{C}_{\hat{y}_r}$ are ordered and non-touching $m_r$-particle droplets with $\sum_r m_r = j+1$, centered at $\hat{y}_r$ in the sense of \eqref{dropcenter}, i.e.,
\begin{equation}
\mathcal{C}_{\hat{y}_r}=\mathcal{C}^L_{\hat{y}_r}\cup\{\hat{y}_r\}\cup\mathcal{C}^R_{\hat{y}_r}
\end{equation}
where
\begin{equation}\label{eq:C-CL-CR}
\mathcal{C}^L_{\hat{y}_r}:=\{ \hat{y}_r-\lfloor\frac{m_r}{2}\rfloor, \ldots, \hat{y}_r-1\}  \text{ and }  \mathcal{C}^R_{\hat{y}_r}:=\{\hat{y}_r+1, \ldots, \hat{y}_r+
\lfloor\frac{m_r}{2}\rfloor-\delta_{m_r,\text{even}}\} .
\end{equation}
Here for $x\in\mathbb{R}$, $\lfloor x\rfloor$ denotes the greatest integer smaller than or equal to $x$ and
\begin{equation}
\delta_{n,\text{even}}=\left\{
                        \begin{array}{ll}
                          1 & \hbox{if $n$ is even} \\
                          0 & \hbox{if $n$ is odd.}
                        \end{array}
                      \right.
\end{equation}

\item[(ii)] $Y^{(0)} = \bigcup_{r=1}^{k} Y^{(0)}_{\hat{y}_r}$ for ordered $m_r$-particle configurations of the form
\begin{equation}
Y^{(0)}_{\hat{y}_r}=Y_{\hat{y}_r}^L \cup\{\hat{y}_r\} \cup Y_{\hat{y}_r}^R,
\end{equation}
where $|\mathcal{C}^L_{\hat{y}_r}| = |Y_{\hat{y}_r}^L|$, $|\mathcal{C}^R_{\hat{y}_r}| = |Y_{\hat{y}_r}^R|$ for all $r=1,\ldots,k$.
\end{itemize}
Thus we obtain
\begin{equation}\label{eq:Y-Yhat}
d_{j+1}(Y^{(0)},\mathcal{V}_{j+1,k}^=)=d_{j+1}(Y^{(0)},\hat{Y}^{(0)})=
d_{m_1}\left(Y^{(0)}_{\hat{y}_1},\mathcal{C}_{\hat{y}_1}\right)+\sum_{r=2}^{k}d_{m_r}(Y^{(0)}_{\hat{y}_r},\mathcal{C}_{\hat{y}_r})
\end{equation}
for some positive integers $m_1,\ldots,m_{k}$ with $\sum m_r=j+1$. Note here that $Y^{(0)}_{\hat{y}_1}$ starts with $\{0\}$, which here and below is the reason for separating the term $d_{m_1}\left(Y^{(0)}_{\hat{y}_1},\mathcal{C}_{\hat{y}_1}\right)$ from the rest of the summation.
Separating $\{0\} \in Y^{(0)}_{\hat{y}_1}$ from the rest of $Y^{(0)}_{\hat{y}_1}$, this allows to write (\ref{eq:Y-Yhat}) as
\begin{eqnarray}\label{eq:d-left-right}
d_{j+1}(Y^{(0)},\mathcal{V}_{j+1,k}^=)&=&(\hat{y}_1-\lfloor\frac{m_1}{2}\rfloor)+
d_{\lfloor\frac{m_1}{2}\rfloor-1}\left(Y_{\hat{y}_1}^L\setminus\{0\},\mathcal{C}_{\hat{y}_1}^L\setminus\{\hat{y}_1-\lfloor\frac{m_1}{2}\rfloor\}\right)
+
\nonumber\\
&&
+d_{\lfloor\frac{m_1}{2}\rfloor-\delta_{m_1,\text{even}}}(Y_{\hat{y}_r}^R,\mathcal{C}_{\hat{y}_r}^R)
+\sum_{r=2}^{k}
\left(d_{\lfloor\frac{m_r}{2}\rfloor}(Y_{\hat{y}_r}^L,\mathcal{C}_{\hat{y}_r}^L)+
d_{\lfloor\frac{m_r}{2}\rfloor-\delta_{m_r,\text{even}}}(Y_{\hat{y}_r}^R,\mathcal{C}_{\hat{y}_r}^R)\right).
\end{eqnarray}
Here we also use the definition $d_{-1}(\cdot,\cdot)=d_0(\cdot,\cdot):=0$,
and we note here that $m_1=1$ corresponds to $\hat{y}_1=0$ and $Y_{\hat{y}_1}^R=Y_{\hat{y}_1}^L=\emptyset$, this means that all the terms associated with $m_1$ in (\ref{eq:d-left-right}) (the first three terms) are zeros in this case. Thus, we assume that $m_1\geq 2$, and hence $\hat{y}_1\geq 1$.

To avoid the distinction of the first droplet associated with $\hat{y}_1$ in the formulas, we slightly abuse notation and denote $m_1-2$ by $m_1$, then it is easy to check that this means that $\lfloor\frac{m_1}{2}\rfloor-1$ is mapped to $\lfloor\frac{m_1}{2}\rfloor$, i.e., we get
\begin{equation}\label{eq:Y-Yhat2}
d_{j+1}(Y^{(0)},\mathcal{V}_{j+1,k}^=)=(\hat{y}_1-\lfloor\frac{m_1}{2}\rfloor-1)+
\sum_{r=1}^{k}
\left(d_{\lfloor\frac{m_r}{2}\rfloor}(Y_{\hat{y}_r}^L,\mathcal{C}_{\hat{y}_r}^L)+
d_{\lfloor\frac{m_r}{2}\rfloor-\delta_{m_r,\text{even}}}(Y_{\hat{y}_r}^R,\mathcal{C}_{\hat{y}_r}^R)\right)
\end{equation}
where $\mathcal{C}^L_{\hat{y}_r}$ and $\mathcal{C}^R_{\hat{y}_r}$ are given by the formulas (\ref{eq:C-CL-CR}) with $m_1\in\mathbb{N}_0$, $m_2,\ldots,m_{k}\in\mathbb{N}$, with $\sum m_r=(j-1)$.

Using the equality (\ref{eq:Y-Yhat2}), the $Y$-sum in (\ref{eq:main2}) is bounded from above by
\begin{eqnarray}\label{eq:main-long}
\lefteqn{\sum_{\tiny
\begin{array}{c}
  Y\subseteq [1,\ell],\, |Y|=j; \\
  Y^{(0)}\in\Xi_{j+1,k}
\end{array}
} \exp\left(-\gamma d_{j+1}(Y^{(0)},\mathcal{V}_{j+1,k}^=)\right) \le }  \\
\sum_{
\tiny
\begin{array}{c}
  m_1\in\mathbb{N}_0,\\
m_2,\ldots,m_{k}\in\mathbb{N}; \\
  \sum m_r=j-1
\end{array}
}&&
\sum_{\tiny
\begin{array}{c}
\hat{y}_1,\ldots,\hat{y}_k;\\
\lfloor\frac{m_1}{2}\rfloor+1\leq\hat{y}_1<\ldots<\hat{y}_k \le \ell
\end{array}
}
e^{-\gamma(\hat{y}_1-\lfloor\frac{m_1}{2}\rfloor-1)}\times\nonumber\\
&&\times
\prod_{r=1}^{k}\left(
\sum_{\tiny
\begin{array}{c}
  Y^L_r\subseteq(\hat{y}_{r-1},\hat{y}_r)  \\
  |Y^L_r|=\lfloor\frac{m_r}{2}\rfloor
\end{array}
}
e^{-\gamma d_{\lfloor\frac{m_r}{2}\rfloor}(Y^L_r,\mathcal{C}^L_{\hat{y}_r})}
\sum_{\tiny
\begin{array}{c}
Y^R_r\subseteq(\hat{y}_r,\hat{y}_{r+1})\\
|Y^R_r|=\lfloor\frac{m_r}{2}\rfloor-\delta_{m_r,\text{even}}
\end{array}
}e^{-\gamma d_{|Y^R_r|}(Y^R_r,\mathcal{C}^R_{\hat{y}_r})}\right) \notag
\end{eqnarray}
where for convenience, we used $\hat{y}_0$ and $\hat{y}_{k+1}$ to denote $\lfloor\frac{m_1}{2}\rfloor+1$ and $\ell$, respectively. When $\lfloor\frac{m_r}{2}\rfloor$ or $\lfloor\frac{m_r}{2}\rfloor-\delta_{m_r,\text{even}}$ are $0$ or $-1$, the corresponding sums in the last line are interpreted as $1$. That this is an upper bound is due to the fact that we have not required in the last line that $Y_r^R$ lies to the left of $Y_{r+1}^L$.

Next, we find a bound for the two sums inside the big product in (\ref{eq:main-long}). To ease notations, we suppress the subscript $r$ and define $m:=\lfloor\frac{m_r}{2}\rfloor-\delta_{m_r,\text{even}}$ (which is $\ge 1$ in the non-trivial cases).  The second sum inside the product can be bounded as follows
\begin{eqnarray}\label{eq:main-long:3}
\sum_{\tiny
\begin{array}{c}
Y^R_r\subseteq(\hat{y}_r,\hat{y}_{r+1})\\
|Y^R_r|=m
\end{array}
}e^{-\gamma d_{m}(Y^R_r,\mathcal{C}^R_{\hat{y}_r})}
&\leq&
\sum_{
\hat{y}<y_1<\ldots<y_{m}<\infty
}e^{-\gamma \sum_{j=1}^{m}(y_j-(\hat{y}+j))}\\
&=&
\sum_{
0<y_1<\ldots<y_{m}<\infty
}e^{-\gamma \sum_{j=1}^{m}(y_j-j)}\nonumber\\
&=&
\sum_{n_1,n_2,\dots,n_m=0}^\infty (e^{-\gamma m})^{n_1} (e^{-\gamma(m-1)})^{n_2}\cdots (e^{-\gamma})^{n_m} \notag \nonumber \\
&=&
\prod_{j=1}^m \frac{1}{1-\exp(-\gamma j)} \notag
\end{eqnarray}
This increasing product converges (to $C(\alpha)^{\frac{1}{2}}$ in (\ref{eq:alphaRenyi})) by elementary facts. Here, we applied the change of coordinates $y_j\mapsto y_j-\hat{y}$ in the first-to-second line, and $n_j\mapsto y_j-y_{j-1}-1$ for $j=1,\ldots,m$ with $y_0:=0$ in the second-to-third line.

A similar argument yields the same bound for the first sum inside the product, i.e.,
\begin{equation}\label{eq:main-long:2}
\sum_{\tiny
\begin{array}{c}
  Y^L_r\subseteq(\hat{y}_{r-1},\hat{y}_r)  \\
  |Y^L_r|=\lfloor\frac{m_r}{2}\rfloor
\end{array}
}
e^{-\gamma d_{\lfloor\frac{m_r}{2}\rfloor}(Y^L_r,\mathcal{C}^L_{\hat{y}_r})}\leq C(\alpha)^{1/2}.
\end{equation}

The  sum over $\{\hat{y}_r\}_{r=1}^{k}$ in (\ref{eq:main-long}) can be bounded as
\begin{equation}\label{eq:main-long:1}
\sum_{\tiny
\begin{array}{c}
\hat{y}_1,\ldots,\hat{y}_{k};\\
\lfloor\frac{m_1}{2}\rfloor+1\leq\hat{y}_1<\ldots<\hat{y}_{k}\leq\ell
\end{array}
}
e^{-\gamma(\hat{y}_1-\lfloor\frac{m_1}{2}\rfloor-1)}\leq
\sum_{\hat{y}_1=\lfloor\frac{m_1}{2}\rfloor+1}^\infty \sum_{\hat{y}_2,\ldots,\hat{y}_{k}=1}^\ell e^{-\gamma(\hat{y}_1-\lfloor\frac{m_1}{2}\rfloor-1)}=\frac{\ell^{k-1}}{1-e^{-\gamma}}.
\end{equation}
The bounds (\ref{eq:main-long:3}), (\ref{eq:main-long:2}), and (\ref{eq:main-long:1}) allow for bounding (\ref{eq:main-long}) as
\begin{equation}\label{eq:main-long-bound}
\frac{C(\alpha)^{k}\ell^{k-1}}{1-e^{-\gamma}}
\sum_{
\tiny
\begin{array}{c}
  t_1,\ldots,t_{k}\in\mathbb{N}_0 \\
  \sum t_r=j-1
\end{array}
} 1=\frac{C(\alpha)^{k}\ell^{k-1}}{1-e^{-\gamma}}
\binom{(j-1)+k-1}{j-1}
\end{equation}
where we use an elementary fact from combinatorics on multiset coefficients, e.g. \cite[page 38]{Feller}.
Then we observe that
\begin{equation}\label{eq:bound-sum-j}
\sum_{j=\max\{k-1,1\}}^{\ell-1}\binom{j+k-2}{j-1}
\leq
\sum_{m=0}^{\ell-1}\binom{m+k-1}{m}
=
\binom{\ell+k-1}{\ell-1}
=\prod_{j=1}^{k}\frac{\ell+j-1}{j}\leq \ell^k
\end{equation}
where we used the elementary identity
\begin{equation}
\sum_{j=0}^n\binom{j+\ell}{j}=\binom{n+\ell+1}{n}.
\end{equation}

Finally, the entanglement bound (\ref{eq:alphaRenyi}) follows from substituting (\ref{eq:bound-sum-j}) and (\ref{eq:main-long-bound}) in (\ref{eq:main2}), and taking the sum over $k$ using the bound
$\sum_{k=1}^K x^{k-1}\leq K x^{K-1}$, for $x\geq 1$.
\end{proof}

\section{Some thoughts on the (Random) Ising Model} \label{sec:Ising}

In this section we will complement our discussion of the Ising phase of the XXZ chain with some detailed calculations for the Ising limit $\Delta=\infty$ and, in particular, proof Theorems~\ref{thm:detasymp} and \ref{dislowK}.

Besides combinatorial arguments, all we will use as basic ingredients are two elementary facts about the bipartite entanglement of a pure state $\rho = |\psi \rangle \langle \psi|$ for $\psi \in \mathcal{H} = \mathcal{H}_1 \otimes \mathcal{H}_2$ for Hilbert spaces $\mathcal{H}_1$ and $\mathcal{H}_2$:
\begin{itemize}
\item[(i)] If $\psi$ has a decomposition
\begin{equation} \label{Schmidt}
\psi = \sum_{j=1}^N \alpha_j \, e_j \otimes f_j,
\end{equation}
then its (bipartite) von Neumann entanglement entropy satisfies
\begin{equation}  \label{trivbound}
\mathcal{E}(\rho) = \mathcal{S}(\rho_1) = - \tr \rho_1 \log \rho_1 \le \log N,
\end{equation}
where $\rho_1 = \tr_{\mathcal{H}_2} \rho$ is the reduced state. This follows because $\rho_1$ acts non-trivially only on the subspace spanned by the vectors $e_j$. This does not require any assumptions on the vectors $e_j$ and $f_j$.

\item[(ii)] If $\{e_j\}$ and $\{f_j\}$ in \eqref{Schmidt} are orthonormal systems, then the reduced state becomes $\rho_1 = \sum_{j=1}^N |\alpha_j|^2 |e_j\rangle \langle e_j|$. In particular, if $\alpha_j = 1/\sqrt{N}$ for all $j$ so that $\rho_1$ has maximal entropy in span$\{e_j\}$, then
\begin{equation} \label{maxent}
\mathcal{E}(\rho) = \log N.
\end{equation}
\end{itemize}
\subsection{Counting $K$-cluster subsets} \label{sec:clustercount}

To prove Theorem~\ref{thm:detasymp} we will need to know the leading order term as $\ell\to\infty$ of the following two quantities:
Let
\begin{equation} \label{Kclust}
\mathcal{P}_{K,\ell} := |\{Y \subseteq [1,\ell]: cl(Y)=K\}| \text{ for }K\in\mathbb{N}_0\text{ (with $\mathcal{P}_{0,\ell}=1$)},
\end{equation}
and
\begin{equation} \label{Kclustend}
\tilde{\mathcal{P}}_{K,\ell} := |\{Y \subseteq [1,\ell]: \ell \in Y, \, cl(Y)=K\}| \text{ for }K\in\mathbb{N},
\end{equation}
where we recall here that $cl(Y)$ denotes the number of connected components (clusters) of configuration $Y$.
\begin{lemma} \label{lem:clustercount}
For each fixed $K\in \N$ we have
\begin{equation} \label{pasymp}
\mathcal{P}_{K,\ell} = \frac{1}{(2K)!} \ell^{2K} + \mathcal{O}(\ell^{2K-1}) \quad \mbox{as $\ell \to \infty$}
\end{equation}
and
\begin{equation} \label{tpasymp}
\tilde{\mathcal{P}}_{K,\ell} = \frac{1}{(2K-1)!} \ell^{2K-1} + \mathcal{O}(\ell^{2K-2}) \quad \mbox{as $\ell \to \infty$.}
\end{equation}
\end{lemma}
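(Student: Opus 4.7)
The plan is to reduce both counts to stars-and-bars problems by encoding a $K$-cluster configuration $Y\subseteq [1,\ell]$ by its maximal runs, i.e., by intervals $[a_1,b_1],\ldots,[a_K,b_K]$ with
\begin{equation}
1 \le a_1 \le b_1, \qquad b_i + 2 \le a_{i+1} \le b_{i+1} \quad (i=1,\ldots,K-1), \qquad b_K \le \ell.
\end{equation}
Such a configuration is determined by the non-negative ``gap lengths''
\begin{equation}
g_0 := a_1 - 1, \quad d_i := b_i - a_i \;\; (i=1,\ldots,K), \quad g_i := a_{i+1}-b_i-2 \;\; (i=1,\ldots,K-1), \quad g_K := \ell - b_K,
\end{equation}
and a short telescoping calculation yields the constraint
\begin{equation} \label{eq:sbconstraint}
\sum_{i=0}^{K} g_i + \sum_{i=1}^{K} d_i = \ell - 2K + 1.
\end{equation}

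For \eqref{pasymp} I would observe that $\mathcal{P}_{K,\ell}$ equals the number of non-negative integer solutions of \eqref{eq:sbconstraint} in $2K+1$ variables, hence by the standard stars-and-bars formula
\begin{equation}
\mathcal{P}_{K,\ell} = \binom{\ell+1}{2K}
\end{equation}
(with the convention that the binomial vanishes for $\ell < 2K-1$). Expanding this binomial coefficient gives the leading term $\ell^{2K}/(2K)!$ with an error $\mathcal{O}(\ell^{2K-1})$, which is exactly \eqref{pasymp}. A couple of small check-cases (e.g.\ $K=1$) can be used as a sanity check that the convention on ``clusters'' is being interpreted correctly.

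For \eqref{tpasymp} the condition $\ell \in Y$ forces $b_K = \ell$, which is equivalent to setting $g_K = 0$ in the above encoding. One then counts non-negative solutions of \eqref{eq:sbconstraint} in the remaining $2K$ variables, obtaining
\begin{equation}
\tilde{\mathcal{P}}_{K,\ell} = \binom{\ell}{2K-1} = \frac{\ell^{2K-1}}{(2K-1)!} + \mathcal{O}(\ell^{2K-2}),
\end{equation}
which gives \eqref{tpasymp}. No real obstacle is expected: the argument is a routine bijection-plus-asymptotics, and the only thing to be careful about is bookkeeping the $-2$ gaps between clusters (as opposed to $-1$ gaps between ordinary ordered integers), which is why the answer comes out as $\binom{\ell+1}{2K}$ rather than $\binom{\ell}{2K}$.
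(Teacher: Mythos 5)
Your proposal is correct, and it takes a genuinely different (and in fact stronger) route than the paper. The paper proves the lemma by induction on $K$: it conditions on where the first cluster ends to obtain the recursion $\mathcal{P}_{K+1,\ell}=\sum_{r}(r-1)\,\mathcal{P}_{K,\ell-r}$, and then extracts the leading asymptotics with the integral comparison test (and similarly for $\tilde{\mathcal{P}}_{K,\ell}$ by conditioning on the length of the last cluster). You instead set up a bijection with non-negative integer solutions of a single linear equation and read off exact closed forms, $\mathcal{P}_{K,\ell}=\binom{\ell+1}{2K}$ and $\tilde{\mathcal{P}}_{K,\ell}=\binom{\ell}{2K-1}$, from which the stated asymptotics are immediate. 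I checked your telescoping constraint: with $2K+1$ variables summing to $\ell-2K+1$ the stars-and-bars count is indeed $\binom{\ell+1}{2K}$, and it agrees with the paper's base cases $\mathcal{P}_{1,\ell}=\ell(\ell+1)/2$, $\tilde{\mathcal{P}}_{1,\ell}=\ell$, as well as with small cases such as $\mathcal{P}_{2,4}=5$ and $\tilde{\mathcal{P}}_{2,4}=4$. What your approach buys is an exact identity (a classical count of $K$ pairwise non-adjacent nonempty intervals in $[1,\ell]$) rather than just the two leading orders; what the paper's recursion buys is that it is reused almost verbatim in the random setting of Lemma~\ref{GellKasymp}, where an exact closed form is no longer available and one genuinely needs the inductive structure.
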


\begin{proof} We start by noting that $\mathcal{P}_{K,\ell}=0$ if $2K-1>\ell$ and proceed by an inductive argument in $K$. By a simple count, the number of subintervals of $[1,\ell]$ (not counting the empty set) is $\mathcal{P}_{1,\ell} = \ell(\ell+1)/2$. Also $\tilde{\mathcal{P}}_{1,\ell} = \ell$.

The inductive argument will be based on
\begin{equation} \label{sump}
\mathcal{P}_{K+1,\ell} = \sum_{r=2}^{\ell +1-2K} \tilde{\mathcal{P}}_{1,r-1} \,\mathcal{P}_{K,\ell-r} = \sum_{r=2}^{\ell +1-2K} (r-1) \,\mathcal{P}_{K,\ell-r}.
\end{equation}
This is seen as follows: The term for $r=2$ in the sum counts the $(K+1)$-cluster sets which contain $1$, do not contain $2$, and contain an arbitrary $K$-cluster subset of $[3,\ell]$. The $r=3$ term gives the number of $(K+1)$-cluster sets whose first cluster ends at $2$, which therefore do not contain $3$, and have $K$ clusters in $[4,\ell]$. Proceeding like this, we get a one-to-one count of all $(K+1)$-cluster subsets of $[1,\ell]$.

Now assume that \eqref{pasymp} holds for a given $K$. Then, by \eqref{sump} and the inductive assumption,
\begin{eqnarray} \label{step1}
\mathcal{P}_{K+1,\ell} & = & \frac{1}{(2K)!} \sum_{r=2}^{\ell+1-2K} (r-1)  (\ell-r)^{2K} + \sum_{r=2}^{\ell+1-2K} (r-1) (\mathcal{P}_{K,\ell-r}-\frac{1}{(2K)!} (\ell-r)^{2K}) \\
& = &  \frac{1}{(2K)!}  \sum_{r=2}^{\ell+1-2K} (r-1) (\ell-r)^{2K} + \mathcal{O}(\ell^{2K+1}), \notag \\
& = & \frac{1}{(2K)!}  \sum_{r=1}^{\ell} r (\ell-r)^{2K} + \mathcal{O}(\ell^{2K+1}). \notag
\end{eqnarray}
Now we can use the integral comparison test to see
\begin{eqnarray} \label{step2}
\sum_{r=1}^\ell r(\ell-r)^{2K} & = & \ell \sum_{s=0}^{\ell-1} s^{2K} - \sum_{s=0}^{\ell-1} s^{2K+1} \\
& = & \ell \left( \frac{\ell^{2K+1}}{2K+1} + \mathcal{O}(\ell^{2K}) \right) - \frac{\ell^{2K+2}}{2K+2} + \mathcal{O}(\ell^{2K+1}) \notag \\
& = & \frac{1}{(2K+1)(2K+2)} \ell^{2K+2} + \mathcal{O}(\ell^{2K+1}). \notag
\end{eqnarray}
Combined, \eqref{step1} and \eqref{step2} yield the inductive step, thus completing the proof of \eqref{pasymp}.

To show \eqref{tpasymp} we use
\begin{equation}
\tilde{\mathcal{P}}_{K,\ell} = \mathcal{P}_{K-1, \ell-2} + \mathcal{P}_{K-1,\ell-3} + \ldots + \mathcal{P}_{K-1, 2K-3},
\end{equation}
which one sees by ``conditioning'' the sets $Y\subseteq [1,\ell]$ with $cl(Y)=K$ and $\ell \in Y$ on the length of their last cluster. From this and \eqref{pasymp} we get
\begin{eqnarray}
\tilde{\mathcal{P}}_{K,\ell} & = & \sum_{j=2}^{\ell-2K+3} \left( \frac{1}{(2(K-1))!} (\ell-j)^{2(K-1)} + \mathcal{O}((\ell-j)^{2K-3}) \right) \\
& = & \frac{1}{(2(K-1))!} \sum_{s=1}^{\ell} s^{2(K-1)} + \mathcal{O}(\ell^{2K-2}) \notag \\
& = & \frac{1}{(2K-1)!} \ell^{2K-1} + \mathcal{O}(\ell^{2K-2}), \notag
\end{eqnarray}
again with the integral comparison test.
\end{proof}

\subsection{Proof of Theorem~\ref{thm:detasymp}}

We introduce, for any $K\in \N$, the collections of sets
\begin{equation}
\mathcal{B}_1 := \{Y \subseteq [1,\ell]: \ell \in Y, \,cl(Y)=K\}
\end{equation}
and
\begin{equation}
\mathcal{B}_2 := \{Y \subseteq [1,\ell]: \,cl(Y) \le K-1\},
\end{equation}
and write $N_{K,\ell} := |\mathcal{B}_1|+|\mathcal{B}_2|$.  Lemma~\ref{lem:clustercount} gives
\begin{equation}\label{eq:NKell}
N_{K,\ell} = \tilde{\mathcal{P}}_{K,\ell} + \mathcal{P}_{0,\ell} + \mathcal{P}_{1,\ell} + \cdots + \mathcal{P}_{K-1,\ell} = \frac{1}{(2K-1)!} \ell^{2K-1} + \mathcal{O}(\ell^{2K-2}).
\end{equation}

\begin{lemma} \label{lemdet}
(a) If $\psi \in R(\chi_{[0,K]}(H_\Lambda^\infty))$, $\|\psi\|=1$, then
\begin{equation} \label{detUBgenK}
\mathcal{E}(\rho_\psi) \le \log (N_{K,\ell}+1).
\end{equation}

(b) If $L$ and $\ell$ are sufficiently large, then there exists a normalized $\psi_0 \in R(\chi_{[0,K]}(H_\Lambda^\infty))$ such that
\begin{equation} \label{genKsat}
\mathcal{E}(\rho_{\psi_0}) = \log (N_{K,\ell}+1).
\end{equation}
\end{lemma}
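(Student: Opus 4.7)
In the Ising limit the product states $\phi_X$ are eigenvectors of $H_\Lambda^\infty$ with eigenvalues $cl(X)$, so $R(\chi_{[0,K]}(H_\Lambda^\infty))=\mathrm{span}\{\phi_X:cl(X)\le K\}$. I would start by writing an arbitrary such $\psi$ in the form $\psi=\sum_{Y\subseteq\Lambda_0}\phi_Y\otimes\xi_Y$ with $\xi_Y:=\sum_{Z\subseteq\Lambda_0^c}c_{Y\cup Z}\phi_Z\in\mathcal{H}_{\Lambda_0^c}$, and exploiting the elementary combinatorial identity
\[ cl(Y\cup Z)=cl(Y)+cl(Z)-\mathbf{1}[\ell\in Y,\ \ell+1\in Z]. \]
This forces $cl(Y)\le K$ whenever $\xi_Y\ne 0$, and partitions such $Y$ into three classes: (A) $Y\in\mathcal{B}_2$, i.e.\ $cl(Y)\le K-1$; (B) $Y\in\mathcal{B}_1$, i.e.\ $cl(Y)=K$ with $\ell\in Y$; and (C) $cl(Y)=K$ with $\ell\notin Y$. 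The decisive observation is that in class (C) the constraint $cl(Y\cup Z)\le K$ forces $Z=\emptyset$, so $\xi_Y$ is a scalar multiple of the single vector $\phi_\emptyset$.

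For part (a), the class-(C) part of $\psi$ then collapses into one product vector $\bigl(\sum_{Y\in(\mathrm{C})}c_Y\phi_Y\bigr)\otimes\phi_\emptyset$, which together with the $|\mathcal{B}_1|+|\mathcal{B}_2|=N_{K,\ell}$ summands coming from classes (A) and (B) exhibits $\psi$ in the form $\sum_{j=1}^{N_{K,\ell}+1}e_j\otimes f_j$. Fact (i) from the start of Section~\ref{sec:Ising} then gives \eqref{detUBgenK} immediately, with no need for orthonormality of the $e_j$'s or $f_j$'s.

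For part (b), I would engineer $\psi_0$ so as to realize an orthonormal Schmidt decomposition with $N_{K,\ell}+1$ equal amplitudes. For $\ell\ge 2K$, pick one reference configuration $Y_0$ of class (C) (e.g.\ $Y_0=\{1,3,\ldots,2K-1\}$). For each $Y\in\mathcal{B}_2$ pick $Z_Y$ to be a single interval contained in $[\ell+2,L]$, and for each $Y\in\mathcal{B}_1$ pick $Z_Y$ to be a non-empty interval of the form $\{\ell+1,\ldots,\ell+m\}$, choosing pairwise distinct intervals within each class; for $L$ large enough the available supply in both classes exceeds $|\mathcal{B}_2|$ and $|\mathcal{B}_1|$ respectively. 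The $\mathcal{B}_2$-type $Z_Y$'s avoid $\ell+1$ while the $\mathcal{B}_1$-type ones contain it, so all chosen $Z_Y$'s together with $\emptyset$ remain distinct subsets of $\Lambda_0^c$, and the identity above shows $cl(Y\cup Z_Y)\le K$ in every case. Setting
\[ \psi_0:=\frac{1}{\sqrt{N_{K,\ell}+1}}\Bigl(\phi_{Y_0}\otimes\phi_\emptyset+\sum_{Y\in\mathcal{B}_1\cup\mathcal{B}_2}\phi_Y\otimes\phi_{Z_Y}\Bigr), \]
both tensor factors form orthonormal systems of canonical basis vectors with equal weights $1/\sqrt{N_{K,\ell}+1}$, so fact (ii) yields \eqref{genKsat}.

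The main obstacle is purely bookkeeping: cleanly tracking which $(Y,Z)$ pairs are allowed under $cl(Y\cup Z)\le K$, isolating the rigid class-(C) piece so that it contributes at most the single ``$+1$'' in part (a), and verifying that for $L$ sufficiently large enough distinct $Z_Y$'s are actually available in $[\ell+1,L]$ to carry out the construction in part (b).
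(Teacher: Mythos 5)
Your proposal is correct and follows essentially the same route as the paper: the identity $cl(Y\cup Z)=cl(Y)+cl(Z)-\mathbf{1}[\ell\in Y,\,\ell+1\in Z]$ yields exactly the paper's three-way case split (your classes (A), (B), (C) are the paper's cases (iii), (ii), (i)), the class-(C) terms collapse onto $\phi_\emptyset$ to give the single extra summand, and facts (i) and (ii) finish parts (a) and (b) respectively. The only difference is a bookkeeping detail in part (b): the paper simply takes $Z_j=[\ell+1,\ell+j]$ of distinct lengths for all $Y_j\in\mathcal{B}_1\cup\mathcal{B}_2$, whereas you separate the two classes by whether $Z_Y$ contains $\ell+1$; both choices produce the required orthonormal Schmidt pair.
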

Using (\ref{eq:NKell}), Theorem~\ref{thm:detasymp} follows from Lemma~\ref{lemdet} and it remains to prove the latter.

To prove part (a) we note that each $\psi$ considered here is of the form
\begin{equation} \label{VgenK}
\psi = \sum_{X\subseteq [1,L]: cl(X) \le K} \alpha_X \phi_X, \quad \sum_X |\alpha_X|^2 = 1.
\end{equation}
If we write $X = Y_X \cup Z_X$ with $Y_X := X \cap [1,\ell]$ and $Z_X := X\cap [\ell+1,L]$, then for each $X$ with $cl(X) \le K$ we must be in exactly one of the following three cases:

\begin{itemize}
\item[(i)] $cl(Y_X) = K$, $\ell \not\in Y_X$. In this case it must hold that $Z_X = \emptyset$,

\item[(ii)] $cl(Y_X) =K$, $\ell \in Y_X$. In this case $Z_X$ must be the empty set or an interval of the form $[\ell+1,\ell+j]$, $1\le j \le L-\ell$.

\item[(iii)] $cl(Y_X) \le K-1$.
\end{itemize}
This means that \eqref{VgenK} has the form
\begin{equation}
\psi = \left( \sum_{Y\subseteq [1,\ell-1]:\  cl(Y) = K} \alpha_Y \phi_Y \right) \otimes \phi_\emptyset + \sum_{Y\in \mathcal{B}_1 \cup \mathcal{B}_2} \phi_{Y} \otimes \Phi_Y,
\end{equation}
where each $\Phi_Y$ is a linear combination of those $\phi_Z$, $Z\subseteq [\ell+1,L]$, such that $X=Y\cup Z$ is either in case (ii) or (iii). This is of the form \eqref{Schmidt} with $N=N_{K,\ell}+1$ and thus proves (a).

(b) Note that $\{\phi_Y: Y \in \mathcal{B}_1 \cup \mathcal{B}_2\}$ is an orthonormal system. Enumerate $\mathcal{B}_1 = \{Y_1,\ldots, Y_{|\mathcal{B}_1|}$\} and $\mathcal{B}_2 = \{Y_{|\mathcal{B}_1| +1}, \ldots, Y_{N_{K,\ell}} \}$. For $j=1,\ldots, N_{K,\ell}$ choose $Z_j := [\ell+1,\ell+j]$ (which is possible for $L$ sufficiently large) and $X_j := Y_j \cup Z_j$. Finally, for $\ell \ge 2K$ one can pick a $Y'\subseteq[1,\ell-1]$ such that $cl(Y')=K$ and choose $X_{N_{K,\ell}+1} = Y'$.

Then, by \eqref{maxent},
\begin{equation}
\psi_0 := \frac{1}{\sqrt{N_{K,\ell}+1}} \sum_{j=1}^{N_{K,\ell}+1} \phi_{X_j} \in R(\chi_{[0,K]}(H_\Lambda^\infty))
\end{equation}
has entanglement $\mathcal{E}(|\psi_0\rangle \langle \psi_0|) = \log(N_{K,\ell}+1)$, proving (b).

\subsection{Proof of Theorem~\ref{dislowK}}

Choose $\delta = \delta_0/K$ and set $p_0 := \mathbb{P}(V_j \le \delta) >0$. Consider the i.i.d.\ Bernoulli random variables $X_j$ with $X_j=1$ if $V_j \le \delta$ and $X_j=0$ if $V_j > \delta$. Let
\begin{equation}
\mathcal{J}:= \sum_{j=1}^\ell X_j = |\{j\in [1,\ell]: V_j \leq \delta\}|.
\end{equation}

Then $\E(\mathcal{J}) = \ell p_0$ and, for any $0<a<p_0$ the Chernoff bound on large deviations (the reason for our assumption on the moment generating function) gives the existence on $\mu>0$, depending on $a$ and $p_0$, such that
\begin{equation} \label{largedev}
\mathbb{P}(\mathcal{J}\le \ell a) \leq e^{-\mu \ell} \quad \mbox{for all $\ell$}.
\end{equation}

For $V \in \Omega_1 := \{V: \mathcal{J} \ge \ell a\}$ we have that the number of (at most) $(K-1)$-element subsets of $A_{V} := \{j\in [1,\ell]: V_j \le \delta\}$ satisfies
\begin{eqnarray} \label{Asubsets}
|\{Y\subseteq A_V: |Y| \le K-1\}| & \ge & {\lfloor \ell a \rfloor \choose K-1} \\
& \ge & \left(\frac{a}{2}\right)^{K-1} \frac{1}{(K-1)!} \ell^{K-1} \notag \\ & =: & C_{K,a} \ell^{K-1} \quad \mbox{for $\ell > 2K/a$.} \notag
\end{eqnarray}

Let $\varepsilon>0$ and choose $\ell_0$ sufficiently large such that $\mathbb{P}(\Omega_1) \ge 1-e^{-\mu \ell_0} \ge 1- \varepsilon/2$ in \eqref{largedev} and that \eqref{Asubsets} holds for $\ell \ge \ell_0$.

Fix one such $\ell$ and consider $L \ge \ell+1$. Let $B_{V,\ell} := \{j \in [\ell+1,L]: V_j \le \delta\}$ and $\Omega_{2,\ell}:= \{V: |B_{V,\ell}| \ge C_{K,a} \ell^{K-1}\}$. Then, again by large deviations, there is $L_0$ sufficiently large (depending on $\ell$) such that for $L\ge L_0$ it holds that
\begin{equation}
\mathbb{P}(\Omega_{2,\ell}) \ge 1- \frac{\varepsilon}{2}.
\end{equation}

Let $V \in \Omega_1 \cap \Omega_{2,\ell}$. Then there exist at least $M := \lfloor C_{K,a} \ell^{K-1} \rfloor$ distinct subsets $Y_j$ of $A_V$, $j=1,2,\ldots$, such that $|Y_j| \le K-1$. There are also at least $\lfloor C_{K,a} \ell^{K-1} \rfloor$ numbers $z_j \in B_{V,\ell}$. Choose
\begin{equation} \label{varphiSchmidt}
\psi = \frac{1}{\sqrt{M}} \sum_{j=1}^M \phi_{Y_j \cup \{z_j\}}.
\end{equation}
By assumption the sets $X_j = Y_j \cup \{z_j\}$ satisfy $cl(X_j) \le K$ and $\sum_{k\in X_j} V_k \le \delta (K-1) + \delta = \delta_0$. Thus, as a linear combination of eigenfunctions to eigenvalues at most $K+\delta_0$, $\psi \in R(\chi_{[0,K+\delta_0]}(H_\Lambda^\infty(V)))$ and $\|\psi\|=1$. Moreover, by \eqref{maxent},
\begin{equation}
\mathcal{E}(\rho_\psi) = \log \lfloor C_{K,a} \ell^{K-1} \rfloor.
\end{equation}
In conclusion
\begin{eqnarray}
\lefteqn{\liminf_{\ell\to\infty} \liminf_{L\to\infty} \frac{1}{\log \ell} \E \left( \sup_{\psi}  \mathcal{E}(\rho_\psi) \right)} \\
& \ge & \liminf_{\ell\to\infty} \liminf_{L\to\infty} \frac{\mathbb{P}(\Omega_1 \cap \Omega_{2,\ell}) \log \lfloor C_{K,a} \ell^{K-1} \rfloor}{\log \ell} \notag \\
& \ge & (1-\varepsilon) (K-1). \notag
\end{eqnarray}
Letting $\varepsilon \to 0$ completes the proof.

\subsection{A conjectured upper bound in the random case} \label{conjecture}

We believe that the following is true for every $K\in \N$:

\vspace{.3cm}

\begin{conjecture} \label{conjupbound}
Let $V_j$ be non-negative i.i.d.\ random variables with distribution $\mu$ such that $0 \in \supp \mu \not= \{0\}$. Let $K\in \N$ and $\delta_0<1$. Then, for the supremum over all normalized $\psi \in R(\chi_{[0,K+\delta_0]}(H_\Lambda^\infty(V)))$,
\begin{equation} \label{eq:disupper2}
\limsup_{\ell\to\infty} \limsup_{L\to\infty} \frac{1}{\log \ell} \,\E \left( \sup_{\psi} \mathcal{E}(\rho_\psi) \right) \le K-1.
\end{equation}
\end{conjecture}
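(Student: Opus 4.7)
I would adapt the Schmidt-rank bookkeeping of Lemma~\ref{lemdet}(a) to the random setting, with the deterministic cluster counts replaced by expectations controlled via a standard Chernoff bound. For $\psi\in R(\chi_{[0,K+\delta_0]}(H_\Lambda^\infty(V)))$, expand $\psi=\sum_X\alpha_X\phi_X$ (so $cl(X)+\sum_{j\in X}V_j\le K+\delta_0$); split $X=Y\sqcup Z$ with $Y:=X\cap[1,\ell]$, $Z:=X\cap[\ell+1,L]$, and absorb the $Z=\emptyset$ terms into one product state:
\[
\psi=\Psi_0\otimes\phi_\emptyset+\sum_{Y\in\mathcal{Y}_+(V,\ell,L)}\phi_Y\otimes\tilde\Phi_Y,
\]
where $\mathcal{Y}_+(V,\ell,L)$ is the set of $Y\subseteq[1,\ell]$ admitting some non-empty $Z\subseteq[\ell+1,L]$ with $Y\cup Z$ still in the spectral window. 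Since $\Psi_0\perp\phi_Y$ for $Y\in\mathcal{Y}_+$ (the two families of $Y$'s are disjoint by construction) and distinct $\phi_Y$'s are orthonormal, the reduced state has rank at most $1+|\mathcal{Y}_+|$, giving the uniform bound $\mathcal{E}(\rho_\psi)\le\log(1+|\mathcal{Y}_+(V,\ell,L)|)$. Jensen's inequality then reduces the whole conjecture to showing $\E|\mathcal{Y}_+(V,\ell,L)|\le C\,\ell^{K-1}$ uniformly in $L$.

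\textbf{Counting via Chernoff.} Because $Z\subseteq[\ell+1,L]$ can merge with $Y$ only across the boundary pair $\{\ell,\ell+1\}$, $\min_{Z\ne\emptyset}cl(Y\cup Z)=cl(Y)$ when $\ell\in Y$ and $cl(Y)+1$ otherwise. Combined with $V_j\ge 0$, this forces the necessary conditions
\[
\textstyle\sum_{j\in Y}V_j\le K+\delta_0-cl(Y)-\mathbf{1}_{\ell\notin Y},\qquad cl(Y)\le K-\mathbf{1}_{\ell\notin Y},
\]
the cluster cap using $\delta_0<1$ and integrality. The elementary Chernoff bound $\mathbb{P}(\sum_{j=1}^nV_j\le c)\le e^{ct-ng(t)}$ with $g(t):=-\log\E e^{-tV_1}>0$ (valid since $V_j\ge 0$ and $\mu\ne\delta_0$), combined with the stars-and-bars enumeration $\binom{\ell-n+1}{k}\binom{n-1}{k-1}$ of $k$-cluster subsets of $[1,\ell]$ with $|Y|=n$ (and its $\ell\in Y$-variant $\binom{\ell-n}{k-1}\binom{n-1}{k-1}$, which saves a factor $\ell$), together with the convergence of $\sum_n n^{k-1}e^{-ng(t)}$, yields
\[
\E\bigl|\{Y:cl(Y)=k,\,\textstyle\sum_YV_j\le c\}\bigr|\le C_{k,c}\,\ell^k,\quad \E\bigl|\{Y:\ell\in Y,cl(Y)=k,\,\textstyle\sum_YV_j\le c\}\bigr|\le C_{k,c}\,\ell^{k-1}.
\]
The dominant contributions $(\ell\in Y,k=K)$ and $(\ell\notin Y,k=K-1)$ are each $O(\ell^{K-1})$, so $\E|\mathcal{Y}_+|\le C\ell^{K-1}$. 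Plugging into the Jensen bound, dividing by $\log\ell$, and taking $\limsup_L$ followed by $\limsup_\ell$ gives the conjectured constant $K-1$.

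\textbf{Main obstacle.} The probabilistic ingredient is the exponential decay $\mathbb{P}(\sum_{j=1}^nV_j\le c)\le e^{-\lambda n}$ for some $\lambda>0$, a textbook Chernoff statement for i.i.d.\ non-negative variables with $\mu\ne\delta_0$. The bookkeeping obstacle is to stitch this together with the combinatorial enumeration \emph{uniformly} across $k=0,\dots,K$ and the $\ell\in Y$/$\ell\notin Y$ dichotomy. The delicate subcase is $\ell\in Y$ with $cl(Y)=K$: the only way to keep $cl(Y\cup Z)=K$ for $Z\ne\emptyset$ is an obligatory merging interval $Z=\{\ell+1,\dots,\ell+m\}$, so the true admission criterion couples $V|_Y$ to $V_{\ell+1},\dots,V_{\ell+m}$; the upper bound $\sum_{j\in Y}V_j\le\delta_0$ (via $V_j\ge 0$) is valid but wastes this coupling. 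The proposed argument still produces the sharp rate $K-1$ through the standard Jensen step; any significant improvement would seemingly require a concentration estimate for $|\mathcal{Y}_+(V,\ell,L)|$ (second-moment or Efron--Stein type), which is where the ``elementary probability'' input the authors flag is presumably needed to complete a fully uniform treatment for all $K\ge 3$.
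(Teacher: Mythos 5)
First, be aware that the paper does not actually prove this statement: it is left as Conjecture~\ref{conjupbound}, the authors report proofs only for $K=1$ and $K=2$, and they explicitly identify the missing ingredient for $K\ge 3$ as a variance bound of the type \eqref{Varconj} for the cluster counts $\mathcal{Q}_{K,\ell}$. So there is no paper proof to compare against; what follows is an assessment of your argument on its own terms.

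Your route appears to be correct and complete, and the step that gets past the point where the authors stall is the concavity (Jensen) inequality $\E\log(1+N)\le\log(1+\E N)$ applied to $N=|\mathcal{Y}_+(V,\ell,L)|$. Once the deterministic Schmidt-rank bound $\sup_\psi\mathcal{E}(\rho_\psi)\le\log(1+|\mathcal{Y}_+|)$ is in place --- and it is: your decomposition is exactly the random analogue of the case analysis in the proof of Lemma~\ref{lemdet}(a), with all $Z=\emptyset$ contributions absorbed into a single product term --- the conjecture reduces to a first-moment bound on $|\mathcal{Y}_+|$. That bound follows from linearity of expectation over the stars-and-bars families of candidate sets $Y$ together with the one-sided Chernoff estimate $\PP\bigl(\sum_{j\in Y}V_j\le c\bigr)\le e^{tc}\bigl(\E e^{-tV_1}\bigr)^{|Y|}$: the exponential decay in $|Y|$ suppresses the $\ell^{k}$ coming from the cluster-length degrees of freedom, which is precisely the mechanism behind the drop from $2K-1$ to $K-1$. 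No second-moment information enters anywhere, so the variance estimate \eqref{Varconj} flagged by the authors is simply bypassed. Your necessary conditions ($cl(Y)\le K-\mathbf{1}_{\ell\notin Y}$ and the corresponding budget for $\sum_{j\in Y}V_j$, both independent of $L$ and of $V$ restricted to $[\ell+1,L]$) are correct, and that independence gives the uniformity in $L$ needed for the inner $\limsup$.

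Two blemishes, neither of which affects the conclusion. The parenthetical claim that $\Psi_0\perp\phi_Y$ for $Y\in\mathcal{Y}_+$ is false in general (e.g.\ $Y=\emptyset$ lies in $\mathcal{Y}_+$ while $\phi_\emptyset$ typically contributes to $\Psi_0$); but no orthogonality is needed, since \eqref{trivbound} bounds the entropy by the logarithm of the number of product terms with no assumption on the factors. Also, writing ``$\mu\ne\delta_0$'' for ``$\mu$ is not the point mass at $0$'' collides with the $\delta_0$ of the statement; say $\PP(V_1>0)>0$ instead. Your closing paragraph undersells the argument: concentration for $|\mathcal{Y}_+|$ would only be needed for statements stronger than \eqref{eq:disupper2} (e.g.\ almost-sure asymptotics or matching lower asymptotics), not to establish the stated upper bound for $K\ge 3$. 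Since this would settle a question the paper leaves open, the details --- the exact enumeration $\binom{n-1}{k-1}\binom{\ell-n+1}{k}$ versus $\binom{n-1}{k-1}\binom{\ell-n}{k-1}$, the convergence of $\sum_n n^{k-1}e^{-ng(t)}$, and the order of the two $\limsup$'s --- deserve to be written out in full and checked carefully.
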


\vspace{.3cm}

This would imply that for every $K\in \N$, $0<\delta_0<1$ and i.i.d.\ random variables which satisfy the assumption of both Theorem~\ref{dislowK} and the Conjecture (for example, if the $V_j$ are uniformly distributed on $[0,1]$),
\begin{equation}
\lim_{\ell\to\infty} \lim_{L\to\infty} \frac{1}{\log \ell} \,\E \left( \sup_{\psi} \mathcal{E}(\rho_\psi) \right) = K-1.
\end{equation}

\vspace{.3cm}

We have proofs of \eqref{eq:disupper2} for $K=1$ (where it is a special case of the Beaud-Warzel result) and for $K=2$. Instead of giving a full proof for these special cases, let us discuss a probabilistic lemma, which suggests that \eqref{eq:disupper2} is true for all $K$, but also explain where a gap remains for $K\ge 3$ in deducing the full Conjecture from this.

Let $\nu_j$, $j=1,\ldots,\ell$, be i.i.d.\ Bernoulli random variables with $\mathbb{P}(\nu_j=0) =p<1$ and $\mathbb{P}(\nu_j=1) =1-p$. We consider random subsets of $[1,\ell]$,
\begin{equation}
I_\ell(\nu) := \{j\in [1,\ell]: \nu_j=0\}
\end{equation}
and, for any $K\in \N$, the number of subsets of $I_\ell(\nu)$ with $K$ connected components,
\begin{equation}
\mathcal{Q}_{K,\ell}(\nu) := |\{ X \subseteq I_\ell(\nu): cl(X)=K\}|.
\end{equation}

\begin{lemma} \label{GellKasymp}
For any $K\in \N$ and $0\le p <1$ we have
\begin{equation} \label{qbound}
\E(\mathcal{Q}_{K,\ell}) = \frac{1}{K!} \left( \frac{p}{1-p} \right)^K \ell^K + \mathcal{O}(\ell^{K-1}) \quad \mbox{as $\ell\to\infty$}.
\end{equation}
\end{lemma}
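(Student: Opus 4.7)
My approach is to convert the expectation to a deterministic combinatorial sum via linearity, carry out the enumeration exactly, and then peel off the leading $\ell^K$ term, using that a polynomial times $p^N$ has convergent sum when $p<1$.

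First, by linearity of expectation and independence of the $\nu_j$,
\[
\E(\mathcal{Q}_{K,\ell}) \;=\; \sum_{\substack{X \subseteq [1,\ell] \\ cl(X) = K}} \mathbb{P}\bigl(X \subseteq I_\ell(\nu)\bigr) \;=\; \sum_{\substack{X \subseteq [1,\ell] \\ cl(X) = K}} p^{|X|}.
\]
Second, I would enumerate the $K$-cluster subsets $X \subseteq [1,\ell]$ by total size $N:=|X|$. Writing $X$ as $K$ clusters of sizes $m_1,\ldots,m_K \geq 1$ with $\sum_i m_i = N$, separated by interior gaps $g_1,\ldots,g_{K-1} \geq 1$ and flanked by boundary gaps $g_0,g_K \geq 0$ satisfying $\sum_{i=0}^K g_i = \ell - N$, there are $\binom{N-1}{K-1}$ compositions of $N$ and, after the substitution $g_i \mapsto g_i+1$ for $1\leq i \leq K-1$, exactly $\binom{\ell-N+1}{K}$ admissible gap-configurations. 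Hence
\[
\E(\mathcal{Q}_{K,\ell}) \;=\; \sum_{N=K}^{\ell-K+1} \binom{N-1}{K-1}\binom{\ell-N+1}{K}\, p^{N}.
\]

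Third, I would extract the leading order. The telescoping identity $\binom{a}{K}-\binom{a-1}{K}=\binom{a-1}{K-1}$ gives $\bigl|\binom{\ell-N+1}{K}-\binom{\ell+1}{K}\bigr|\leq N\,\binom{\ell+1}{K-1}=O(N\,\ell^{K-1})$, and $\binom{\ell+1}{K}=\ell^K/K!+O(\ell^{K-1})$. Together,
\[
\binom{\ell-N+1}{K} \;=\; \frac{\ell^K}{K!} \;+\; O\bigl((N+1)\,\ell^{K-1}\bigr)
\]
with an $N$-independent implicit constant (valid for $N\leq \ell-K+1$). Substituting this and using the standard closed form $\sum_{N\geq K}\binom{N-1}{K-1}z^N = z^K/(1-z)^K$ for $|z|<1$, together with the fact that the truncation error from terms $N>\ell-K+1$ decays exponentially in $\ell$, I obtain
\[
\E(\mathcal{Q}_{K,\ell}) \;=\; \frac{\ell^K}{K!}\cdot\frac{p^K}{(1-p)^K} \;+\; O(\ell^{K-1}),
\]
where the remainder is bounded by $\ell^{K-1}$ times the convergent series $\sum_{N\geq K}(N+1)\binom{N-1}{K-1}p^N$, a finite constant.

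\emph{Main obstacle.} There is essentially none of substance, which is consistent with the paper's remark that input of this flavor should require only elementary probability. The only point meriting a line of care is that the $O(N\,\ell^{K-1})$ binomial correction is summable against $\binom{N-1}{K-1}p^N$, which is precisely where the geometric decay coming from $p<1$ is used; without it the error estimate would fail. The degenerate case $p=0$ is immediate since then $I_\ell(\nu)=\emptyset$ almost surely and both sides vanish.
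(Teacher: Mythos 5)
Your proof is correct, and it takes a genuinely different route from the paper's. The paper argues by induction on $K$: the base case $K=1$ is handled by conditioning on the position of the right-most index with $\nu_n=1$ (the events $\Omega_n$), which yields a recursion solved by $\E(\mathcal{Q}_{1,\ell})=\ell p+(\ell-1)p^2+\cdots+p^\ell$, and the inductive step uses the auxiliary quantity $\tilde{\mathcal{Q}}_{1,\ell}$ together with a convolution identity (the probabilistic analogue of \eqref{sump}), independence and translation invariance. You instead compute the expectation in closed form by linearity, $\E(\mathcal{Q}_{K,\ell})=\sum_{cl(X)=K}p^{|X|}$, enumerate $K$-cluster sets of total size $N$ by stars and bars to get the exact formula $\sum_{N}\binom{N-1}{K-1}\binom{\ell-N+1}{K}p^N$ (which for $K=1$ reproduces the paper's \eqref{Eq11} exactly), and then extract the leading term via the generating function $\sum_{N\ge K}\binom{N-1}{K-1}z^N=z^K(1-z)^{-K}$; your error control is sound, since the telescoping bound on $\binom{\ell-N+1}{K}-\binom{\ell+1}{K}$ is uniform in $N$ and the correction is summable against $\binom{N-1}{K-1}p^N$ precisely because $p<1$. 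Your approach has the advantage of producing an exact expression with no induction, and the template \emph{sum over configurations of $p^{|X|}$} extends naturally to the second moment (summing $p^{|X\cup X'|}$ over pairs), which is directly relevant to the variance bound \eqref{Varconj} that the authors leave open; the paper's inductive argument, on the other hand, runs parallel to the deterministic count of Lemma~\ref{lem:clustercount} and so fits the surrounding structure of the section.
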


\begin{proof}
We first prove \eqref{qbound} for $K=1$. For this we condition on the sets
\begin{equation} \label{Odecomp}
\Omega_0 = \{\nu: \nu_1= \ldots =\nu_\ell=0\}, \quad \Omega_n  =  \{\nu: \nu_n = 1, \nu_{n+1} = \ldots = \nu_\ell =0\}, \:n=1,\ldots,\ell.
\end{equation}
Note that $\mathbb{P}(\Omega_0) = p^{\ell}$, $\mathbb{P}(\Omega_n) = (1-p) p^{\ell-n}$ for all $n=1,\ldots,\ell$. Note also that $\mathcal{Q}_{1,\ell}(\nu) = \ell(\ell+1)/2$ for $\nu \in \Omega_0$ and $\mathcal{Q}_{1,\ell}(\nu) = \mathcal{Q}_{1,n-1}(\nu_1,\ldots,\nu_{n-1}) + (\ell-n)(\ell-n+1)/2$ for $\nu \in \Omega_n$, $n=1,\dots,\ell$. This yields
\begin{eqnarray} \label{recform}
\E(\mathcal{Q}_{1,\ell}) & = & \sum_{n=1}^\ell \E(\mathcal{Q}_{1,\ell} \chi_{\Omega_n}) + \E(\mathcal{Q}_{1,\ell} \chi_{\Omega_0}) \\
& = & (1-p) \sum_{n=1}^{\ell} p^{\ell-n} \left( \E(\mathcal{Q}_{1,n-1}) + \frac{ (\ell-n)(\ell-n+1)}{2} \right)  +  \frac{\ell(\ell+1)}{2}  p^\ell, \notag
\end{eqnarray}
setting $\E(\mathcal{Q}_{1,0}):=0$. Using also $\E(\mathcal{Q}_{1,1})=p$, this recursion can be seen to be solved by
\begin{eqnarray} \label{Eq11}
\E (\mathcal{Q}_{1,\ell}) & = & \ell p + (\ell-1) p^2 + (\ell-2) p^3 + \ldots + 2p^{\ell-1} + p^{\ell} \\
& = & \frac{p}{1-p} \ell + \mathcal{O}(1). \notag
\end{eqnarray}

We also introduce the random analogue of \eqref{Kclustend},
\begin{equation}
\tilde{\mathcal{Q}}_{K,\ell}(\nu) := |\{X \subseteq I_\ell(\nu):\,cl(X)=K,\, \ell \in X\}
\end{equation}
and see by a similar argument that
\begin{equation} \label{tEq11}
\E(\tilde{\mathcal{Q}}_{1,\ell}) = \frac{p}{1-p} + \mathcal{O}(p^{\ell}).
\end{equation}

By an analogue of \eqref{sump}, independence and translation invariance,
\begin{equation}
\E(\mathcal{Q}_{K+1,\ell}) = \sum_{r=2}^{\ell+1-2K} \E(\tilde{\mathcal{Q}}_{1,r-1}) \E(\mathcal{Q}_{K,\ell-r}).
\end{equation}
Using \eqref{Eq11} and \eqref{tEq11}, we can use this to prove \eqref{qbound} by induction on $K$.
\end{proof}

Note the crucial difference of this bound to the corresponding deterministic bound \eqref{pasymp}: As soon as $I_\ell(\nu)$ is truly random, i.e., $p<1$, the leading term order in the number of $K$-component subsets drops from $\ell^{2K}$ to $\ell^K$. It is essentially this drop which is the reason for the corresponding drop from $2K-1$ in \eqref{conjdet} to $K-1$ in \eqref{eq:disupper2}. However, our approach to proving this also requires some control of the variance of $\mathcal{Q}_{K,\ell}$. We believe that there is a bound of the form
\begin{equation} \label{Varconj}
\text{Var}(\mathcal{Q}_{K,\ell}) \le C(p,K) \ell^{2K-1}
\end{equation}
for each $K$ and $0\le p<1$, i.e., essentially that the leading terms in $\E(\mathcal{Q}_{K,\ell}^2)$ and $(\E(\mathcal{Q}_{K,\ell}))^2$ cancel. At this point we have what we believe to be an overly complicated proof of this for $K=1$. From this one can indeed conclude that
\eqref{eq:disupper2} holds for $K=2$ (by an argument somewhat reminiscent of the proof of the upper bound in the deterministic setting of Theorem~\ref{thm:detasymp}). We won't write down these proofs for $K=2$ here, because we hold out hope that we can settle \eqref{Varconj} and the full Conjecture~\ref{conjupbound} in the future.

\bigskip
\appendix
\section{Auxiliary Results} \label{app:aux}

Here we collect some technical lemmas and their proofs. We strongly recommend to draw many pictures while reading these proofs.

\subsection{On the distance formula for the case of the chain}
\label{app:distform}

If the underlying graph is the chain or a subinterval thereof, when dealing with its $n$--th symmetric power, it is natural to only consider \emph{ordered} $n$--element subsets of the form $X=\{x_1<x_2<\cdots<x_n\}, Y=\{y_1<y_2<\cdots<y_n\}$. In particular, the ordered labeling used here indeed gives the minimizer in
\begin{equation}  \label{ndist}
d_n(X,Y)=\min_{\pi \in \mathfrak{S}_n} \sum_{j=1}^n |x_j - y_{\pi(j)}|=\sum_{j=1}^n |x_j - y_{j}|\:,
\end{equation}
the general distance formula on symmetric product graphs, see \eqref{distformula}. Indeed, one sees this by the following argument:
\begin{itemize}
\item[(i)] Consider first the case $n=2$. Thus let $x_1<x_2$ and $y_1 < y_2$. It follows by inspection (of the possible relative locations of $x_1$, $x_2$ to $y_1$, $y_2$) that for all possible cases
\begin{equation} \label{A1i}
|x_1 - y_1| + |x_2-y_2| \le |x_1-y_2| + |x_2-y_1|.
\end{equation}

\item[(ii)] In the general case, let $x_1 < x_2 < \ldots < x_n$ and $y_1 < y_2 < \ldots < y_n$ be two ordered configurations. Let $\pi \in \mathfrak{S}_n$ not be the identity. Thus there is at least one $k$ such that $y_{\pi(k+1)} < y_{\pi(k)}$. Therefore, by (i),
\begin{eqnarray}
\sum_j |x_j - y_{\pi(j)}| & = & \sum_{j\not\in \{k,k+1\}} |x_j- y_{\pi(j)}| + |x_k-y_{\pi(k)}| + |x_{k+1}- y_{\pi(k+1)}| \\
& \ge & \sum_{j\not\in \{k,k+1\}} |x_j- y_{\pi(j)}| + |x_k - y_{\pi(k+1)}| + |x_{k+1}- y_{\pi(k)}|. \notag
\end{eqnarray}
After finitely many transpositions of this type one arrives back at the ordered configuration $(y_1,\ldots,y_n)$, proving \eqref{ndist}.
\end{itemize}
\subsection{Closest droplets and $K$-cluster configurations}

We will generally label $n$-particle configurations $W\in \mathcal{V}_n$ in increasing order $W=\{w_1< w_2 < \ldots < w_n\}$.

\begin{lemma} \label{closestdrop}
	Let $W \in \mathcal{V}_n$.
	\begin{itemize}
	\item[(i)] If $n$ is odd, then the unique $n$-droplet $\hat{W}=\{\hat{w}_1<\ldots<\hat{w}_n\}$ closest to $W$ with respect to $d_n(\cdot,\cdot)$ is the droplet which shares its central particle with $W$, i.e., $w_{(n+1)/2} = \hat{w}_{(n+1)/2}$.
	
	\item[(ii)] If $n$ is even, then there are $w_{\frac{n}{2}+1} - w_{\frac{n}{2}}+1$ (the size of the ``central gap'' in $W$ plus one) closest $n$-droplets, with central particles at $w_{\frac{n}{2}} , \ldots, w_{\frac{n}{2}+1}$ (the positions in the central gap and its boundary), respectively.
\end{itemize}
\end{lemma}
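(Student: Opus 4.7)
The plan is to reduce the minimization of $d_n(W,\hat W)$ over $n$-droplets to the classical one-dimensional median problem for a sum of absolute deviations.

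First, I would parameterize an arbitrary $n$-droplet by its leftmost point: $\hat W_c := \{c, c+1,\ldots, c+n-1\}$ with $c\in\Z$. Since both $W$ and $\hat W_c$ are listed in increasing order, the result of Appendix~\ref{app:distform} gives
\begin{equation}
d_n(W,\hat W_c) \;=\; \sum_{j=1}^n |w_j - (c+j-1)| \;=\; \sum_{j=1}^n |u_j - c|,
\end{equation}
where I set $u_j := w_j - (j-1)$. Because the $w_j$ are strictly increasing integers, $w_{j+1}\ge w_j+1$, and hence $u_1 \le u_2 \le \ldots \le u_n$, i.e.\ the $u_j$ form a nondecreasing integer sequence. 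Thus the problem is reduced to minimizing the standard sum of absolute deviations of the ordered sample $(u_1,\ldots,u_n)$ from an integer point $c$.

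Next, I would invoke (or briefly re-derive) the well-known fact about the median minimizer of $f(c) := \sum_j |u_j-c|$: the function $f$ is piecewise affine and convex, and its slope at $c$ equals the number of $u_j$ strictly less than $c$ minus the number of $u_j$ strictly greater than $c$. Consequently, for $n$ odd, $f$ has a unique integer minimizer at $c^\ast = u_{(n+1)/2}$, while for $n$ even, the set of integer minimizers is exactly $\{c\in\Z : u_{n/2}\le c\le u_{n/2+1}\}$, which contains $u_{n/2+1}-u_{n/2}+1$ integers.

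It then remains to translate these statements about $c$ back into statements about droplets. For odd $n$, plugging in $c^\ast = u_{(n+1)/2} = w_{(n+1)/2} - (n-1)/2$ shows that the central (median) particle of $\hat W_{c^\ast}$, which sits at $c^\ast + (n-1)/2$, coincides with $w_{(n+1)/2}$; this is the unique closest droplet, proving (i). For even $n$, each closest droplet $\hat W_c$ has two middle particles, $\hat w_{n/2}=c+n/2-1$ and $\hat w_{n/2+1}=c+n/2$; as $c$ ranges over $[u_{n/2},u_{n/2+1}]\cap\Z$, a short computation using $u_j=w_j-(j-1)$ shows that these middle particles together sweep out precisely the range of integer positions from $w_{n/2}$ up to $w_{n/2+1}$ (that is, the central gap together with its two endpoints), yielding the description in (ii).

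The argument is essentially routine; the only mild obstacle is careful index bookkeeping when translating between $c$, the $u_j$, and the droplet's central position(s), to match the convention used in \eqref{eq:C-CL-CR} (where for even $m_r$ the label $\hat y_r$ denotes the right of the two middle positions). No deep input beyond the distance formula \eqref{ndist} of Appendix~\ref{app:distform} and the median characterization of the minimizer of $c\mapsto \sum_j |u_j-c|$ is needed.
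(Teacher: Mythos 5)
Your reduction to the median problem is correct and is, at bottom, the same argument as the paper's: the paper's proof compares $d_n(V\pm 1,W)$ with $d_n(V,W)$ by counting how many particles move closer or farther under a unit shift of the droplet, which is exactly the statement that the discrete slope of $c\mapsto\sum_j|u_j-c|$ equals $\#\{j:u_j\le c\}-\#\{j:u_j> c\}$; your substitution $u_j=w_j-(j-1)$ just packages this as the textbook median characterization, which is arguably the tidier way to write it.

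One point you should not gloss over concerns the even case. Carried to the end, your (correct) computation gives $u_{n/2+1}-u_{n/2}+1=w_{n/2+1}-w_{n/2}$ integer minimizers $c$, hence $w_{n/2+1}-w_{n/2}$ closest droplets, whose central particles in the convention \eqref{dropcenter} (the larger of the two middle sites) sit at $w_{n/2}+1,\ldots,w_{n/2+1}$. This does \emph{not} literally match the count $w_{n/2+1}-w_{n/2}+1$ asserted in part (ii): for $W=\{0,2\}$ the minimizers are exactly $\{0,1\}$ and $\{1,2\}$ (two droplets, not three; $\{-1,0\}$ has distance $3>1$). The statement's own parenthetical gloss ``size of the central gap plus one'' equals $w_{n/2+1}-w_{n/2}$, consistent with your computation, so the displayed formula in (ii) carries an off-by-one typo. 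Your remark that the middle particles ``sweep out'' the positions $w_{n/2},\ldots,w_{n/2+1}$ is true for the union of both middle sites over all minimizers, but it does not deliver the ``respectively'' one-droplet-per-position reading of (ii); you should flag the discrepancy rather than assert agreement. None of this affects how the lemma is used downstream: Lemma~\ref{lem:closestcluster} only needs that some closest droplet shares the central particle $w_{n/2+1}$ of $W$, which your minimizer $c=u_{n/2+1}$ provides.
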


\begin{proof}
	This follows from elementary considerations, which can be based on the following facts: If $W \in \mathcal{V}_n$, $V\in \mathcal{V}_{n,1}$ and $V\pm 1 = \{v_1 \pm 1, \ldots, v_n \pm 1\}$ denote the right and left shift of $V$ (if they are still in $\mathcal{V}_n$), then
	\begin{equation}
	|\{j: w_j < v_j\}| < |\{j:w_j \ge v_j\}| \quad \Longrightarrow \quad d_n(V-1,W) > d_n(V,W).
	\end{equation}
	\begin{equation}
	|\{j: w_j < v_j\}| = |\{j:w_j \ge v_j\}| \quad \Longrightarrow \quad d_n(V-1,W) = d_n(V,W).
	\end{equation}
	\begin{equation}
	|\{j: w_j \le v_j\}| > |\{j:w_j > v_j\}| \quad \Longrightarrow \quad d_n(V+1,W) > d_n(V,W).
	\end{equation}
	\begin{equation}
	|\{j: w_j \le v_j\}| = |\{j:w_j > v_j\}| \quad \Longrightarrow \quad d_n(V+1,W) = d_n(V,W).
	\end{equation}
	
	All of these properties follow by comparing how many $v_j$ get closer to $w_j$ or more distant from $w_j$ (in each case by one) under the corresponding shift. The last two properties are ``mirroring'' the first two.
	
	To see (i), one can use the first and third property to see that starting from the droplet which shares the same center particle with $W$ will always increase the $d_n$-distance. For the even case (ii), one sees the additional degeneracy via the second and fourth property, before the distance starts increasing under further shifting.
\end{proof}

We say that an $n$-particle droplet $\hat{X}$ is \textit{centered} at $\hat{x}$, or $\hat{x}$ is a \textit{central particle} of the droplet $\hat{X}$, if it is of the form:
\begin{equation} \label{dropcenter}
\hat{X}=\{\hat{x}-\lfloor\frac{n}{2}\rfloor,
    \ldots,\hat{x}-1,\hat{x},\hat{x}+1,\ldots,
    \hat{x}+\lfloor\frac{n}{2}\rfloor-\delta_{n,\text{even}}\}
\end{equation}
where
\begin{equation}
\delta_{n,\text{even}}=\left\{
                        \begin{array}{ll}
                          1 & \hbox{if $n$ is even} \\
                          0 & \hbox{if $n$ is odd.}
                        \end{array}
                      \right.
\end{equation}
This means that $\hat{x}$ is the middle particle of $\hat{X}$ if $n$ is odd, and for technical reasons we choose the define the central particle to be the bigger between the two middle particles when $n$ is even.

Recall the definitions of the sets $\mathcal{V}_{n,k}$, $\mathcal{V}^=_{n,k}$ and $\Xi_{n,k}$ in \eqref{nujK}, \eqref{nujK=} and \eqref{XinK}.

\begin{lemma} \label{lem:closestcluster}
Let $X = \{x_1< \ldots < x_n\} \in\Xi_{n,k}$.  Then there exists $\hat{X} = \{ \hat{x}_1 < \ldots < \hat{x}_n\} \in\mathcal{V}_{n,k}^=$ with $d_n(X,\mathcal{V}_{n,k}^=) = d_n(X,\hat{X})$ and $x_{j_r} = \hat{x}_{j_r}$, $r=1,\ldots, k$, where $\hat{x}_{j_r}$ is the central particle of the $r$-th cluster in $\hat{X}$ (written in increasing order).
\end{lemma}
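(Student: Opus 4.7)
The plan is to start from an arbitrary minimizer $Y\in\mathcal V_{n,k}^=$ of the distance to $X$, decompose it into its $k$ non-touching droplets, replace each droplet by the closest droplet of the same size provided by Lemma~\ref{closestdrop}, and then use the minimality clause in the definition of $\Xi_{n,k}$ to rule out that the replacement merges adjacent clusters into a configuration with fewer than $k$ components.

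First I would use the order-preserving version of the distance formula \eqref{ndist} justified in Appendix~\ref{app:distform}: if $Y$ consists of $k$ ordered, non-touching droplets $Y^{(1)},\ldots,Y^{(k)}$ of sizes $m_1,\ldots,m_k$ with $\sum m_r=n$, and we set $M_r:=m_1+\cdots+m_r$ and $X^{(r)}:=\{x_{M_{r-1}+1},\ldots,x_{M_r}\}$, then $d_n(X,Y)=\sum_{r=1}^k d_{m_r}(X^{(r)},Y^{(r)})$. Under the convention \eqref{dropcenter}, the central particle of $Y^{(r)}$ is the $j_r$-th element $y_{j_r}$ of $Y$ (listed in increasing order), where $j_r:=M_{r-1}+\lfloor m_r/2\rfloor+1$, and the order-preserving matching pairs it with $x_{j_r}$.

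Next, for each $r$ I would apply Lemma~\ref{closestdrop} to the block $X^{(r)}$: it produces a droplet $\hat X^{(r)}$ of size $m_r$ with $d_{m_r}(X^{(r)},\hat X^{(r)})\le d_{m_r}(X^{(r)},Y^{(r)})$ whose central particle coincides with $x_{j_r}$. (For odd $m_r$ this is the unique middle element of $X^{(r)}$; for even $m_r$, part (ii) of the lemma furnishes a minimizing droplet centered at either of the two middle elements of $X^{(r)}$, so I would pick the one centered at the upper middle $x_{j_r}$, consistent with \eqref{dropcenter}.) Setting $\hat X:=\bigcup_{r=1}^k\hat X^{(r)}$ then gives $d_n(X,\hat X)\le d_n(X,Y)=d_n(X,\mathcal V_{n,k}^=)$.

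The main obstacle is to verify $\hat X\in\mathcal V_{n,k}^=$, i.e., that after the shifts the $\hat X^{(r)}$ remain pairwise non-touching so that $\hat X$ really has $k$ clusters and not fewer. Here the minimality of $k$ in the definition of $\Xi_{n,k}$ is essential: if two of the $\hat X^{(r)}$ were to touch or overlap, then $\hat X$ would lie in $\mathcal V_{n,k'}^=$ for some $k'<k$ with $d_n(X,\hat X)\le d_n(X,\mathcal V_{n,K})$, forcing $d_n(X,\mathcal V_{n,k'}^=)=d_n(X,\mathcal V_{n,K})$ and contradicting the choice of $k$ as the smallest index in $\{1,\ldots,K\}$ realizing this minimum. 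Once this is ruled out, $\hat X$ has exactly $k$ clusters whose central particles $\hat x_{j_r}$ equal the required $x_{j_r}$, completing the construction.
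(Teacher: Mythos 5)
Your proof is correct and follows essentially the same route as the paper: decompose a distance minimizer in $\mathcal{V}_{n,k}^=$ into its $k$ ordered, non-touching droplets, split $X$ into the corresponding blocks so that the distance factorizes via \eqref{ndist}, and recenter each droplet at the (upper) middle element of its block using Lemma~\ref{closestdrop}. The one place you go beyond the paper's write-up is the explicit verification that the recentered droplets remain non-touching; the paper leaves this step implicit, whereas your appeal to the minimality of $k$ in the definition \eqref{XinK} of $\Xi_{n,k}$ closes it cleanly, so this is a welcome refinement rather than a different approach.
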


We note that one may think of the $x_{j_r}$ as ``magnets'' within $X$. The closest $k$-cluster $\hat{X}$ is found by moving the other particles in $X$ into clusters centered at the $x_{j_r}$.

\begin{proof}
By definition there is an $\tilde{X}\in\mathcal{V}_{n,k}^=$ such that
\begin{equation}\label{eq:d}
d_n(X,\tilde{X})=\min_{Y\in\mathcal{V}_{n,k}^=}d_n(X,Y).
\end{equation}
We will show that $\tilde{X}$ is of the form required for $\hat{X}$ or can be slightly modified to yield the required form.

There exist $m_1,m_2,\ldots,m_{k}\in\mathbb{N}$ with $\sum m_j=n$ such that $\tilde{X}$ can be written as a union of (non-touching) $m_r$-particle droplets $\tilde{X}_{r}$, i.e.,
\begin{equation}
\tilde{X}=\bigcup_{r=1}^k \tilde{X}_r,
\end{equation}
and we may choose the droplets $\tilde{X}_r$ as ordered in the sense that
\begin{equation}
\max \tilde{X}_r<\min \tilde{X}_{r+1}-1, \text{ for all } 1\leq r\leq k-1.
\end{equation}
We consider a corresponding partition of $X$ into $k$ ordered subsets $\{X_r\}_{r=1}^k$ each containing $m_r$ particles from $X$. In particular, for each $r\in\{1,\ldots,k\}$, we define $X_r\subseteq X$  as follows
\begin{equation}
X_1:=\{x_1,\ldots,x_{m_1}\},\ X_2:=\{x_{m_1+1},\ldots,x_{m_1+m_2}\}, \ldots, X_k:=\{x_{n-m_r-1},\ldots,x_n\}.
\end{equation}
Then, by Appendix~\ref{app:distform},
\begin{equation}\label{eq:cluster-droplet}
d_n(X,\tilde{X})= \sum_{r=1}^k d_{m_r}(X_r,\tilde{X}_r).
\end{equation}
We have $\min X_r \le \min \tilde{X}_r$ and $\max X_r \ge \max \tilde{X}_r$ for all $r$ (otherwise the distance of $\tilde{X}$ to $X$ could be reduced by shifting some of its clusters, contradicting the minimality property of $\tilde{X}$).

For each $r\in \{1,\ldots,k\}$ we consider two cases and in each case argue with Lemma \ref{closestdrop} and the minimality property of $\tilde{X}$:
\begin{itemize}
\item[(i)] $m_r$ is odd. In this case $X_j$ and $\tilde{X}_j$ must have their central particle in the same position.
\item[(ii)] $m_r$ is even. Then either $\tilde{X}_r$ or a suitable right-shift of $\tilde{X}_r$ will have the same central particle as $X_r$ (without change of the distance to $X_r$).
\end{itemize}
Accordingly modifying $\tilde{X}$ we get $\hat{X}$ of the required form.
\end{proof}

We label the droplets of $\hat{X}$ and their corresponding configurations in $X\in\mathcal{V}_{n,k}^=$ using these central particles, i.e., we write
\begin{equation}\label{E:eq:XXhat}
X=\bigcup_{r=1}^k X_{\hat{x}_r}, \text{ and } \hat{X}=\bigcup_{r=1}^k \mathcal{C}_{\hat{x}_r},\ \mathcal{C}_{\hat{x}_r}\in\mathcal{V}_{m_r,1}^=, \text{ and }\sum_{r=1}^k m_r=n.
\end{equation}

\subsection{The distances $d_{|Y|+k}(\mathcal{A}_{Y,k}, \mathcal{V}_{|Y|+k,K})$}

Finally, we prove the properties of the distances $d_{|Y|+k}(\mathcal{A}_{Y,k}, \mathcal{V}_{|Y|+k,K})$ which we have used in the proof of Proposition~\ref{prop:summarybound} above.

\begin{lemma} \label{lemma:closestconfig}
	To any $Y\subseteq \Lambda_0$ with $1\le |Y| \le \ell-1$ and $1\le k \le L-|Y|$, let
	\begin{equation}
	Y^{(k)} := Y \cup \{\ell+1, \ell+2, \ldots, \ell+k\} \in \mathcal{A}_{Y,k}.
	\end{equation}
	Then, for any $K\in \N$,
	\begin{equation} \label{distmin}
	d_{|Y|+k}(\mathcal{A}_{Y,k}, \mathcal{V}_{|Y|+k,K}) = d_{|Y|+k}(Y^{(k)}, \mathcal{V}_{|Y|+k,K}).
	\end{equation}
\end{lemma}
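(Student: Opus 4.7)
The trivial inequality $d_n(\mathcal{A}_{Y,k},\mathcal{V}_{n,K})\le d_n(Y^{(k)},\mathcal{V}_{n,K})$ is immediate from $Y^{(k)}\in\mathcal{A}_{Y,k}$; the content of the lemma is the reverse inequality. Writing a generic element of $\mathcal{A}_{Y,k}$ as $Y\cup Z$ with $Z=\{z_1<\cdots<z_k\}\subseteq\Lambda_0^c$ (so that $z_r\ge\ell+r$ for every $r$), my plan is to argue by induction on the nonnegative integer
\[ D(Z) := \sum_{r=1}^{k}(z_r-\ell-r), \]
showing that $d_n(Y\cup Z,\mathcal{V}_{n,K})\ge d_n(Y^{(k)},\mathcal{V}_{n,K})$. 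The base case $D(Z)=0$ is vacuous since then $Z=\{\ell+1,\ldots,\ell+k\}$.

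For the inductive step with $D(Z)>0$, choose the smallest index $r$ with $z_r>\ell+r$. Minimality of $r$ forces $z_s=\ell+s$ for all $s<r$, hence $z_r-1\ge\ell+r>\ell+r-1=z_{r-1}$ (with the convention $z_0:=\ell$), so $Z':=(Z\setminus\{z_r\})\cup\{z_r-1\}$ is a legitimate $k$-element subset of $\Lambda_0^c$ with $D(Z')=D(Z)-1$. It therefore suffices to prove $d_n(Y\cup Z,\mathcal V_{n,K})\ge d_n(Y\cup Z',\mathcal V_{n,K})$. Ordering $X:=Y\cup Z$ and $X':=Y\cup Z'$ as $x_1<\cdots<x_n$ and $x'_1<\cdots<x'_n$, they coincide in every coordinate except the $(j+r)$-th (with $j=|Y|$), where $x_{j+r}=z_r$ is replaced by $x'_{j+r}=z_r-1$.

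Let $\hat X\in\mathcal V_{n,K}$ attain $d_n(X,\mathcal V_{n,K})$. The ordered-pair distance formula of Appendix~\ref{app:distform} yields
\[
d_n(X',\hat X)-d_n(X,\hat X) = |z_r-1-\hat x_{j+r}|-|z_r-\hat x_{j+r}| = \begin{cases}-1 & \text{if }\hat x_{j+r}\le z_r-1,\\ +1 & \text{if }\hat x_{j+r}\ge z_r,\end{cases}
\]
so in the easy case $\hat x_{j+r}\le z_r-1$ the witness $\hat X$ itself already proves $d_n(X',\mathcal V_{n,K})\le d_n(X,\mathcal V_{n,K})-1$. In the complementary case $\hat x_{j+r}\ge z_r$ I propose to define a new competitor $\hat X'\in\mathcal V_{n,K}$ by replacing $\hat x_{j+r}$ with $\hat x_{j+r}-1$; then the $(j+r)$-th cost becomes $|z_r-1-(\hat x_{j+r}-1)|=|z_r-\hat x_{j+r}|$, matching the old one exactly while all other cost contributions are untouched, so $d_n(X',\hat X')=d_n(X,\hat X)$ as long as the shift actually yields an element of $\mathcal V_{n,K}$.

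The main obstacle is verifying this validity. The single-site left-shift of $\hat x_{j+r}$ fails precisely when either (a) $\hat x_{j+r-1}=\hat x_{j+r}-1$ (collision), or (b) $\hat x_{j+r+1}=\hat x_{j+r}+1$ combined with $\mathrm{cl}(\hat X)=K$ (splitting the right neighbour into its own cluster would exceed the budget). In scenario (b) I translate the entire maximal right-subcluster of $\hat X$ containing $\hat x_{j+r}$ rigidly to the left by one site; this preserves all interior adjacencies and the total cluster count, and a short sign-counting argument on the accumulated cost changes, combined with the optimality of $\hat X$ at $X$ applied to the same collective shift, forces $d_n(X',\hat X')\le d_n(X,\hat X)$. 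Scenario (a) forces $\hat x_{j+r-1}\ge z_r>\ell+r-1=x_{j+r-1}$ (using the minimality of $r$), so $\hat X$ has genuine ``leftward room'' at position $j+r-1$, and one cascades the shift through $\hat x_{j+r-1},\hat x_{j+r-2},\dots$ until a valid configuration emerges; the magnet description of $\hat X$ provided by Lemma~\ref{lem:closestcluster} is what guarantees the cascade terminates inside $\mathcal V_{n,K}$ in finitely many steps. Iterating this unit reduction exactly $D(Z)$ times brings $Z$ to $\{\ell+1,\ldots,\ell+k\}$ and completes the proof.
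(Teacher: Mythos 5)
Your reduction to single-particle left-shifts is where the argument breaks down: the inductive step, namely that $d_n(Y\cup Z,\mathcal{V}_{n,K})\ge d_n(Y\cup Z',\mathcal{V}_{n,K})$ when $Z'$ is obtained from $Z$ by moving the single particle $z_r$ one site to the left, is false. Take $\ell=2$, $Y=\{1\}$ (so $j=1$), $k=2$, $K=2$, $L=5$ and $Z=\{4,5\}$, so that $X=Y\cup Z=\{1,4,5\}$ has two clusters and $d_3(X,\mathcal{V}_{3,2})=0$. Your rule picks $r=1$ and produces $Z'=\{3,5\}$, i.e.\ $X'=\{1,3,5\}$, which has three clusters, so $d_3(X',\mathcal{V}_{3,2})\ge 1$. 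This is exactly your scenario (b) ($\hat x_{j+r+1}=\hat x_{j+r}+1$ with $cl(\hat X)=K$), and the repair you propose there cannot work: shifting the right-subcluster $\{4,5\}$ of $\hat X=X$ to $\{3,4\}$ gives $d_3(X',\{1,3,4\})=1>0=d_3(X,\hat X)$, and since $X'\notin\mathcal{V}_{3,2}$ no competitor at all achieves distance $0$. Moreover, no reordering of the unit moves rescues the induction in this example: $\{4,5\}\to\{3,5\}\to\{3,4\}$ is the only admissible single-particle path down to $\{\ell+1,\ell+2\}$, and the distance along it is $0,1,0$, genuinely non-monotone.

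The structural point being missed is that a single-particle left-shift can \emph{increase} the cluster count of $Y\cup Z$ (by detaching $z_r$ from the cluster on its right), which is fatal when the count is already at the budget $K$. The paper's proof avoids this by shifting the \emph{entire right-most connected component} of $V$ one step to the left in each move (Lemma~\ref{lem:onestep}); such a move never increases the number of clusters (it either preserves it or merges two clusters), and the comparison with a correspondingly shifted copy of the optimal $K$-cluster configuration then goes through. If you want to keep your $D(Z)$-induction, replace the unit move of the single site $z_r$ by a rigid unit move of the whole right-most component; that is essentially the paper's argument.
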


\begin{proof}[Proof of Lemma \ref{lemma:closestconfig}]
Let $V \in \mathcal{A}_{Y,k}$ and $W\in \mathcal{V}_{|Y|+k,K}$ a distance minimizing pair, i.e., $d_{|Y|+k}(V,W) = d_{|Y|+k}(\mathcal{A}_{Y,k}, \mathcal{V}_{|Y|+k,K})$. If $V\cap \{\ell+1,\ldots,\ell+k\} \not= \{\ell+1,\ldots,\ell+k\}$, then one can iteratively use Lemma~\ref{lem:onestep} below to move the right-most components of $V$ to the left, until one arrives at $Y^{(k)}$, without increasing the distance to $\mathcal{V}_{|Y|+k,K}$. This gives \eqref{distmin}.
\end{proof}

\begin{lemma} \label{lem:onestep}
Let $V = (v_1,\ldots, v_n) \in \mathcal{V}_n \setminus \mathcal{V}_{n,1}$ and $(v_{j+1},\ldots,v_n)$ its right-most connected component. Let $V' = (v_1,\ldots,v_j, v_{j+1}-1,\ldots, v_n-1)$. Then, for any $K\in \N$,
\begin{equation} \label{dnmon}
d_n(V', \mathcal{V}_{n,K}) \le d_n(V,\mathcal{V}_{n,K}).
\end{equation}
\end{lemma}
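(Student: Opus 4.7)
Let $W=(w_1,\dots,w_n)\in\mathcal{V}_{n,K}$ achieve the minimum $d_n(V,\mathcal{V}_{n,K})=\sum_{i=1}^n|v_i-w_i|$, where both tuples are written in increasing order (so that the ordered pairing realises the distance, by Appendix~\ref{app:distform}). The plan is to build some $W'\in\mathcal{V}_{n,K}$ satisfying $d_n(V',W')\le d_n(V,W)$, which immediately yields \eqref{dnmon}. Since $V$ is not a single cluster, $v_{j+1}\ge v_j+2$, so $V'$ is itself strictly increasing; the question is how to modify $W$ so that (a) strict monotonicity is preserved, (b) the cluster count does not exceed $K$, and (c) the shift in $V$ is absorbed into a shift in $W$ at no extra cost.

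\textbf{Case~1 (``clean shift'').} Suppose $w_{j+1}\ge w_j+2$, i.e.\ $w_j$ and $w_{j+1}$ lie in distinct clusters of $W$ (this is automatic when $j=0$). Set
\[
W':=(w_1,\dots,w_j,\,w_{j+1}-1,\dots,w_n-1).
\]
Strict monotonicity at the junction holds because $w_{j+1}-1\ge w_j+1>w_j$, and within each shifted cluster consecutive differences are unchanged. Shifting a suffix starting at the beginning of a cluster to the left by one can only merge it with the preceding cluster, never split one, so $cl(W')\le cl(W)\le K$. A direct term-by-term computation gives $|v'_i-w'_i|=|v_i-w_i|$ for every $i$, hence $d_n(V',W')=d_n(V,W)$.

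\textbf{Case~2 (the main obstacle).} Suppose instead $w_{j+1}=w_j+1$, so $w_j,w_{j+1}$ lie in a common cluster $\mathcal{C}=\{w_p,\dots,w_q\}$ of $W$ with $p\le j<j+1\le q$. The naive shift above fails because $w_{j+1}-1=w_j$. There are two natural candidates:
\begin{itemize}
\item[(a)] $W'=W$, which gives $d_n(V',W')-d_n(V,W)=\sum_{i>j}\bigl(|v_i-1-w_i|-|v_i-w_i|\bigr)$;
\item[(b)] shift the entire cluster $\mathcal{C}$ together with everything to its right by $-1$, i.e.\ $W'=(w_1,\dots,w_{p-1},w_p-1,\dots,w_n-1)$; this is strictly increasing (since $w_p\ge w_{p-1}+2$) and $cl(W')\le cl(W)$, with cost difference $\sum_{i=p}^{j}\bigl(|v_i-w_i+1|-|v_i-w_i|\bigr)$.
\end{itemize}
I would prove that at least one of (a), (b) produces a non-positive cost difference by an exchange argument exploiting the minimality of $W$. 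The key inputs are: (i) since $\{v_{j+1},\dots,v_n\}$ is a cluster of $V$ but $\{w_j,w_{j+1}\}\subseteq\mathcal{C}$, the sequence $v_i-w_i$ has a strict jump from $i=j$ to $i=j+1$ (the difference $v_i-w_i=v_i-w_p-(i-p)$ drops by at least $1$); (ii) the minimality of $W$ on the block $\mathcal{C}$ forces the constant value $w_p$ to be an $L^1$-median of $\{v_i-(i-p):i\in[p,q]\}$, which controls the balance between $\{i\in[p,q]:v_i\ge w_i\}$ and $\{i\in[p,q]:v_i<w_i\}$; (iii) further single-block perturbations that stay in $\mathcal{V}_{n,K}$ (shifting $\mathcal{C}$ up/down by $1$, or, when slack is available in the cluster budget, splitting $\mathcal{C}$ at position $j$) each give a minimality inequality. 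Combining these, if (b) produced a positive cost difference on $[p,j]$ then the $L^1$-median inequality on $\mathcal{C}$ would push the imbalance to the sub-block $[j+1,q]$, where $v_i>w_i$ uniformly (because $v_{j+1}$ lies in a higher $a$-level than $v_j$), making (a) non-positive. Conversely, if (a) failed, the deficit would sit in $[j+1,q]$, forcing (b) to succeed.

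\textbf{Expected obstacle.} The bookkeeping in Case~2 is the only non-trivial part: one must track $|\{i:v_i<w_i\}|$ versus $|\{i:v_i\ge w_i\}|$ on the sub-intervals $[p,j]$, $[j+1,q]$ and $[q+1,n]$ simultaneously, and use the block-wise median property together with the global minimality of $W$ (to handle the tail $[q+1,n]$ contribution in option~(a)). I expect the argument to close cleanly once one chooses, among all minimisers, the one whose block-constants $w_p$ realise the \emph{upper} $L^1$-medians (this ``rightmost'' minimiser maximises the number of indices with $v_i<w_i$ inside each block, which is precisely what is needed to make option~(b) succeed whenever option~(a) does not).
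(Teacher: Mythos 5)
Your overall strategy (take a minimizer $W$ of $d_n(V,\mathcal{V}_{n,K})$ and exhibit $W'\in\mathcal{V}_{n,K}$ with $d_n(V',W')\le d_n(V,W)$) is the same as the paper's, and your Case~1 is correct and complete. The problem is Case~2, which you explicitly leave as a sketch: the central claim that at least one of options (a), (b) has non-positive cost is never proved, and the proposed completion via $L^1$-medians, bookkeeping over $[p,j]$, $[j+1,q]$, $[q+1,n]$, and a special ``rightmost minimiser'' is asserted rather than carried out. As submitted this is a genuine gap, and your own closing paragraph concedes that the tail $[q+1,n]$ contribution to option (a) is not under control.

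The missing idea is a pair of structural facts about any minimizer $W$ that make the median machinery unnecessary. First, $w_n\le v_n$: otherwise every index $i$ in the right-most cluster $(w_{k+1},\dots,w_n)$ of $W$ satisfies $w_i=w_n-(n-i)>v_n-(n-i)\ge v_i$, so shifting that cluster left by one strictly decreases the distance while keeping at most $K$ clusters. Second, the right-most cluster of $W$ must contain all of $w_{j+1},\dots,w_n$: if it started at index $k+1$ with $k>j$, then for $j<i\le k$ one has $w_i\le w_n-(n-i)-1\le v_i-1$ (using $v_i=v_n-(n-i)$ for $i>j$ and $w_n\le v_n$), so moving $w_{j+1},\dots,w_k$ onto the $k-j$ sites immediately to the left of $w_{k+1}$ strictly decreases the distance without increasing the cluster count. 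Granting these, in your Case~2 the cluster $\mathcal{C}$ is forced to be the right-most cluster, so $q=n$ and the troublesome tail is empty; moreover $v_i-w_i=v_n-w_n\ge 0$ is constant on $[j+1,n]$ while $v_i-w_i\le v_n-w_n-1$ on $[p,j]$. The dichotomy then closes in one line: if $v_n>w_n$, option (a) has cost $-(n-j)<0$; if $v_n=w_n$, every $i\in[p,j]$ has $v_i<w_i$, so option (b) has cost $-(j-p+1)\le 0$. This is precisely the paper's argument; I would replace your Case~2 sketch by these two exchange lemmas.
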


\begin{proof} Choose $Z = (z_1,\ldots,z_n) \in \mathcal{V}_{n,K}$ with minimal distance to $V$. It is easy to see that $v_1 \le z_1$ and $z_n \le v_n$. Write $Z= Z_1 \cup Z_2$, where $Z_2= (z_{k+1},\ldots,z_n)$ is the right-most connected component of $Z$. Let $Z' := Z_1 \cup (Z_2-1) \in \mathcal{V}_{n,K}$. One can see that $k\le j$ ($k>j$ would be a contradiction to $Z$ having minimal distance to $V$, as in this case one could move $z_{j+1}, \ldots, z_k$ to the $k-j$ sites directly to the left of $z_{k+1}$, reducing the distance to $V$ without increasing the number of clusters of $Z$).

Now there are two cases: If $z_n<v_n$, then $d_n(V',Z) < d_n(V,Z)$. In case $z_n=v_n$, using $k\le j$, it follows that $d_n(V',Z') \le d_n(V,Z)$. Thus in both cases one gets \eqref{dnmon}.

\end{proof}

\begin{lemma} \label{lemma:justoneparticle}
	The minimum of the numbers $d_{|Y|+k}(\mathcal{A}_{Y,k}, \mathcal{V}_{|Y|+k,K})$, $1\le k \le L-|Y|$, is attained for $k=1$.
\end{lemma}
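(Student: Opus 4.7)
The goal is to show $d_{j+1}(Y^{(1)},\mathcal{V}_{j+1,K}) \le d_{j+k}(Y^{(k)},\mathcal{V}_{j+k,K})$ for all $k\ge 1$, where $j=|Y|$; by Lemma~\ref{lemma:closestconfig} this is equivalent to the stated monotonicity. The natural strategy is to take an optimal $K$-cluster configuration $\hat Z$ for $Y^{(k)}$, peel off its $k-1$ largest points to obtain a candidate $\hat Z'$ for $Y^{(1)}$, and compare the ordered distance formulas (Appendix~\ref{app:distform}). The right-most $k-1$ entries $\ell+2,\ldots,\ell+k$ of $Y^{(k)}$ are already at the extreme right of the chain, so in any optimal matching they are paired with the top $k-1$ entries of $\hat Z$; removing exactly those entries from $\hat Z$ should make the configuration shorter without worsening its matching cost against $Y^{(1)}$.

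\textbf{Construction.} Fix $k\ge 2$ and let $\hat Z=\{\hat z_1<\hat z_2<\cdots<\hat z_{j+k}\}\in\mathcal{V}_{j+k,K}$ be a minimizer for $d_{j+k}(Y^{(k)},\mathcal{V}_{j+k,K})$. Define
\begin{equation}
\hat Z' := \{\hat z_1,\hat z_2,\ldots,\hat z_{j+1}\}.
\end{equation}
Clearly $|\hat Z'|=j+1$. Moreover, removing the top element of any finite subset of $\Z$ either (i) eliminates the right-most cluster, when that cluster is a singleton, or (ii) merely shrinks it by one point, leaving the cluster count unchanged. Iterating this $k-1$ times yields $cl(\hat Z')\le cl(\hat Z)\le K$, so $\hat Z'\in\mathcal{V}_{j+1,K}$.

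\textbf{Distance comparison and conclusion.} Write $Y=\{y_1<\cdots<y_j\}$. By the ordered-matching distance formula of Appendix~\ref{app:distform},
\begin{equation}
d_{j+k}(Y^{(k)},\hat Z)=\sum_{i=1}^{j}|y_i-\hat z_i|+\sum_{i=1}^{k}|\ell+i-\hat z_{j+i}|,
\end{equation}
\begin{equation}
d_{j+1}(Y^{(1)},\hat Z')=\sum_{i=1}^{j}|y_i-\hat z_i|+|\ell+1-\hat z_{j+1}|.
\end{equation}
Since each summand in $\sum_{i=1}^{k}|\ell+i-\hat z_{j+i}|$ is non-negative, that sum is $\ge |\ell+1-\hat z_{j+1}|$, and so
\begin{equation}
d_{j+1}(Y^{(1)},\mathcal{V}_{j+1,K})\le d_{j+1}(Y^{(1)},\hat Z')\le d_{j+k}(Y^{(k)},\hat Z)=d_{j+k}(Y^{(k)},\mathcal{V}_{j+k,K}),
\end{equation}
which is the desired inequality. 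The only nontrivial point is the cluster-count step (ii) above, which is immediate once one observes that the peeling proceeds strictly from the top; no other step involves more than the ordered distance identity already established in Appendix~\ref{app:distform}.
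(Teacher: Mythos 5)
Your proof is correct and follows essentially the same route as the paper's: the paper also truncates an optimal $K$-cluster target configuration from the right (one element at a time, passing from $k+1$ to $k$) and observes that this can increase neither the cluster count nor the distance. Your version, which removes all $k-1$ extra points at once and spells out the comparison via the ordered-matching distance formula of Appendix~\ref{app:distform}, is in fact more explicit than the paper's one-line appeal to geometric evidence.
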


\begin{proof}
	By Lemma \ref{lemma:closestconfig}, we have to consider $d_{|Y|+k}(Y^{(k)}, \mathcal{V}_{|Y|+k,K})$, $1\le k \le L-|Y|$. It is geometrically quite evident that these numbers are non-decreasing in $k$: If one can move $Y^{(k+1)}$ to a set $Z$ with no more than $K$ clusters in $s$ steps, then in doing so one has also moved $Y^{(k)}$ (appearing as a ``restriction'' of $Y^{(k+1)}$) to a $K$-cluster set ($Z$ without its right-most element) in at most $s$ steps.
\end{proof}

\end{document}